\def\ftype@algorithm{4}
\newtheorem{theorem}{Theorem}
\newtheorem{lemma}{Lemma}
\newtheorem{remark}{Remark}
\newtheorem{claim}{Claim}
\begin{document}

\title{Polynomial time constructive decision algorithm for multivariable quantum signal processing}

\author{Yuki Ito}
\email{yuki.itoh.osaka@gmail.com}
\affiliation{%
Graduate School of Engineering Science, The University of Osaka,
1-3 Machikaneyama, Toyonaka, Osaka 560-8531, Japan
}

\author{Hitomi Mori}
\email{hmori.academic@gmail.com}
\affiliation{%
Graduate School of Engineering Science, The University of Osaka,
1-3 Machikaneyama, Toyonaka, Osaka 560-8531, Japan
}

\author{Kazuki Sakamoto}
\email{kazuki.sakamoto.osaka@gmail.com}
\affiliation{%
Graduate School of Engineering Science, The University of Osaka,
1-3 Machikaneyama, Toyonaka, Osaka 560-8531, Japan
}

\author{Keisuke Fujii}
\email{fujii.keisuke.es@osaka-u.ac.jp}
\affiliation{%
Graduate School of Engineering Science, The University of Osaka,
1-3 Machikaneyama, Toyonaka, Osaka 560-8531, Japan
}
\affiliation{%
Center for Quantum Information and Quantum Biology, The University of Osaka, 
1-2 Machikaneyama, Osaka 560-0043, Japan.
}
\affiliation{RIKEN Center for Quantum Computing (RQC), Hirosawa 2-1, Wako, Saitama 351-0198, Japan}

\maketitle

\begin{abstract}
Quantum signal processing (QSP) and quantum singular value transformation (QSVT) have provided 
a unified framework for understanding many quantum algorithms, including factorization, matrix inversion, and Hamiltonian simulation.
As a multivariable version of QSP,
multivariable quantum signal processing (M-QSP) is proposed.
M-QSP interleaves signal operators corresponding to each variable with signal processing operators, 
which provides an efficient means to perform multivariable polynomial transformations.
However, the necessary and sufficient condition for what types of polynomials can be constructed by M-QSP is unknown. 
In this paper, we propose a classical algorithm to determine whether a given pair of multivariable Laurent polynomials can be implemented by M-QSP, which returns True or False.
As one of the most important properties of this algorithm, 
its returning True is the necessary and sufficient condition.
The proposed classical algorithm runs in polynomial time in the number of variables and signal operators.
Our algorithm also provides 
a constructive method to select the necessary parameters for implementing M-QSP.
These findings offer valuable insights for identifying practical applications of M-QSP.
\end{abstract}

\section{Introduction}

Quantum computers are expected to offer advantages over classical computers
in solving many important problems,
such as factoring~\cite{shor_algorithms_1994}, 
matrix inversion~\cite{harrow_quantum_2009, costa_optimal_2022}, 
and Hamiltonian simulation~\cite{feynman1982simulating, lloyd_universal_1996, low_optimal_2017, gilyen_quantum_2019}.
To improve efficiency in solving these problems, numerous quantum algorithms have been developed. 
In particular, recently proposed quantum signal processing (QSP)~\cite{low_methodology_2016, low_optimal_2017} 
and quantum singular value transformation (QSVT)~\cite{gilyen_quantum_2019} have brought about significant progress.
These include the development of a comprehensive theoretical framework that enhances the understanding of many existing quantum algorithms~\cite{martyn_grand_2021} and improves their performance. 
Moreover, the high extensibility and efficiency of QSP and QSVT have facilitated the creation of new quantum algorithms~\cite{gilyen_quantum_2019, rall2020quantum, lin2020optimal, dong2022ground, rall2021faster, gilyen2022improved, mcardle2022quantum}.

Specifically, QSP performs polynomial transformations of a fixed operator, called a signal operator, interleaving it with controllable parameterized rotation gates, called signal processing operators.
Both operators are represented by $2 \times 2$ unitary matrices, and QSP generates $2 \times 2$ unitary matrices, whose elements correspond to polynomials of a signal variable.
To extend the applicability of QSP from two-dimensional unitary matrices to higher-dimensional ones, which are commonly encountered in most quantum algorithms, a technique known as qubitization was introduced~\cite{low_hamiltonian_2019}.
Qubitization decomposes a higher-dimensional unitary matrix into a direct sum of one- or two-dimensional unitary matrices. 
This decomposition enables QSP to be applied to each of the decomposed unitary matrices.
Combining QSP with qubitization led to QSVT, 
which performs efficient polynomial transformations of the singular values of any matrix embedded in a unitary matrix~\cite{chakraborty2018power, gilyen_quantum_2019, martyn_grand_2021}.
By varying the polynomial utilized in QSVT, QSVT provides systematic solutions to various problems including factoring, matrix inversion, and Hamiltonian simulation~\cite{gilyen_quantum_2019, martyn_grand_2021, lin2020optimal}.

Extending such polynomial transformations to multiple variables,
multivariable quantum signal processing (M-QSP), a multivariable version of QSP, has been proposed~\cite{rossi_multivariable_2022}.
M-QSP achieves multivariable polynomial transformations of several types of signal operators, 
corresponding to each variable, interleaving them with signal processing operators.
Similarly to the single-variable case, these operators are represented by $2 \times 2$ unitary matrices, and 
M-QSP generates $2 \times 2$ unitary matrices consisting of a pair of two multivariable Laurent polynomials $(P,Q)$,
i.e., polynomials that allow both negative and non-negative exponents.
As applications of M-QSP-CDA,
the efficient preparation of initial states required for multivariate Monte Carlo simulations~\cite{mori_efficient_2024} and Coulomb potential calculation~\cite{gomes_multivariable_2024} have been proposed.

To develop the applications of M-QSP,
it is important to clarify which multivariable Laurent polynomials can be constructed by M-QSP;
this is because
the complete characterization of the polynomials that QSP can realize
has enabled a wide range of applications~\cite{low_methodology_2016, gilyen_quantum_2019, martyn_grand_2021}.
Unfortunately, as far as we know, there are no characterizations of such multivariable Laurent polynomials.
Instead of characterizing such polynomials,
the original paper of M-QSP~\cite{rossi_multivariable_2022} presented a characterization of 
the pair of two multivariable Laurent polynomials $(P,Q)$ that can be constructed by M-QSP.
However, subsequent research has proven that its characterization is erroneous~\cite{mori_comment_2024, nemeth_variants_2023}.
To date, the necessary conditions for the pair of multivariable polynomials $(P,Q)$ that can be realized by M-QSP~\cite{mori_comment_2024, nemeth_variants_2023} and the sufficient conditions for a SU(3)-variant of M-QSP~\cite{laneve_multivariate_2025} are known.
However, the necessary and sufficient condition remains unclear.

In this paper, we propose \textit{M-QSP constructivity decision algorithm} (M-QSP-CDA),
a classical algorithm to determine whether a given pair of multivariable Laurent polynomials $(P,Q)$ can be constructed using M-QSP with $n$ signal operators.
That is, M-QSP-CDA provides the necessary and sufficient condition
in the sense that 
if $(P,Q)$ can be constructed by M-QSP with $n$ signal operators, it returns True; 
otherwise, it returns False.

More precisely, this algorithm works as follows: 
first, it checks whether the degree of a given pair of multivariable Laurent polynomials $(P,Q)$ can be reduced by multiplying it by the inverse of a signal processing operator followed by the inverse of a signal operator.
If the degree of $(P,Q)$ cannot be reduced, M-QSP-CDA terminates the process and returns False.
M-QSP-CDA iterates this procedure up to $n$ times, 
or until the degree of the pair of polynomials is reduced to 0.
Finally, it verifies whether the pair of polynomials that is provided after the repetition
can be constructed by M-QSP, which can be decided efficiently from the definition of M-QSP.
If the pair can be constructed by M-QSP with 0 signal operators, 
M-QSP-CDA returns True; 
otherwise, it returns False.
From its construction, whenever this algorithm returns True, such a pair of Laurent polynomials can be constructed by M-QSP.
Thus, this serves as a sufficient condition.
However, 
it is highly nontrivial that this algorithm returning True implies a necessary condition.
This is because there is the possibility that the pair can be implemented by M-QSP 
using the number of signal operators that exceeds the total sum of its degrees of each variable.
To show the necessity,
we show a lemma that
if a given pair of multivariable Laurent polynomials can be constructed by M-QSP,
the pair can be realized with the number of signal operators 
equal to the sum of its degrees of each variable.

The proposed M-QSP-CDA possesses two properties.
First, it runs in polynomial (classical) time in the number of variables and signal operators.
This property theoretically ensures that we can check the proposed necessary and sufficient condition with classical computers efficiently.
In contrast, with finite precision, it is possible that the proposed algorithm does not work well.
The algorithm performs repeated matrix computations, and numerical error may accumulate.
This implies that,
as in Ref.~\cite{haah_product_2019}, 
our algorithm does not guarantee numerical stability in practice.
Nevertheless, theoretically, the algorithm ensures that we can verify, using a finite procedure, whether a given pair of Laurent polynomials can be constructed using M-QSP.
Second, the proposed algorithm provides a constructive method to generate the sequence of signal operators and signal processing operators that implements a given pair of multivariable Laurent polynomials 
if the pair can be constructed by M-QSP.
This fact provides a practical advantage if we need to construct such a pair of Laurent polynomials by M-QSP.
As an application, we also confirm that
the specific pair of multivariable Laurent polynomials in Ref.~\cite{nemeth_variants_2023},
which is provided as a counterexample to the original paper of M-QSP~\cite{rossi_multivariable_2022},
cannot be constructed using M-QSP with an arbitrary number of signal operators.
This example complements the statement of the previous work~\cite{nemeth_variants_2023}.

These results are of great importance 
to specify a more concrete necessary and sufficient condition 
for the characterization of the pair of Laurent polynomials that M-QSP can implement.
Our findings also provide an insight into
seeking practical applications of M-QSP.

The rest of the paper is organized as follows.
In Sec.~\ref{sec:preliminary}, we introduce the definition of Laurent polynomials and M-QSP.
In Sec.~\ref{sec:decision-algorithm}, we describe the proposed algorithm M-QSP-CDA.
Then, we prove that M-QSP-CDA provides the necessary and sufficient condition for the M-QSP constructivity.
Additionally, we explain other properties of M-QSP-CDA.
In Sec.~\ref{sec:example}, as an application of M-QSP-CDA, 
we generalize the result obtained in Ref.~\cite{nemeth_variants_2023}.
Sec.~\ref{sec:conclusion} is devoted to the conclusion and discussion.

\section{Preliminary} \label{sec:preliminary}

In this section, we introduce Laurent polynomials and M-QSP with fixing our notations.
First, let us define Laurent polynomials. 
Given variables $a_1, \dots, a_m$, we say a polynomial $P$ is an $m$-variable Laurent polynomial if
\begin{equation}
    P(a_1, \dots, a_m) =
    \sum_{k_1=-l_1}^{l_1} \dots \sum_{k_m=-l_m}^{l_m} c_{k_1, \dots, k_m} a_1^{k_1} \dots a_m^{k_m}
\end{equation}
holds, where $l_1, \dots, l_m \in \mathbb{Z}_{\ge 0}, \ 
c_{k_1, \dots, k_m} \in \mathbb{C} \ (\lvert k_1 \rvert \le l_1, \dots, \lvert k_m \rvert \le l_m)$.
When the context is clear, 
we write $P(\mathbf{a})$ or simply $P$ instead of $P(a_1, \dots, a_m)$, 
where $\mathbf{a} = (a_1, \dots, a_m)$.
Then, we define the degree of a Laurent polynomial as follows.
For an $m$-variable Laurent polynomial $P \in \mathbb{C}[a_1, a_1^{-1}, \dots, a_m, a_m^{-1}]$, 
if $P$ can be represented as
\begin{equation}
    \begin{split}
        &\quad P(a_1, \dots, a_m) \\
        &=\sum_{k=-l}^{l} P_{a_j^k}(a_1, \dots, a_{j-1}, a_{j+1}, \dots, a_m) a_j^k
        \quad (l \in \mathbb{Z}_{\ge 0}),
    \end{split}
\end{equation}
for $j \in \{1, \dots, m\}$,
then the degree of $P$ with respect to $a_j$, denoted as $\mathrm{deg}_{a_j} P$, is defined as the maximum number $\lvert k \rvert$ satisfying $P_{a_j^k} \neq 0$. 
For $P=0$, we simply define $\mathrm{deg}_{a_j} P=0$ for $j \in \{1, \dots, m\}$.
Also, let $\mathrm{deg} \, P$ denote the sum of the degrees of each variable of $P$, 
i.e., $\mathrm{deg} \, P$ satisfies $\mathrm{deg} \, P = \mathrm{deg}_{a_1} P+\dots+\mathrm{deg}_{a_m} P$.

Next, we explain M-QSP, starting with a description of QSP.
QSP is a technique to perform polynomial transformations of the upper-left element of a signal operator 
\begin{equation}
    \begin{pmatrix}
        x &i \sqrt{1-x^2}\\
        i \sqrt{1-x^2} &x
    \end{pmatrix},
\end{equation}
where $x\in[-1,1]$~\cite{low_methodology_2016, low_optimal_2017}.
That is, QSP consists of signal operators and a carefully chosen sequence of Pauli $Z$ rotations, providing the unitary
\begin{equation}
    \begin{pmatrix}
        P(x) &i Q(x) \sqrt{1-x^2} \\
        i Q^*(x) \sqrt{1-x^2} &P^*(x)
    \end{pmatrix}
\end{equation}
with the upper left elements being transformed as $x \mapsto P(x)$.
When lifting QSP to M-QSP, we change the view of the signal variables from the $x$-picture to the Laurent picture. 
That is, we transform the variables of the signal operators as follows:
\begin{equation}
    x \  \left( x \in [-1,1]  \right) \mapsto 
    \frac{a+a^{-1}}{2} \  \left( a \in \mathbb{T}  \right),
\end{equation}
where $\mathbb{T}$ is the set of complex numbers of modulus one.
In this picture, the polynomials constructed by M-QSP are defined as Laurent polynomials.
Then, we define the $m$-variable M-QSP for a positive integer $m$, based on Refs.~\cite{rossi_multivariable_2022, rossi_modular_2025}.
For $m$-variable Laurent polynomials $P, Q \in \mathbb{C}[a_1, a_1^{-1}, \dots, a_m, a_m^{-1}]$, 
we say that $(P, Q)$ can be constructed by $m$-variable M-QSP in $n$ steps,
where $n$ is a positive integer,
if $(P, Q)$ can be represented as
\begin{equation} \label{eq:definition-M-QSP}
    \begin{pmatrix} 
        P(\mathbf{a}) &Q(\mathbf{a}) \\ 
        -(Q(\mathbf{a}))^* &(P(\mathbf{a}))^* 
    \end{pmatrix}
    = e^{i \phi_0 \sigma_z} \prod_{k=1}^n A(a_{s_k}) e^{i \phi_k \sigma_z}
\end{equation}
where $\sigma_z$ is the Pauli $Z$ operator,
$(\phi_0, \phi_1, \dots, \phi_n) \in \mathbb{R}^{n+1}$, 
$(s_1,\dots, s_n) \in \{1, \dots, m\}^n$ and
the signal operators
\begin{equation}
    A(a_j) \coloneqq
    \begin{pmatrix}
        \frac{a_j+a_j^{-1}}{2} &\frac{a_j-a_j^{-1}}{2} \\
        \frac{a_j-a_j^{-1}}{2} &\frac{a_j+a_j^{-1}}{2}
    \end{pmatrix}
    \quad \left( j \in \{1, \dots, m\} \right)
\end{equation}
are the Pauli $X$ rotations.
Moreover, we state that $(P, Q)$ can be constructed by $m$-variable M-QSP in 0 step,
if $(P, Q)$ satisfies
\begin{equation} \label{eq:deg0}
    \begin{pmatrix} 
        P(\mathbf{a}) &Q(\mathbf{a}) \\ 
        -(Q(\mathbf{a}))^* &(P(\mathbf{a}))^* 
    \end{pmatrix}
    = e^{i \phi_0 \sigma_z},
\end{equation}
which means $P\in\mathbb{T}$ and $Q=0$.
The number of steps $n$ means the number of signal operators in the sequence.
We call $(\phi_0, \phi_1, \dots, \phi_n) \in \mathbb{R}^{n+1}$ angle parameters 
and $(s_1,\dots, s_n) \in \{1, \dots, m\}^n$ index parameters.

\section{M-QSP Constructivity Decision Algorithm (M-QSP-CDA)} \label{sec:decision-algorithm} 

We propose 
a classical algorithm to determine 
in polynomial time 
whether a given pair of $m$-variable Laurent polynomials $(P, Q)$ can be constructed by $m$-variable M-QSP in $n$ steps.
This algorithm is called M-QSP constructivity decision algorithm (M-QSP-CDA).
The procedure of M-QSP-CDA is summarized in the following Algorithm~\ref{alg1}.
Furthermore, we illustrate the process of M-QSP-CDA in Fig.~\ref{fig:decide-func}.

\begin{algorithm*} [t]
    \caption{M-QSP-CDA}
    \label{alg1}
    \begin{algorithmic}[1]
    \State \textbf{Input:} A pair of $m$-variable Laurent polynomials $(P,Q)$, and the number of steps $n$.
    \State \textbf{Output:} The boolean \textsc{M-QSP-CDA}$(P,Q,n)$.
    \State
    \Function {M-QSP-CDA}{$P,Q,n$}
        \If{$n=0$}
            \If{$P \in \mathbb{T}$ and $Q=0$}
                \State \Return True
            \EndIf
        \Else
            \For{$j=1,\dots,m$}
                \State $d_{j} \gets \mathrm{deg}_{a_j} P$
            \EndFor
            \If{$d_{1}+\dots+d_{m} \le n-2$}
                \State \Return \textsc{M-QSP-CDA}$(P,Q,n-2)$
            \ElsIf{$d_{1}+\dots+d_{m} = n$}
                \For{$j=1,\dots,m$}
                    \If{$\exists \varphi_j \in \mathbb{R} \  s.t. \ P_{a_j^{d_{j}}}=e^{2i\varphi_j} Q_{a_j^{d_{j}}}$}
                        \State $P_j \gets e^{-i\varphi_j}\frac{a_j+a_j^{-1}}{2} P - e^{i\varphi_j}\frac{a_j-a_j^{-1}}{2} Q$
                        \State $Q_j \gets e^{i\varphi_j}\frac{a_j+a_j^{-1}}{2} Q - e^{-i\varphi_j}\frac{a_j-a_j^{-1}}{2} P$
                        \State \Return 
                        \textsc{M-QSP-CDA}$(P_j,Q_j,n-1)$
                    \EndIf
                \EndFor
            \EndIf
        \EndIf
        \State \Return False
    \EndFunction
    \end{algorithmic}
\end{algorithm*}

\begin{figure*}[t]
  \begin{center}
    \includegraphics[width=\linewidth]{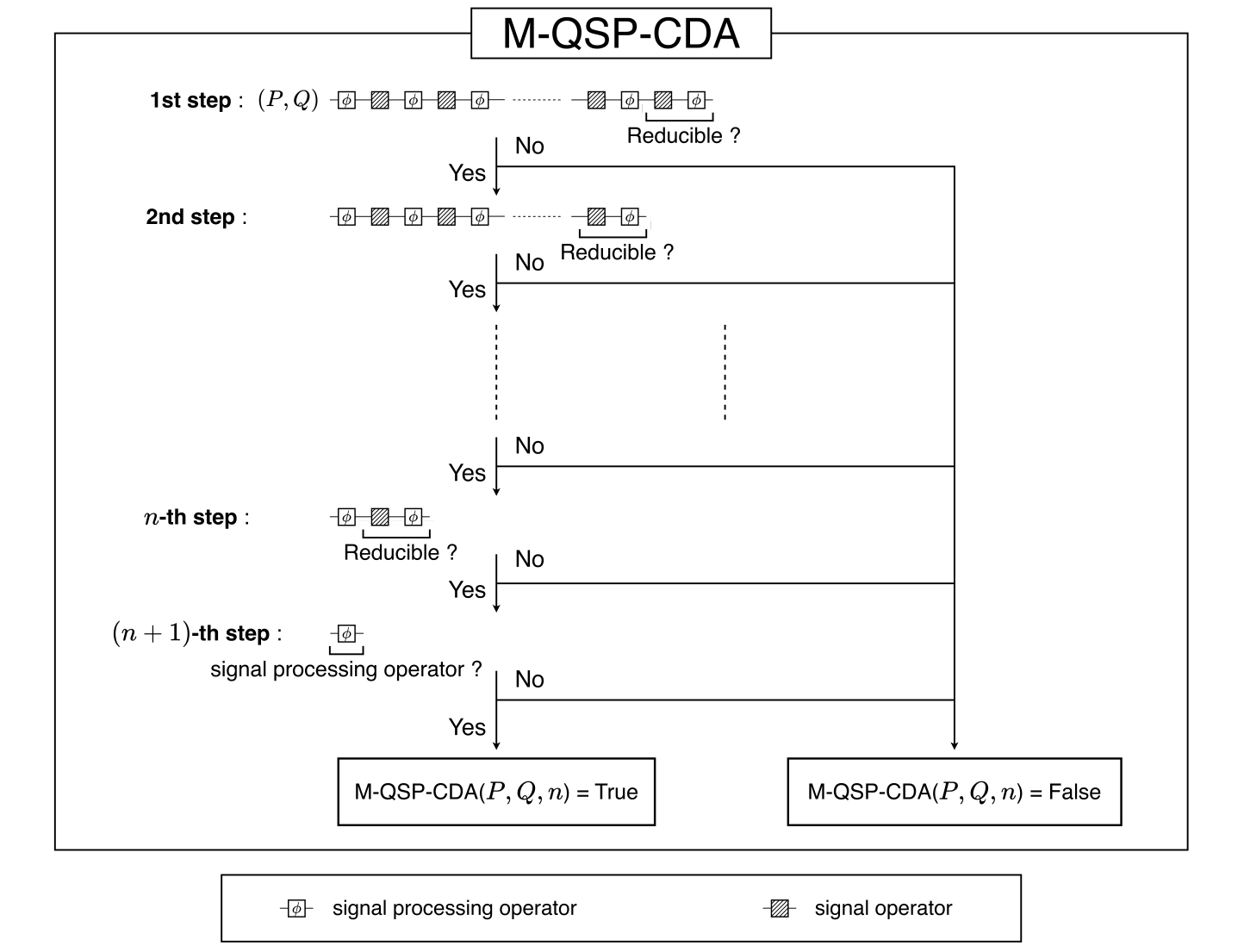}
    \caption{An overview of how M-QSP-CDA shown in Algorithm~\ref{alg1} works.
    M-QSP-CDA recursively verifies if the degree of a given pair of $m$-variable Laurent polynomials can be reduced by multiplying it by the inverse of a signal processing operator followed by the inverse of a signal operator.
    M-QSP-CDA iterates this verification up to $n$ times, or until the degree of the pair of polynomials is reduced to 0.
    Finally, it checks whether the pair of polynomials that is provided after the iteration corresponds to a signal processing operator, based on the definition of $m$-variable M-QSP in 0 step.
    If a given pair of $m$-variable Laurent polynomials $(P,Q)$ passes all verification by M-QSP-CDA,
    it returns True.
    Otherwise, it returns False.}
    \label{fig:decide-func}
  \end{center}
\end{figure*}

\subsection{M-QSP-CDA and the Necessary and Sufficient Condition for the M-QSP Constructivity} 

As one of the most important properties of M-QSP-CDA,
it provides the necessary and sufficient condition for the M-QSP constructivity, as shown in the following Theorem~\ref{thm:m-qsp}.

\begin{theorem} \label{thm:m-qsp}
Given $m$-variable Laurent polynomials $P, Q \in \mathbb{C}[a_1, a_1^{-1}, \dots, a_m, a_m^{-1}]$ 
and a non-negative integer $n$, 
$(P, Q)$ can be constructed by $m$-variable M-QSP in $n$ steps if and only if 
\begin{equation}
    \textsc{M-QSP-CDA}(P,Q,n)= \mathrm{True},
\end{equation}
where the function \textsc{M-QSP-CDA} is defined in Algorithm~\ref{alg1}.
\end{theorem}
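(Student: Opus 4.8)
The plan is to prove the two directions separately, by induction, with the implication ``\textsc{M-QSP-CDA} returns \textsc{True} $\Rightarrow$ constructible'' being the routine one. For this direction I would induct along the recursion (legitimate since $n$ strictly decreases at every recursive call): the base case $n=0$ is exactly~\eqref{eq:deg0}; in the branch recursing with $n-1$, the update defining $(P_j,Q_j)$ is precisely right-multiplication of the matrix in~\eqref{eq:definition-M-QSP} by $e^{-i\varphi_j\sigma_z}A(a_j)^{-1}$ (using the polynomial identity $A(a_j)A(a_j^{-1})=I$), so realizability of $(P_j,Q_j)$ in $n-1$ steps gives realizability of $(P,Q)$ in $n$ steps by appending the factor $A(a_j)e^{i\varphi_j\sigma_z}$; and in the branch recursing with $n-2$, the identity $A(a_j)\,e^{i(\pi/2)\sigma_z}\,A(a_j)\,e^{-i(\pi/2)\sigma_z}=I$ pads any $(n-2)$-step sequence into an $n$-step sequence realizing the same $(P,Q)$.

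The converse, $(P,Q)$ constructible in $n$ steps $\Rightarrow$ \textsc{M-QSP-CDA}$(P,Q,n)=\mathrm{True}$, is the substance; I would prove it by induction on $n$, resting on three structural facts about a pair $(P,Q)$ produced by an $n$-step M-QSP whose index string contains $n_j$ copies of $j$. (i) \emph{Degree and parity}: expanding the product in~\eqref{eq:definition-M-QSP} shows every monomial of $P$ and of $Q$ has $a_j$-exponent of modulus at most $n_j$ and congruent to $n_j\bmod 2$; setting all variables to $1$ gives $P(1,\dots,1)\in\mathbb T$, hence $P\not\equiv 0$, so $\deg_{a_j}P\le n_j$, $\deg_{a_j}P\equiv n_j\pmod 2$, $D:=\sum_j\deg_{a_j}P\le n$, $D\equiv n\pmod 2$, with $D=n$ forcing $\deg_{a_j}P=n_j$ for all $j$. (ii) \emph{Palindromy}: from $\sigma_z A(a)\sigma_z=A(a^{-1})$ one gets $P(a_1^{-1},\dots,a_m^{-1})=P(a_1,\dots,a_m)$ and $Q(a_1^{-1},\dots,a_m^{-1})=-Q(a_1,\dots,a_m)$. (iii) \emph{Conjugation identity}: $A(a_j)^{-1}=-e^{i(\pi/2)\sigma_z}A(a_j)e^{i(\pi/2)\sigma_z}$, so right-multiplying an $N$-step construction by $e^{-i\varphi\sigma_z}A(a_j)^{-1}$ produces an $(N+1)$-step construction. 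Fact (i) already shows the algorithm is right to return \textsc{False} when $D>n$ or $D=n-1$, so for a constructible pair only $D=n$ and $D\le n-2$ occur.

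Both surviving cases reduce to a single key lemma: \emph{if $(P,Q)$ is M-QSP in $N$ steps and $\sum_j\deg_{a_j}P\le N-2$, then $(P,Q)$ is M-QSP in $N-2$ steps.} If $D\le n-2$, the lemma gives an $(n-2)$-step construction directly, and the inductive hypothesis returns \textsc{True}. If $D=n$, one reads off the top $a_{s_n}$-coefficients from $P=e^{i\phi_n}(\tfrac{a+a^{-1}}2P'+\tfrac{a-a^{-1}}2Q')$, $Q=e^{-i\phi_n}(\tfrac{a-a^{-1}}2P'+\tfrac{a+a^{-1}}2Q')$ (with $a=a_{s_n}$ and $M=M'A(a_{s_n})e^{i\phi_n\sigma_z}$), obtaining $P_{a_{s_n}^{n_{s_n}}}=e^{2i\phi_n}Q_{a_{s_n}^{n_{s_n}}}\neq 0$, so the leading-coefficient test succeeds for at least one index; and for \emph{any} index $j$ and $\varphi$ that make it succeed, fact (iii) exhibits $(P_j,Q_j)$ as an $(n+1)$-step M-QSP, while the test relation together with parity (i) and palindromy (ii) forces the $a_j^{\pm(n_j+1)}$-coefficients of $P_j$ to vanish and hence, by parity again, $\deg_{a_j}P_j\le n_j-1$; thus $\sum_i\deg_{a_i}P_j\le n-1\le(n+1)-2$, the key lemma applied at step count $n+1$ makes $(P_j,Q_j)$ an $(n-1)$-step M-QSP, and the inductive hypothesis returns \textsc{True}.

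The key lemma --- excising two signal operators without changing $(P,Q)$ --- is the heart of the proof and the step I expect to be the main obstacle, precisely because a priori a pair might require strictly more signal operators than the sum of its degrees, a phenomenon absent for ordinary QSP. I would attack it by induction on the step count $N$: peel off the last factor to get an $(N-1)$-step pair $(P',Q')$, and compare $D'=\sum_i\deg_{a_i}P'$ with $D=\sum_i\deg_{a_i}P$ using the explicit peeling formulas and the parity fact that the number of occurrences of each variable exceeds its degree in $P$ by an even amount; if $D'\le N-3$ the inductive hypothesis finishes at once, while the case where $D'$ stays maximal must be pushed further by exploiting that maximality pins down the leading coefficients of $(P',Q')$, which --- again via the parity count --- isolates a variable whose two redundant signal operators can be commuted through the intervening (non-$j$) factors and merged. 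Carrying out this bookkeeping so that the merged sequence is again literally of the form~\eqref{eq:definition-M-QSP} is the delicate point; everything else is routine once the key lemma is in hand.
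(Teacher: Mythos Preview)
Your overall architecture is the paper's: induction on $n$, the $(\Leftarrow)$ direction handled by appending or padding, and the $(\Rightarrow)$ direction split into $D\le n-2$ and $D=n$, both reduced to a single ``key lemma'' (exactly the paper's Lemma~\ref{lem:m-n-2}). Your structural facts (i)--(iii) are the paper's lemmas on parity, palindromy, and the $A(a)^{-1}$ identity, and your treatment of both cases once the key lemma is in hand matches the paper's proof of Theorem~\ref{thm:m-qsp} essentially line for line, including the crucial point that the algorithm may choose a $j\ne s_n$ and one must still land in an $(n-1)$-step construction via the key lemma applied at step count $n+1$.

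The gap is in your sketch of the key lemma. You correctly isolate the boundary situation $D=N-2$, $D'=N-1$, observe via parity that exactly one variable $j$ carries two ``redundant'' copies, and propose to commute these through the intervening factors and merge. But $A(a_j)$ does \emph{not} commute with $e^{i\phi_k\sigma_z}$, and conjugating $A(a_j)=e^{i\theta\sigma_x}$ by a generic $z$-rotation gives $e^{i\theta(\cos 2\phi_k\,\sigma_x+\sin 2\phi_k\,\sigma_y)}$, which is \emph{not} of the form $A(a_j^{\pm 1})$ unless $\phi_k\in\tfrac{\pi}{2}\mathbb Z$. So the merged sequence is not an M-QSP sequence unless you first prove the intervening angles are quantized. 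This is precisely the paper's Lemma~\ref{lem:pi/2-pi}: for the subproduct bounded by two consecutive occurrences of $A(a_{s_l})$ with maximal intermediate degree and a degree drop at the end, a separate induction on the subproduct length (tracking top-order coefficients and showing they are forced to be purely imaginary) proves every intervening $\phi_k$ lies in $\big(\tfrac{\pi}{2}+\pi\mathbb Z\big)\cup\pi\mathbb Z$; only then does commute-and-merge go through. Your phrase ``maximality pins down the leading coefficients'' is the right opening move of that argument, but the angle-quantization it implies is a genuine structural result rather than bookkeeping, and without it the assertion that the redundant operators ``can be commuted through the intervening (non-$j$) factors'' is false as stated.
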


\begin{proof}
    See Appendix~\ref{appendix:thm-m-qsp}.
\end{proof}

While detailed proofs are given in the Appendix~\ref{appendix:thm-m-qsp},  in the following, let us first explain how M-QSP-CDA shown in Algorithm~\ref{alg1} works in order to know that M-QSP-CDA gives the necessary and sufficient condition
and provide a sketch of proof later.

The algorithm adopts a recursive structure.
In detail, according to Algorithm~\ref{alg1},
for a given pair of $m$-variable Laurent polynomials,
M-QSP-CDA recursively checks if the degree of the pair can be reduced by multiplying it by the inverse of a signal processing operator followed by the inverse of a signal operator.
More precisely,
for a given pair of $m$-variable Laurent polynomials $(P, Q)$,
it recursively verifies whether there exists an angle $\varphi_j \in \mathbb{R}$ such that 
$P_{a_j^{d_{j}}}=e^{2i\varphi_j} Q_{a_j^{d_{j}}}$ for $j \in \{1, \dots m\}$ in order,
where the degree $d_j$ is equal to $\mathrm{deg}_{a_j} P$, and
$P_{a_j^{d_j}}, Q_{a_j^{d_j}}$ are the coefficients of $a_j^{d_j}$ in $P$ and $Q$ respectively for $j \in \{1, \dots, m\}$.
The relation $P_{a_j^{d_{j}}}=e^{2i\varphi_j} Q_{a_j^{d_{j}}}$ implies
there exists $(P_j, Q_j)$ such that 
$\mathrm{deg}_{a_j} P_j=\mathrm{deg}_{a_j} P-1$, $\mathrm{deg}_{a_j} Q_j=\mathrm{deg}_{a_j} Q-1$, and the decomposition
\begin{equation} \label{eq:relation-(P_j,Q_j)}
    \begin{split}
        &\quad 
        \begin{pmatrix} 
            P_j(\mathbf{a}) &Q_j(\mathbf{a}) \\ 
            -(Q_j(\mathbf{a}))^* &(P_j(\mathbf{a}))^* 
        \end{pmatrix}  \\
        &= 
        \begin{pmatrix} 
            P(\mathbf{a}) &Q(\mathbf{a}) \\ 
            -(Q(\mathbf{a}))^* &(P(\mathbf{a}))^* 
        \end{pmatrix}
        \left(A(a_j)  e^{i \varphi_j \sigma_z} \right)^{-1}.
    \end{split}
\end{equation}
M-QSP-CDA repeats this verification up to $n$ times, or until the degree of the pair of polynomials is reduced to 0,
where $n$ is the given number of steps.
Finally,
M-QSP-CDA examines if
the pair of polynomials $(P^{(0)}, Q^{(0)})$ that is provided after the repetition
meets the condition that
$
\begin{pmatrix}
    P^{(0)}(\mathbf{a}) &Q^{(0)}(\mathbf{a}) \\ 
    -(Q^{(0)}(\mathbf{a}))^* &(P^{(0)}(\mathbf{a}))^*
\end{pmatrix}
$
is a signal processing operator, i.e., $P^{(0)} \in \mathbb{T}$ and $Q^{(0)}=0$.
If a given pair of $m$-variable Laurent polynomials passes the all examination by M-QSP-CDA,
M-QSP-CDA returns True; otherwise, it returns False.

From these procedures,
M-QSP-CDA returning True is sufficient for a given pair to be constructed by $m$-variable M-QSP.
This is because repeatedly inverting the decomposition in Eq.~\eqref{eq:relation-(P_j,Q_j)} recovers $(P, Q)$.
Thus, $\textsc{M-QSP-CDA}(P,Q,n)= \mathrm{True}$ is a sufficient condition for the M-QSP constructivity.

However, it is not trivial to show that $\textsc{M-QSP-CDA}(P,Q,n)= \mathrm{True}$ is also a necessary condition for the constructivity,
from the process of M-QSP-CDA shown in Algorithm~\ref{alg1}.
This is because it is possible that 
a given pair of $m$-variable Laurent polynomials can be implemented by M-QSP 
with much larger steps than the total sum of its degrees of each variable.
To complete the necessity,
we show the following Lemma~\ref{lem:m-n-2}.

\begin{lemma} \label{lem:m-n-2}
    For an integer $n \ge 2$ and given $m$-variable Laurent polynomials $P, Q \in \mathbb{C}[a_1, a_1^{-1}, \dots, a_m, a_m^{-1}]$, 
    it holds that
    \begin{equation} 
        \begin{split}
            &\quad \left\{(P,Q) \middle| 
            \begin{gathered}
                (P,Q) \text{ can be constructed by} \\
                m\text{-variable M-QSP in } (n-2) \text{ steps.}
            \end{gathered}
            \right\} \\
            &=\left\{(P,Q) \middle| 
            \begin{gathered}
                (P,Q) \text{ can be constructed by} \\
                m\text{-variable M-QSP in } n \text{ steps,} \\
                \text{and }
                \mathrm{deg} \, P \le n-2.
            \end{gathered}
             \right\}.
        \end{split}
    \end{equation}
\end{lemma}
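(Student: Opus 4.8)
The plan is to prove the two set inclusions separately. The direction ``constructible in $n-2$ steps $\Rightarrow$ the right-hand set'' is routine padding: given an $(n-2)$-step representation, appending the block $A(a_1)e^{i(\pi/2)\sigma_z}A(a_1)e^{-i(\pi/2)\sigma_z}$ changes nothing, because $A(a_1)$ is generated by the Pauli-$X$ matrix, so $A(a_1)\sigma_zA(a_1)=\sigma_z$ and hence $A(a_1)e^{i(\pi/2)\sigma_z}A(a_1)=e^{i(\pi/2)\sigma_z}$, making the block the identity; this yields an $n$-step representation of the same $(P,Q)$. The degree bound is automatic: in any $k$-step sequence each factor $A(a_j)$ raises $\mathrm{deg}_{a_j}$ by at most one, so $\mathrm{deg}_{a_j}P$ is at most the number of $s_i$ equal to $j$, and summing gives $\mathrm{deg}_{a_1}P+\dots+\mathrm{deg}_{a_m}P\le n-2$. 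The same bookkeeping, together with the observation that applying $A(a_j)$ flips the parity of every $a_j$-exponent of the current entries and leaves the other exponents' parities alone (and a constructible $P$ is never identically zero, as $\lvert P\rvert^2+\lvert Q\rvert^2=1$ on $\mathbb{T}^m$), shows that for any $k$-step pair one has $\mathrm{deg}_{a_1}P+\dots+\mathrm{deg}_{a_m}P\equiv k\pmod 2$; I will use this parity fact below.

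For the reverse inclusion I would induct on $n\ge 2$. In the base case $n=2$ the hypothesis forces $P$ to be a constant; expanding $e^{i\phi_0\sigma_z}A(a_{s_1})e^{i\phi_1\sigma_z}A(a_{s_2})e^{i\phi_2\sigma_z}$ and inspecting the top-degree monomial of the $(1,1)$ entry shows that a constant $P$ requires $s_1=s_2$ and $e^{2i\phi_1}=-1$; substituting back gives $P=e^{i(\phi_0+\phi_1+\phi_2)}\in\mathbb{T}$ and $Q=0$, so $(P,Q)$ is constructible in $0$ steps.

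For the inductive step ($n\ge 3$), let $(P,Q)$ be constructible in $n$ steps with $D:=\mathrm{deg}_{a_1}P+\dots+\mathrm{deg}_{a_m}P\le n-2$. I would fix an $n$-step representation with indices $(s_1,\dots,s_n)$ and angles $(\phi_0,\dots,\phi_n)$, set $j:=s_n$, and delete the trailing factor $A(a_j)e^{i\phi_n\sigma_z}$ to get an $(n-1)$-step pair $(P',Q')$ with $(P,Q)=(P',Q')\bigl(A(a_j)e^{i\phi_n\sigma_z}\bigr)$, i.e.\ the decomposition in Eq.~\eqref{eq:relation-(P_j,Q_j)} with $\varphi_j=\phi_n$. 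Since $A(a_j)^{\pm1}$ and $e^{\pm i\phi_n\sigma_z}$ involve only $a_j$, we have $\mathrm{deg}_{a_i}P'=\mathrm{deg}_{a_i}P$ for $i\ne j$, and comparing the coefficients of $a_j^{\mathrm{deg}_{a_j}P+1}$ on both sides shows that $\mathrm{deg}_{a_j}P'=\mathrm{deg}_{a_j}P+1$ exactly when $P_{a_j^{\mathrm{deg}_{a_j}P}}\ne e^{2i\phi_n}Q_{a_j^{\mathrm{deg}_{a_j}P}}$, and $\mathrm{deg}_{a_j}P'\le\mathrm{deg}_{a_j}P-1$ otherwise. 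If the degree does not go up, the total degree of $(P',Q')$ is at most $D-1\le n-3=(n-1)-2$, so the induction hypothesis makes $(P',Q')$ constructible in $n-3$ steps and appending $A(a_j)e^{i\phi_n\sigma_z}$ gives the desired $(n-2)$-step representation of $(P,Q)$. If the degree does go up, the total degree of $(P',Q')$ is $D+1$, which by the parity fact is either $\le n-3$ (finish as before) or exactly $n-1$; so the only remaining case is that $(P',Q')$ is constructible in $n-1$ steps and saturates its degree budget.

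I expect this last case to be the main obstacle. In it, vanishing of the $a_j^{\mathrm{deg}_{a_j}P+1}$-coefficient of $P$ forces the leading $a_j$-coefficients of $P'$ and $Q'$ to be negatives of one another, i.e.\ $P'_{a_j^{\mathrm{deg}_{a_j}P'}}=e^{2i(\pi/2)}Q'_{a_j^{\mathrm{deg}_{a_j}P'}}$; using $A(a_j)e^{i(\pi/2)\sigma_z}A(a_j)=e^{i(\pi/2)\sigma_z}$ one rewrites $(P,Q)=(P'',Q'')e^{i(\pi/2+\phi_n)\sigma_z}$ with $(P'',Q'')=(P',Q')\bigl(A(a_j)e^{i(\pi/2)\sigma_z}\bigr)^{-1}$, a pair with the same per-variable degrees as $(P,Q)$ and total degree $n-2$. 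What then remains is to show $(P'',Q'')$ is constructible in $n-2$ steps; this is exactly the statement that a pair constructible in $k$ steps with $\mathrm{deg}_{a_1}+\dots+\mathrm{deg}_{a_m}=k$ can be peeled — at \emph{any} variable whose leading $P$- and $Q$-coefficients are phase-aligned — into a pair constructible in $k-1$ steps, which is the necessity content of the ``$d_1+\dots+d_m=n$'' branch of Algorithm~\ref{alg1} used in proving Theorem~\ref{thm:m-qsp}. The clean way to carry this through is a single simultaneous induction proving Lemma~\ref{lem:m-n-2} and this degree-saturated peeling statement together; the technically heaviest point is reconciling the ``delete-the-last-factor'' peeling (which keeps the step count but may raise a degree) with the coefficient-driven peeling of Algorithm~\ref{alg1} (which lowers a degree but a priori might cost extra steps), and that is where I would concentrate the effort.
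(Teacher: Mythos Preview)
Your easy direction and your $n=2$ base case are correct and match the paper. The inductive skeleton for the hard direction is also reasonable, and you have correctly isolated the only nontrivial subcase: $(P',Q')$ constructible in $n-1$ steps with total degree exactly $n-1$, together with the phase alignment $P'_{a_j^{d_j+1}}=e^{2i(\pi/2)}Q'_{a_j^{d_j+1}}$ at $j=s_n$.

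The gap is in how you propose to close that subcase. You reduce it to the ``degree-saturated peeling statement'' $B(k)$: if a pair is constructible in $k$ steps with total degree $k$ and is phase-aligned at some variable, then the Algorithm~\ref{alg1} peel at that variable yields a pair constructible in $k-1$ steps. You then suggest proving Lemma~\ref{lem:m-n-2} (call it $A(n)$) and $B(k)$ by simultaneous induction. But the natural proof of $B(k)$ writes the peeled pair as the original times $e^{i(-\varphi+\pi/2)\sigma_z}A(a_j)e^{-i(\pi/2)\sigma_z}$, giving a $(k+1)$-step representation with total degree $\le k-1$, and then invokes $A(k+1)$ to drop two steps. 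Since your proof of $A(n)$ invokes $B(n-1)$, the dependency is $A(n)\Rightarrow B(n-1)\Rightarrow A(n)$: the simultaneous induction is circular at the same level. Concretely, in your saturated subcase the pair $(P'',Q'')$ you construct satisfies $(P,Q)=(P'',Q'')e^{i(\pi/2+\phi_n)\sigma_z}$, so proving $(P'',Q'')$ is constructible in $n-2$ steps is literally equivalent to the statement you started from; no progress has been made. If instead $s_{n-1}=j$ you can finish by observing $\phi_{n-1}\in\tfrac{\pi}{2}+\pi\mathbb{Z}$, but when $s_{n-1}\ne j$ you are forced to look further back in the sequence, and there is no inductive bookkeeping in your proposal that handles this.

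The paper avoids this circularity entirely: it does not induct on $n$. Given an $n$-step representation with total degree $\le n-2$, it locates the \emph{first} step $l_1$ at which some per-variable degree falls below its running count, then the \emph{last} earlier step $l_0<l_1$ using the same variable $s_{l_1}$. It then proves an auxiliary structural result (Lemma~\ref{lem:pi/2-pi}, established by its own induction on the segment length): under exactly the conditions satisfied by the subsegment $\prod_{k=l_0}^{l_1}A(a_{s_k})e^{i\phi_k\sigma_z}$, all intermediate angles $\phi_{l_0},\dots,\phi_{l_1-1}$ lie in $\bigl(\tfrac{\pi}{2}+\pi\mathbb{Z}\bigr)\cup\pi\mathbb{Z}$, and the subsegment itself is constructible in $l_1-l_0-1$ steps. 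Splicing this shorter subsegment back in gives an $(n-2)$-step representation of $(P,Q)$ directly. This is precisely the ``reconciliation'' you flag as the heaviest point, but carried out as a one-shot structural lemma rather than an induction coupled to Lemma~\ref{lem:m-n-2}; your proposal does not supply an argument of this kind, and without it the proof does not close.
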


\begin{proof}
    See Appendix~\ref{appendix:lem-m-n-2}.
\end{proof}
\noindent
Lemma~\ref{lem:m-n-2} ensures that a given pair of $m$-variable Laurent polynomials implemented by $m$-variable M-QSP can always be constructed with the number of steps equal to the sum of the degrees of each variable.
That is, there are no pairs of Laurent polynomials that cannot be constructed with the same number of steps as the total degree, but can only be constructed by adding more steps. 
Consequently, M-QSP-CDA also provides a necessary condition for the M-QSP constructivity.

Based on this observation, we describe a sketch of the proof for Theorem~\ref{thm:m-qsp}.
Note that the detailed proof is provided in Appendix~\ref{appendix:thm-m-qsp}.
\vspace{1em}
\newline
\textit{Proof sketch of Theorem~\ref{thm:m-qsp}}
\hspace{1em}
We prove by induction on $n \in \mathbb{Z}_{\ge 0}$.
For $n=0$, we can see the definition of the 0-step $m$-variable M-QSP in Eq.~\eqref{eq:deg0} is equivalent to $\textsc{M-QSP-CDA}(P,Q,0)=\text{True}$.

Next, assuming that Theorem~\ref{thm:m-qsp} holds for less than or equal to $n-1(\ge 0)$, we consider the case for $n$.

\noindent
$(\Rightarrow)$
First, assume that $(P, Q)$ can be constructed by $m$-variable M-QSP in $n$ steps.
Let $d_{j} = \mathrm{deg}_{a_j} P$ for $j \in \{1,\dots,m\}$,
and it is obvious that $d_{1}+\dots+d_{m} \le n$.
Next, we show that $\textsc{M-QSP-CDA}(P,Q,n)$ returns True for the cases $d_{1} + \dots + d_{m} < n$ and $d_{1} + \dots + d_{m} = n$, separately.

Suppose that $d_{1} + \dots + d_{m} < n$.
Then, 
due to the parity constraint
$d_1 + \dots + d_m \equiv n \pmod{2}$ (see Lemma~\ref{lem:deg-parity} in Appendix~\ref{appendix:lemma-proof} for detail), 
we have $d_{1} + \dots + d_{m} \le n-2$,
which implies the statement on line 13 of Algorithm~\ref{alg1} holds True.
From line 14 of Algorithm~\ref{alg1}, we obtain $\textsc{M-QSP-CDA}(P,Q,n)=\textsc{M-QSP-CDA}(P,Q,n-2)$.
Furthermore, by applying Lemma~\ref{lem:m-n-2} to $(P,Q)$, 
$(P,Q)$ can be constructed by $m$-variable M-QSP in $(n-2)$ steps.
From the induction hypothesis,
we have $\textsc{M-QSP-CDA}(P,Q,n-2)=\text{True}$,
which leads to $\textsc{M-QSP-CDA}(P,Q,n)=\text{True}$.

Assume that $d_{1} + \dots + d_{m} = n$.
From the assumption that $(P, Q)$ can be constructed by $m$-variable M-QSP in $n$ steps,
focusing on the final step,
we obtain the decomposition of $(P,Q)$ as
\begin{equation} \label{eq:decomposition-n}
    \begin{split}
        &\quad 
        \begin{pmatrix} 
            P(\mathbf{a}) &Q(\mathbf{a}) \\ 
            -(Q(\mathbf{a}))^* &(P(\mathbf{a}))^* 
        \end{pmatrix} \\
        &=
        \begin{pmatrix} 
            P_{s_n}(\mathbf{a}) &Q_{s_n}(\mathbf{a}) \\ 
            -(Q_{s_n}(\mathbf{a}))^* &(P_{s_n}(\mathbf{a}))^* 
        \end{pmatrix} 
        A(a_{s_n}) e^{i \phi_n \sigma_z},
    \end{split}
\end{equation}
where $s_n$ is index of the $n$th signal operator and $\phi_n$ is $n$-th angle parameter of the signal processing operator that implements $(P,Q)$ by $m$-variable M-QSP, appeared in the definition of $m$-variable M-QSP in Eq.~\eqref{eq:definition-M-QSP}.
That is, $s_n$ is the index for the final signal operator and $\phi_n$ is the angle parameter of the final signal processing operator.
Considering the highest-order coefficient of $a_{s_n}$ of Eq.~\eqref{eq:decomposition-n},
we have $P_{a_{s_n}^{d_{s_n}}}=e^{2i\phi_n} Q_{a_{s_n}^{d_{s_n}}}$,
where $P_{a_{s_n}^{d_{s_n}}}, Q_{a_{s_n}^{d_{s_n}}}$ denote the coefficients of $a_{s_n}^{d_{s_n}}$ in $P$ and $Q$ respectively.
Thus,
if the index $j$ for the signal operator on line 16 of Algorithm~\ref{alg1} is equal to $s_n$ at least, 
the statement on line 17 of Algorithm~\ref{alg1} holds True.
In addition, there might be other $j$ than $s_n$ that satisfies 
the condition $P_{a_j^{d_j}}=e^{2i\varphi_j} Q_{a_j^{d_j}}$ on line 17 of Algorithm~\ref{alg1}, 
where $P_{a_j^{d_j}}, Q_{a_j^{d_j}}$ represent the coefficients of $a_j^{d_j}$ in $P$ and $Q$ respectively for $j \in \{1, \dots, m\}$ 
(in the same way as $P_{a_{s_n}^{d_{s_n}}}, Q_{a_{s_n}^{d_{s_n}}}$ case).

Let $s^\prime$ be the smallest index $j \in \{1, \dots, m\}$ among such, and the index used in lines from 18 to 20 of Algorithm~\ref{alg1}.
Then, 
$(P_{s^\prime}, Q_{s^\prime})$ defined by lines 18 and 19 of Algorithm~\ref{alg1} satisfies
$\textsc{M-QSP-CDA}(P,Q,n)=\textsc{M-QSP-CDA}(P_{s^\prime},Q_{s^\prime},n-1)$.
Furthermore, 
since we can obtain $(P_{s^\prime}, Q_{s^\prime})$ by multiplying $(P,Q)$ by $e^{i(-\varphi_{s^\prime}+\pi/2) \sigma_z} A(a_{s^\prime}) e^{-i(\pi/2) \sigma_z}$,
$(P_{s^\prime}, Q_{s^\prime})$ can be constructed by $m$-variable M-QSP in $(n+1)$ steps.
Moreover, since $P_{a_{s^\prime}^{d_{s^\prime}}}=e^{2i\varphi_{s^\prime}} Q_{a_{s^\prime}^{d_{s^\prime}}}$ holds, 
we have $\mathrm{deg} \, P_{s^\prime} < n+1$.
Due to the parity constraint 
$\mathrm{deg} \, P_{s^\prime} \equiv n+1  \pmod{2}$
from Lemma~\ref{lem:deg-parity} in Appendix~\ref{appendix:lemma-proof}, 
we confirm that $\mathrm{deg} \, P_{s^\prime} \le n-1$.
That is, $(P_{s^\prime}, Q_{s^\prime})$ can be constructed by $m$-variable M-QSP in $(n+1)$ steps, 
which satisfies $\mathrm{deg} \, P_{s^\prime} \le n-1$.
Applying Lemma~\ref{lem:m-n-2} to $(P_{s^\prime}, Q_{s^\prime})$,
we obtain that $(P_{s^\prime}, Q_{s^\prime})$ can be constructed by $m$-variable M-QSP in $(n-1)$ steps.
By the induction hypothesis,
we have $\textsc{M-QSP-CDA}(P_{s^\prime}, Q_{s^\prime}, n-1)=\text{True}$,
which implies $\textsc{M-QSP-CDA}(P,Q,n)=\text{True}$.

\noindent
$(\Leftarrow)$
Next, assume that $(P,Q)$ satisfies $\textsc{M-QSP-CDA}(P,Q,n)=\text{True}$.
From the assumption $n-1 \ge 0$, 
$(P,Q)$ satisfies either $\textsc{M-QSP-CDA}(P,Q,n)=\textsc{M-QSP-CDA}(P,Q,n-2)$ or $\textsc{M-QSP-CDA}(P,Q,n)=\textsc{M-QSP-CDA}(P_j, Q_j, n-1)$ for $P_j, Q_j$ defined in lines 18 and 19 of Algorithm~\ref{alg1}.
We show that $(P, Q)$ can be constructed by $m$-variable M-QSP in $n$ steps for each case.

Suppose that $\textsc{M-QSP-CDA}(P,Q,n)=\textsc{M-QSP-CDA}(P,Q,n-2)$.
By the induction hypothesis and $\textsc{M-QSP-CDA}(P,Q,n-2)=\text{True}$, 
$(P, Q)$ can be constructed by $m$-variable M-QSP in $(n-2)$ steps.
Also, since line 13 of Algorithm~\ref{alg1} holds True in this case, 
we have $0 \le d_{1} + \dots + d_{m} \le n-2$, which implies $n \ge 2$.
By $n \ge 2$ and Lemma~\ref{lem:m-n-2}, 
$(P, Q)$ can be implemented by $m$-variable M-QSP in $n$ steps.

Assume that $\textsc{M-QSP-CDA}(P,Q,n)=\textsc{M-QSP-CDA}(P_j, Q_j, n-1)$ for some $j \in \{1, \dots, m\}$.
Then, we obtain $\textsc{M-QSP-CDA}(P_j, Q_j, n-1) = \text{True}$.
Furthermore, by the induction hypothesis, $(P_j, Q_j)$ can be constructed by $m$-variable M-QSP in $(n-1)$ steps.
Since we can obtain the unitary consisting of $(P,Q)$ 
by multiplying $(P_j, Q_j)$ by $A(a_j) e^{2i \varphi_j}$, 
we see that $(P, Q)$ can be implemented by $m$-variable M-QSP in $n$ steps.
\qed

\subsection{The Necessary and Sufficient Condition for the Single-variable Case}

Let us consider the case with $m=1$, i.e., the necessary and sufficient condition for single-variable QSP.
From Theorem~\ref{thm:m-qsp}, $\textsc{M-QSP-CDA} (P,Q,n) = \mathrm{True}$ is also the necessary and sufficient condition even in the case that $(P,Q)$ is a pair of single-variable Laurent polynomials.

According to Ref.~\cite{rossi_multivariable_2022, laneve_multivariate_2025},
there exists another description of the necessary and sufficient condition.
The characterization of the pair of single-variable Laurent polynomials $(P,Q)$ implemented 
by single-variable QSP in $n$ steps
is as follows:
a pair of single-variable Laurent polynomials $(P,Q)$ can be constructed by single-variable QSP in $n$ steps if and only if $(P,Q)$ satisfies
\begin{enumerate}
    \item $\mathrm{deg}(P), \  \mathrm{deg}(Q) \le n$,
    \item $P(a^{-1})=P(a), \ Q(a^{-1})=-Q(a)$,
    \item $P(-a)=(-1)^n P(a), \ Q(-a)=(-1)^n Q(a)$,
    \item For all $a \in \mathbb{T}, \ \lvert P(a) \rvert^2 + \lvert Q(a)\rvert^2=1$.
\end{enumerate}
The conditions 1 to 4 in this characterization are equivalent to $\textsc{M-QSP-CDA} (P,Q,n) = \mathrm{True}$ from Theorem~\ref{thm:m-qsp}.
Specifically, the conditions 2 and 4 in this characterization ensure that 
there exists $\varphi_1 \in \mathbb{R}$ in line 17 of Algorithm~\ref{alg1}.

Note that, in the multivariable case, as far as we know,
there is no algebraic necessary and sufficient condition for the M-QSP constructivity.
A main obstacle to the characterization is 
the multivariable Laurent polynomial constraint appearing in line 17 of Algorithm~\ref{alg1} at each recursive step.
In fact,
the concise algebraic necessary conditions proposed in Ref.~\cite{mori_comment_2024} are not sufficient,
because there is a pair of Laurent polynomials that satisfies the multivariable constraint at some intermediate step but fails at another step.

\subsection{Other Properties of M-QSP-CDA}

We introduce two other properties of M-QSP-CDA.
The first property is that M-QSP-CDA runs in polynomial time
in the number of variables $m$ and signal operators $n$,
as stated in Remark~\ref{remark:complexity}.

\begin{remark} \label{remark:complexity}
    From Theorem~\ref{thm:m-qsp} and Algorithm~\ref{alg1}, 
    we can determine whether $(P,Q)$ can be constructed by $m$-variable M-QSP in $n$ steps in 
    $O \left(nmL \right)$ computational complexity,
    where $L$ is
    \begin{equation}
        L \coloneqq 
        \prod_{j=1}^m \left( 2 \cdot \max\{ \mathrm{deg}_{a_j} P, \mathrm{deg}_{a_j} Q  \} + 1 \right),
    \end{equation}
    which denotes the maximum number of terms in $P$ and $Q$ based on the degrees of $P$ and $Q$.
\end{remark}
\noindent
M-QSP-CDA adopts a recursive structure, where the computational complexity of each step of recursion is at most 
$O \left(mL \right)$,
and the number of recursive calls is at most $n+1$. 
Thus, the computational complexity of M-QSP-CDA is $O \left(nmL \right)$.
This means that, in the absence of numerical errors, 
we can verify the proposed necessary and sufficient condition with classical computers efficiently.
In contrast, with finite precision, it is possible that M-QSP-CDA does not work well.
It recursively determines variable indices and angle parameters step by step through polynomial computations, 
and numerical error may accumulate.
This implies that 
it does not guarantee numerical stability in practice.
Nevertheless, theoretically, M-QSP-CDA provides a finite procedure for checking whether a given pair of Laurent polynomials can be implemented by $m$-variable M-QSP.

Second property is that if M-QSP-CDA returns True for a given pair of $m$-variable Laurent polynomials $(P,Q)$ and the number of steps $n$,
M-QSP-CDA tells us the angle and index parameters to implement $(P,Q)$ using $m$-variable M-QSP,
as described in the following Remark~\ref{remark:params}.
\begin{remark} \label{remark:params}
    If M-QSP-CDA returns True for $(P,Q)$ and $n$, it gives parameters $(\phi_0, \phi_1, \dots, \phi_n) \in \mathbb{R}^{n+1}$ and $(s_1,\dots, s_n) \in \{1, \dots, m\}^n$ in the following way. 
    If $\mathrm{deg} \, P \le n-2$ holds, let
    \begin{equation}
        \phi_{n-1}=\frac{\pi}{2}, \hspace{1em} \phi_{n}=-\frac{\pi}{2}, \hspace{1em} s_{n-1}=s_n=1.
    \end{equation}
    Next, 
    if $\mathrm{deg} \, P = n$ and 
    $\exists \varphi_j \in \mathbb{R} \  s.t. \  P_{a_j^{d_{j}}}=e^{2i\varphi_j} Q_{a_j^{d_{j}}}$ for some $j \in \{1, \dots, m\}$ holds, 
    let
    \begin{equation}
        \phi_n = \varphi_j, \hspace{1em} s_n=j.
    \end{equation}
    Then, we inductively determine $(\phi_1, \dots, \phi_n) \in \mathbb{R}^{n}$ and $(s_1,\dots, s_n) \in \{1, \dots, m\}^n$. 
    Finally, if $n$ is equal to 0,
    $\phi_0 \in \mathbb{R}$ should be chosen to satisfy $P=e^{i \phi_0}$. 
\end{remark}
\noindent
Note that since $A(a_1), A(a_1^{-1}), \dots, A(a_m), A(a_m^{-1})$ are mutually commutative, 
the parameters $(\phi_0, \phi_1, \dots, \phi_n) \in \mathbb{R}^{n+1}$ and $(s_1,\dots, s_n) \in \{1, \dots, m\}^n$ that construct $(P,Q)$ may not be uniquely determined.
In this case, M-QSP-CDA provides one possible choice of the parameters for $(P,Q)$.
Combining this Remark~\ref{remark:params} and Theorem~\ref{thm:m-qsp},
if a pair of $m$-variable Laurent polynomials $(P,Q)$ can be constructed by $m$-variable M-QSP in $n$ steps,
M-QSP-CDA provides a constructive method to select the angle and index parameters to realize $(P,Q)$.

\section{Application to the counterexample in Ref.~\cite{nemeth_variants_2023}} \label{sec:example}

In this section, by using M-QSP-CDA and Lemma~\ref{lem:m-n-2},
we can show that $(P_{2,2}(a,b), Q_{2,2}(a,b))$ in Ref.~\cite{nemeth_variants_2023},
\begin{align}
    \begin{split}
        P_{2,2}(a,b) =
        &\frac{6}{25} \sqrt{\frac{37}{493}} \left[ a^2 b^2 + a^{-2} b^{-2} \right. \\
        &- \left( \frac{122}{37} + \frac{8i}{37} \right) \left( b^2 + b^{-2} \right)  \\
        &+ \left( \frac{114}{37} + \frac{56i}{37} \right) \left( a^{-2} b^2 + a^2 b^{-2} \right) \\
        &+ \left( \frac{362}{111} - \frac{248i}{111} \right) \left( a^2 + a^{-2} \right) \\
        & \left. + \frac{692}{111} - \frac{719i}{222} \right], 
    \end{split} \\
    \begin{split}
        Q_{2,2}(a,b) =
        &\frac{6}{25} \sqrt{\frac{37}{493}} \left[ a^2 b^2 - a^{-2} b^{-2} \right. \\
        &- \left( \frac{122}{37} + \frac{66i}{37} \right) \left( b^2 - b^{-2} \right)  \\
        &+ \left( \frac{56}{37} + \frac{114i}{37} \right) \left( a^{-2} b^2 - a^2 b^{-2} \right) \\
        & \left. + \left( \frac{362}{111} - \frac{418i}{111} \right) \left( a^2 - a^{-2} \right)   \right]
    \end{split}
\end{align}
cannot be constructed by $m$-variable M-QSP in an arbitrary number of steps.
Note that this $(P_{2,2}(a,b), Q_{2,2}(a,b))$ is the counterexample to the necessary and sufficient condition proposed in the original paper of M-QSP~\cite{rossi_multivariable_2022}.
The authors of Ref.~\cite{nemeth_variants_2023} explained that $(P_{2,2}(a,b), Q_{2,2}(a,b))$ cannot be constructed by $m$-variable M-QSP in exactly four steps.
However, there might be the possibility that $(P_{2,2}(a,b), Q_{2,2}(a,b))$ is implemented with larger steps than four, which will be further ruled out by the following argument.

From Lemma~\ref{lem:m-n-2},
a pair of $m$-variable Laurent polynomials $(P,Q)$ that can be implemented by $m$-variable M-QSP
can always be constructed with the number of steps equivalent to the total degree of $P$.
Verifying if $(P,Q)$ can be constructed by $m$-variable M-QSP in $\mathrm{deg} \, P$ steps is necessary and sufficient 
to determine if it can be implemented by $m$-variable M-QSP.
For the case $(P_{2,2}(a,b), Q_{2,2}(a,b))$,
we have $\mathrm{deg} \, P_{2,2}=4$,
and we only need to execute $\textsc{M-QSP-CDA}(P_{2,2}, Q_{2,2}, 4)$.
Focusing on the coefficients of $a^2$ and $b^2$ in $P_{2,2}$ and $Q_{2,2}$,
we obtain $\textsc{M-QSP-CDA}(P_{2,2}, Q_{2,2}, 4)=\mathrm{False}$.
This means $(P_{2,2}(a,b), Q_{2,2}(a,b))$ cannot be constructed by $m$-variable M-QSP in an arbitrary number of steps.
This fact extends the result of Ref.~\cite{nemeth_variants_2023}.

\section{Conclusion and Discussion} \label{sec:conclusion}

In this paper, 
we proposed M-QSP-CDA, a classical algorithm to determine whether a given pair of $m$-variable Laurent polynomials $(P,Q)$ can be constructed by $m$-variable M-QSP in $n$ steps.
As one of the most important properties of M-QSP-CDA, we proved that 
its returning True is the necessary and sufficient condition for the M-QSP constructivity.
The construction of M-QSP-CDA shown in Algorithm~\ref{alg1} obviously provides a sufficient condition, but does not directly mean a necessary condition.
To complete the necessity, 
we showed that 
if a given pair of $m$-variable Laurent polynomials can be constructed by $m$-variable M-QSP,
the pair can be realized with the number of signal operators 
equal to the sum of its degrees of each variable.
As another property of M-QSP-CDA, 
we explained that it runs in polynomial (classical) time in the number of variables and signal operators.
This implies, theoretically,
it provides a finite procedure for verifying
whether a given pair of $m$-variable Laurent polynomials can be implemented by $m$-variable M-QSP.
Note that, since it recursively performs polynomial calculations, 
numerical error may accumulate with finite precision arithmetic, 
and M-QSP-CDA does not guarantee numerical stability in practice.
We also described that
M-QSP-CDA provides a constructive method to choose the necessary angle and index parameters for constructing a given pair of Laurent polynomials by $m$-variable M-QSP,
if the pair can be implemented.
These results provide insight into specifying a more concrete characterization for a pair of $m$-variable Laurent polynomials $(P,Q)$ that can be realized by an $m$-variable M-QSP.

Note that, 
from the counterexample in Sec.~\ref{sec:example},
the M-QSP definition introduced in the original paper~\cite{rossi_multivariable_2022} has a limitation
regarding which multivariable polynomials can be constructed.
Nevertheless, 
this original definition of M-QSP can be regarded as standard, 
since it includes the single-variable case as a special case 
and naturally extends QSP~\cite{gilyen_quantum_2019} by choosing signal operators corresponding to the variables.

One direction for future work is to improve the numerical stability of M-QSP-CDA.
Recent work relating single-variable QSP to the nonlinear Fourier transform~\cite{alexis_quantum_2024, alexis_infinite_2026, ni_fast_2024, laneve_generalized_2025, ni_inverse_2025} may offer useful insights in this direction.
In particular, under certain conditions, 
the nonlinear Fourier transform provides a numerically stable method for sequentially determining each angle parameter 
for constructing a target polynomial~\cite{ni_inverse_2025}.
These results may help improve the numerical stability of M-QSP-CDA 
through an appropriate extension of the nonlinear Fourier transform to the multivariable setting.

Another direction
for future work is to characterize the M-QSP constructivity,
not by an algorithmic condition such as M-QSP-CDA, 
but by a concise algebraic condition as in the single-variable case~\cite{gilyen_quantum_2019}.
A main obstacle is that 
the recursion in Algorithm~\ref{alg1} requires, at each step, the multivariable Laurent polynomial constraint 
$\exists \varphi_j \in \mathbb{R} \  s.t. \ P_{a_j^{d_{j}}}=e^{2i\varphi_j} Q_{a_j^{d_{j}}}$ in line 17.
Understanding this recursive obstruction more deeply may be the key to clarifying the algebraic characterization of $m$-variable M-QSP.
Turning to recent work, 
another possible route toward such an algebraic characterization comes from the adversary-bound approach~\cite{laneve_adversary_2026}.
This line of research shows that the adversary bound for a state conversion problem characterizes the single-variable QSP constructivity, 
while in the multivariate setting it gives sufficient conditions for the constructivity of variants of M-QSP.
Understanding how this adversary-bound approach relates to the decision of M-QSP constructivity produced by M-QSP-CDA 
might help lead to a compact algebraic characterization of M-QSP.

For future work,
it is also important to identify
a complementary $m$-variable Laurent polynomial $Q$
for a given Laurent polynomial $P$,
such that the pair $(P, Q)$ can be constructed by $m$-variable M-QSP.
Recall that M-QSP-CDA returns True if and only if a given pair of $m$-variable Laurent polynomials $(P, Q)$ can be constructed by $m$-variable M-QSP. 
This implies that even if $P$ itself can be implemented by $m$-variable M-QSP, 
M-QSP-CDA will return False if we do not provide a complementary $m$-variable Laurent polynomial $Q$ for $P$.
Thus, 
to ensure that M-QSP-CDA returns True,
it is crucial to identify a complementary $m$-variable Laurent polynomial $Q$ for $P$.
For finding the complementary polynomial in the single-variable setting, 
the theoretically known methods include
a root-finding method~\cite{gilyen_quantum_2019, haah_product_2019} 
and a complex-analysis-based approach~\cite{berntson_complementary_2025}.
However, these methods cannot be extended to a multivariable setting straightforwardly, 
due to the multivariable polynomial constraint in line 17 of Algorithm~\ref{alg1} at each recursive step.
We think it is difficult to find the complementary polynomial by merely inverting or generalizing the multivariable constraint,
because this recursive constraint does not imply a terminal constraint at $n=0$ in line 6 of Algorithm~\ref{alg1}.
Simultaneously satisfying the multivariable condition and the terminal constraint may be crucial for finding a complementary polynomial.

In addition, 
to improve $m$-variable M-QSP usability,
identifying concrete multivariable polynomials constructed by $m$-variable M-QSP is important.
As far as we know, 
it remains unknown which polynomials M-QSP can implement, including polynomials relevant to physical applications, 
such as multivariable Hamiltonian simulation.
Regarding alternative structures that broaden families of polynomials,
one possible direction is to extend the framework by generalizing the signal processing operators from Pauli $Z$ rotations to arbitrary elements of SU(2), as in the generalized QSP case~\cite{motlagh_generalized_2024}.
Another possible extension is to relax commutativity of the operators, as stated in Ref.~\cite{nemeth_variants_2023}.
We expect that addressing these open problems will stimulate the development of the practical application utilizing $m$-variable M-QSP.

\begin{acknowledgments}
    We would like to thank Hayata Morisaki for his valuable comment on the computational complexity of our algorithm.
    We also would like to thank Kaoru Mizuta, Shuntaro Yamamoto, and Nobuyuki Yoshioka for helpful discussions.
    This work is supported by MEXT Quantum Leap Flagship Program (MEXT Q-LEAP) Grant No. JPMXS0120319794 and JST COI-NEXT Grant No. JPMJPF2014.
    YI and KS are also supported by JST SPRING Grant No. JPMJSP2138 and the $\Sigma$ Doctoral Futures Research Grant Program from The University of Osaka.
\end{acknowledgments}

\bibliographystyle{quantum}
\bibliography{m-qsp}

\appendix

\section{Proof of Theorem and Lemma in Section~\ref{sec:decision-algorithm}} \label{appendix:lemma-proof}

We prove the theorem and lemma in Section~\ref{sec:decision-algorithm}, as in Fig.~\ref{fig:appendix-chart}.

\begin{figure*}[t]
  \begin{center}
    \includegraphics[width=\linewidth]{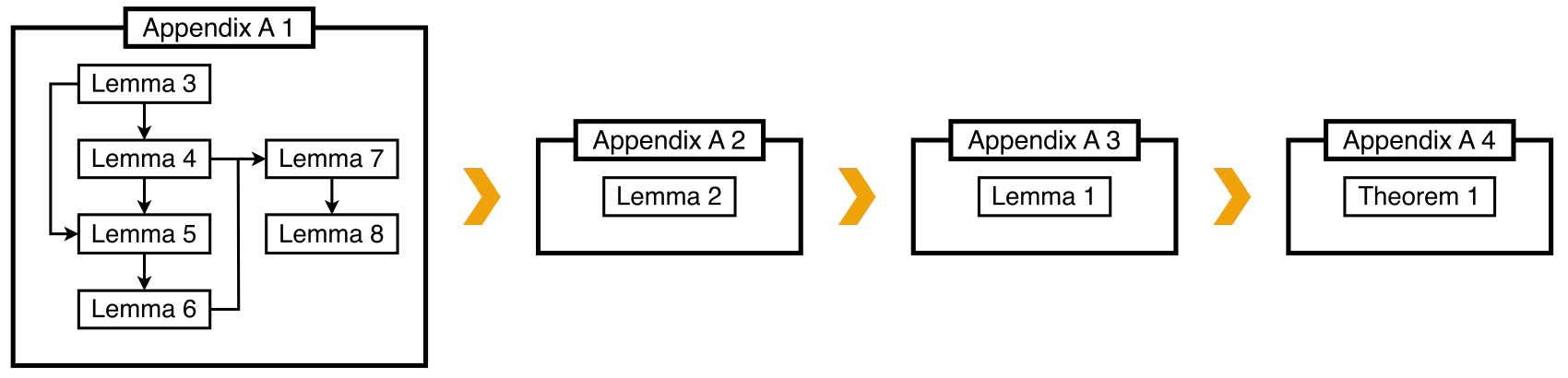}
    \caption{An overview of Appendix~\ref{appendix:lemma-proof}.
    The black arrows indicate the dependencies between Lemmas. 
    The orange arrows represent the flow between each subsection.}
    \label{fig:appendix-chart}
  \end{center}
\end{figure*}

\subsection{Preparation for Proof of Lemma~\ref{lem:m-n-2}} \label{appendix:preparation}

We introduce and prove lemmas necessary for 
showing Lemma~\ref{lem:m-n-2},
which describes the equivalence of the number of steps under the degree condition.
First, the following Lemma~\ref{lem:pi/2-Z} ensures that 
if the sum of the degrees of each variable of $P$ that can be constructed by $m$-variable M-QSP in $l$ steps 
is less than the number of steps $l$, 
the possible values of the angle parameters can be characterized.
As a result, the necessary number of steps to construct $(P, Q)$ is reduced by two. 
Intuitively, 
suppose that a M-QSP protocol starts and ends with a signal operator corresponding to the same variable $a_{s_1}$, 
and $a_{s_1}$ is not chosen in the intermediate signal operators. 
If, in addition, the resulting polynomial $P$ contains no terms involving $a_{s_1}$, 
then the intermediate signal-processing operators must be equal to $\pm I$ or $\pm iZ$.
Thus, the signal operators corresponding to $a_{s_1}$ can be commuted through the sequence 
so that they are moved next to each other and cancel each other.
This reduction technique is used in the proof of Lemma~\ref{lem:m-n-2}.

\begin{lemma} \label{lem:pi/2-Z}
    Let $l$ be an integer and $l \ge 2$. 
    For $(\phi_1, \dots, \phi_l) \in \mathbb{R}^{l}$ and $(s_1,\dots, s_l) \in \{1, \dots, m\}^l$, 
    define
    \begin{equation} \label{eq:P-l-prime-def}
        \begin{pmatrix} 
            P^{(l^\prime)}(\mathbf{a}) &Q^{(l^\prime)}(\mathbf{a}) \\ 
            -(Q^{(l^\prime)}(\mathbf{a}))^* &(P^{(l^\prime)}(\mathbf{a}))^* 
        \end{pmatrix}
        \coloneqq \prod_{k=1}^{l^\prime} A(a_{s_k}) e^{i \phi_k \sigma_z}
    \end{equation}
    for $l^\prime \in \{1, \dots, l\}$.
    If the following four conditions are satisfied:
    \begin{enumerate}
        \item $s_1=s_l$,
        \item For all $k \in \{2, \dots, l-1\}$, $s_k \in \{1, \dots, m\} \setminus \{s_l\}$,
        \item $\mathrm{deg} \, P^{(l-1)}=l-1$,
        \item $\mathrm{deg}_{a_{s_l}} P^{(l)} = 0$,
    \end{enumerate}
    then,
    \begin{equation}
        \phi_1, \dots, \phi_{l-1} \in \frac{\pi}{2} \mathbb{Z}.
    \end{equation}
    Furthermore, $(P^{(l)}, Q^{(l)})$ can be constructed by $m$-variable M-QSP in $(l-2)$ steps.
\end{lemma}

\begin{proof}
    See Appendix~\ref{appendix:lem-pi/2-pi}.
\end{proof}

Next, the following Lemma~\ref{lem:inv-sign-change} describes 
a symmetry with respect to the inverse of the variables.
This implies that, if we need to examine the degrees of $P$ and $Q$, 
it suffices to focus on the coefficients of non-negative powers. 
Note that, in Lemma~\ref{lem:inv-sign-change}, the condition $(\mathrm{ii}^\prime)$ of the revised necessary condition of M-QSP in Ref.~\cite{mori_comment_2024} is extended to $m$ variables.

\begin{lemma} \label{lem:inv-sign-change}
    Let $n$ be a non-negative integer
    and $P, Q \in \mathbb{C}[a_1, a_1^{-1}, \dots, a_m, a_m^{-1}]$ be $m$-variable Laurent polynomials. 
    If $(P, Q)$ can be constructed by $m$-variable M-QSP in $n$ steps, then
    \begin{align}
        P(\mathbf{a}^{-1}) &= P(\mathbf{a}), \label{eq:inv-paritiy-P} \\
        Q(\mathbf{a}^{-1}) &= -Q(\mathbf{a}) \label{eq:inv-paritiy-Q}
    \end{align}
    hold, 
    where $\mathbf{a}^{-1}=(a_1^{-1}, \dots, a_m^{-1})$.
\end{lemma}

\begin{proof}
    Since $(P,Q)$ can be constructed by $m$-variable M-QSP in $n$ steps,
    there exists $(\phi_0, \phi_1, \dots, \phi_n) \in \mathbb{R}^{n+1}$ and $(s_1,\dots, s_n) \in \{1, \dots, m\}^n$ such that we have
    \begin{equation}
        \begin{pmatrix} 
            P(\mathbf{a}) &Q(\mathbf{a}) \\ 
            -(Q(\mathbf{a})^* &(P(\mathbf{a}))^* 
        \end{pmatrix}\\
        =e^{i \phi_0 \sigma_z} \prod_{k=1}^n A(a_{s_k}) e^{i \phi_k \sigma_z}.
    \end{equation}
    If $n=0$,
    we obtain $P=e^{i \phi_0}$ and $Q=0$, 
    and thus Eqs.~\eqref{eq:inv-paritiy-P} and \eqref{eq:inv-paritiy-Q} obviously hold.
    If $n \ge 1$, 
    noting that $A(a_j^{-1})=e^{i (\pi/2) \sigma_z} A(a_j) e^{- i (\pi/2) \sigma_z} \  \left( j \in \{1,\dots,m\} \right)$,
    we can calculate as follows:
    \begin{align}
        &\quad 
        \begin{pmatrix} 
        P(\mathbf{a}^{-1}) &Q(\mathbf{a}^{-1}) \\ 
        -(Q(\mathbf{a}^{-1}))^* &(P(\mathbf{a}^{-1}))^* 
        \end{pmatrix} \notag \\
        &=e^{i \phi_0 \sigma_z} \prod_{k=1}^n A(a_{s_k}^{-1}) e^{i \phi_k \sigma_z} \\
        &=e^{i \phi_0 \sigma_z} 
        \prod_{k=1}^n 
        \left( e^{i (\pi/2) \sigma_z} A(a_{s_k}) e^{- i (\pi/2) \sigma_z} e^{i \phi_k \sigma_z} \right) \\
        &=e^{i (\pi/2) \sigma_z} 
        \left( e^{i \phi_0 \sigma_z} \prod_{k=1}^n A(a_{s_k}) e^{i \phi_k \sigma_z} \right)
        e^{- i (\pi/2) \sigma_z} \\
        &=\begin{pmatrix}
            i &0 \\
            0 &-i
        \end{pmatrix}
        \begin{pmatrix} 
        P(\mathbf{a}) &Q(\mathbf{a}) \\ 
        -(Q(\mathbf{a}))^* &(P(\mathbf{a}))^* 
        \end{pmatrix} 
        \begin{pmatrix}
            -i &0 \\
            0 &i
        \end{pmatrix} \\
        &=
        \begin{pmatrix} 
            P(\mathbf{a}) &-Q(\mathbf{a}) \\ 
            (Q(\mathbf{a}))^* &(P(\mathbf{a}))^* 
        \end{pmatrix}.
    \end{align}
    Thus,
    Eqs.~\eqref{eq:inv-paritiy-P} and \eqref{eq:inv-paritiy-Q} hold for all $n \ge 1$.
\end{proof}

Next, the degrees of the $m$-variable Laurent polynomial $P$ and $Q$ constructed by $m$-variable M-QSP are equal for each variable, as stated in the following Lemma~\ref{lem:relation-deg-P-Q}. 
This implies that it is sufficient to examine only the degrees of $P$ if we need to know the degrees of $P$ and $Q$.

\begin{lemma} \label{lem:relation-deg-P-Q}
    Let $n$ be a non-negative integer and 
    $P, Q \in \mathbb{C}[a_1, a_1^{-1}, \dots, a_m, a_m^{-1}]$ be $m$-variable Laurent polynomials. 
    If $(P,Q)$ can be constructed by $m$-variable M-QSP in $n$ steps, 
    then
    \begin{equation} \label{eq:deg-equal-P-Q}
        \mathrm{deg}_{a_j} P = \mathrm{deg}_{a_j} Q
    \end{equation}
    holds for all $j \in \{1, \dots, m\}$.
\end{lemma}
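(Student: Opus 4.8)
The plan is to derive $\mathrm{deg}_{a_j} P = \mathrm{deg}_{a_j} Q$ from the determinant of the M-QSP matrix rather than by unwinding the recursion. Since $\det A(a_j) = \tfrac{(a_j + a_j^{-1})^2 - (a_j - a_j^{-1})^2}{4} = 1$ and $\det e^{i\phi_k \sigma_z} = 1$, taking the determinant of both sides of Eq.~\eqref{eq:definition-M-QSP} gives the Laurent-polynomial identity
\begin{equation}
    P P^* + Q Q^* = 1 ,
\end{equation}
which together with Lemma~\ref{lem:inv-parity} is the only input; the rest is bookkeeping with $a_j$-degrees.

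Fix $j \in \{1, \dots, m\}$ and expand $P = \sum_k P_{a_j^k}\, a_j^k$ and $Q = \sum_k Q_{a_j^k}\, a_j^k$, the coefficients being Laurent polynomials in the remaining $m-1$ variables. Unwinding the definition of $*$ shows $(P^*)_{a_j^l} = (P_{a_j^{-l}})^*$, so that, writing $d \coloneqq \mathrm{deg}_{a_j} P$, the coefficient of $a_j^{2d}$ in $P P^*$ is $P_{a_j^{d}}\, (P_{a_j^{-d}})^*$. I claim this is nonzero when $d \ge 1$. By Lemma~\ref{lem:inv-parity}, $P(a_1^{-1}, \dots, a_m^{-1}) = P(a_1, \dots, a_m)$, which forces $P_{a_j^{d}}$ and $P_{a_j^{-d}}$ to vanish or not together; since $d$ is the degree, at least one of them is nonzero, hence both are. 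As $\mathbb{C}[a_1, a_1^{-1}, \dots, a_m, a_m^{-1}]$ is an integral domain and $*$ sends nonzero elements to nonzero elements, $P_{a_j^{d}} (P_{a_j^{-d}})^* \ne 0$. Hence $\mathrm{deg}_{a_j}(P P^*) = 2\, \mathrm{deg}_{a_j} P$, and the identical argument using $Q(a_1^{-1}, \dots, a_m^{-1}) = -Q(a_1, \dots, a_m)$ gives $\mathrm{deg}_{a_j}(Q Q^*) = 2\, \mathrm{deg}_{a_j} Q$.

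Finally I would compare $a_j$-degrees in $P P^* + Q Q^* = 1$. If $\mathrm{deg}_{a_j} P \ne \mathrm{deg}_{a_j} Q$, then without loss of generality $\mathrm{deg}_{a_j} P > \mathrm{deg}_{a_j} Q$, so $\mathrm{deg}_{a_j} P \ge 1$, and the coefficient of $a_j^{2\,\mathrm{deg}_{a_j} P}$ in $P P^* + Q Q^*$ is the nonzero one contributed by $P P^*$ (that of $Q Q^*$ vanishes, having strictly smaller $a_j$-degree). But the right-hand side is the constant $1$, whose coefficient of $a_j^{2\,\mathrm{deg}_{a_j} P}$ is $0$ since $2\,\mathrm{deg}_{a_j} P \ge 2$; this contradiction proves $\mathrm{deg}_{a_j} P = \mathrm{deg}_{a_j} Q$ for every $j$. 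The one delicate point is the nonvanishing of the leading $a_j$-coefficient of $P P^*$, which is precisely where Lemma~\ref{lem:inv-parity} is used: it excludes the pathological case in which only one of $P_{a_j^{\pm d}}$ is nonzero and the extreme $a_j$-powers of $P P^*$ both disappear. An induction on $n$ peeling off the last factor $A(a_{s_n}) e^{i\phi_n\sigma_z}$ is also possible, but it requires tracking leading coefficients through $P = e^{i\phi_n}(\tfrac{a_{s_n}+a_{s_n}^{-1}}{2} P' + \tfrac{a_{s_n}-a_{s_n}^{-1}}{2} Q')$ and its partner and ruling out cancellations by hand, which the determinant route avoids.
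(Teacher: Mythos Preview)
Your proof is correct and follows essentially the same approach as the paper's: both take the determinant of the M-QSP matrix to obtain a Laurent-polynomial identity, invoke Lemma~\ref{lem:inv-parity}, and then compare the extreme $a_j$-coefficients to derive a contradiction from $\deg_{a_j}P\neq\deg_{a_j}Q$. The only differences are cosmetic: you interpret $*$ as ``conjugate coefficients and invert variables'' (yielding $PP^*+QQ^*=1$), whereas the paper uses ``conjugate coefficients only'' together with Lemma~\ref{lem:inv-parity} to rewrite the same identity as $PP^*-QQ^*=1$; and where the paper concludes $R\,R^*=0\Rightarrow R=0$ directly, you phrase the nonvanishing of the leading coefficient via the integral-domain property.
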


\begin{proof}
    Note that $\mathrm{deg}_{a_j} 0 = 0$ for all $j \in \{1, \dots, m\}$.
    
    First, we show an important relationship between $P$ and $Q$. 
    Since $(P,Q)$ can be constructed by $m$-variable M-QSP in $n$ steps, 
    the matrix 
    \begin{equation}
        \begin{pmatrix} 
            P(\mathbf{a}) &Q(\mathbf{a}) \\ 
            -(Q(\mathbf{a}))^* &(P(\mathbf{a}))^* 
        \end{pmatrix}
    \end{equation}
    is represented as a product of elements of SU(2). 
    Taking the determinant, for $a_1, \dots, a_m \in \mathbb{T}$,
    we have
    \begin{equation}
        \lvert P(\mathbf{a}) \rvert^2 +  \lvert Q(\mathbf{a}) \rvert^2 = 1.
    \end{equation}
    Also, from Lemma~\ref{lem:inv-sign-change}, 
    which describes the sign change under inversion of the variables,
    for $a_1, \dots, a_m \in \mathbb{T}$,
    \begin{align}
        &(P(\mathbf{a}))^* 
        =( P( \mathbf{a}^{-1} ) )^*
        = P^* (\mathbf{a}), \\
        &(Q(\mathbf{a}))^* 
        =( - Q( \mathbf{a}^{-1} ) )^*
        =- Q^* (\mathbf{a}) 
    \end{align}
    hold, where $P^*, Q^*$ are the Laurent polynomials obtained by taking the complex conjugate of the coefficients of $P, Q$ respectively.
    Therefore, for $a_1, \dots, a_m \in \mathbb{T}$,
    \begin{equation} \label{eq:P-Q-det-1}
        P(\mathbf{a}) (P^*) (\mathbf{a})
        -  Q(\mathbf{a}) (Q^*) (\mathbf{a}) = 1
    \end{equation}
    holds.

    Next, we use Eq.~\eqref{eq:P-Q-det-1} to prove Eq.~\eqref{eq:deg-equal-P-Q}. 
    Let 
    \begin{equation}
        d_{j,P} \coloneqq \mathrm{deg}_{a_j} P, \quad d_{j,Q} \coloneqq \mathrm{deg}_{a_j} Q.
    \end{equation}
    Assume $d_{j,P} < d_{j,Q}$. 
    Since the degree can take non-negative values, we obtain $d_{j,Q} \ge 1$. 
    From the assumption $d_{j,P} < d_{j,Q}$, 
    the coefficient of $a_j^{2 d_{j,Q}}$ in $PP^* - QQ^*$ is $-Q_{a_j^{d_{j,Q}}} (Q_{a_j^{d_{j,Q}}})^*$,
    where $Q_{a_j^{d_{j,Q}}}$ represents the coefficient of $Q$ for $a_j^{d_{j,Q}}$.
    $2 d_{j,Q} > 0$ obtained by $d_{j,Q} \ge 1$ and Eq.~\eqref{eq:P-Q-det-1} imply 
    $-Q_{a_j^{d_{j,Q}}} (Q_{a_j^{d_{j,Q}}})^*=0$. 
    Thus, $Q_{a_j^{d_{j,Q}}}=0$ holds. 
    Considering $Q(\mathbf{a}^{-1}) = -Q(\mathbf{a})$ from Lemma~\ref{lem:inv-sign-change},
    $Q_{a_j^{d_{j,Q}}}=0$ leads to $Q_{a_j^{-d_{j,Q}}}=0$, 
    where $Q_{a_j^{-d_{j,Q}}}$ is the coefficient of $Q$ for $a_j^{-d_{j,Q}}$.
    As a result of $Q_{a_j^{d_{j,Q}}}=Q_{a_j^{-d_{j,Q}}}=0$, we obtain $\mathrm{deg}_{a_j} Q < d_{j,Q}$, which contradicts $d_{j,Q} = \mathrm{deg}_{a_j} Q$. 
    Hence,
    \begin{equation}
        d_{j,P} \ge d_{j,Q}
    \end{equation}
    holds.
    
    By interchanging the roles of $P$ and $Q$ and arguing analogously to the above, 
    we obtain $d_{j,P} \le d_{j,Q}$ by contradiction.
    Thus, $d_{j,P} = d_{j,Q}$ holds.
\end{proof}

The following Lemma~\ref{lem:P-not-0} states that the Laurent polynomial $P$ constructed by $m$-variable M-QSP is non-zero. 
This property plays an important role in the proof of Lemma~\ref{lem:m-n-2} and \ref{lem:deg-parity}.

\begin{lemma} \label{lem:P-not-0}
    Let $n$ be a non-negative integer, and let $P, Q \in \mathbb{C}[a_1, a_1^{-1}, \dots, a_m, a_m^{-1}]$ be $m$-variable Laurent polynomials. 
    If $(P,Q)$ can be constructed by $m$-variable M-QSP in $n$ steps, then
    \begin{equation}
        P(\mathbf{a}) \neq 0
    \end{equation}
    holds.
\end{lemma}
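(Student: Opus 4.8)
The plan is to argue by contradiction: suppose $(P,Q)$ is constructed by $m$-variable M-QSP in $n$ steps but $P = 0$. The unitarity relation established (in the proof of Lemma~\ref{lem:relation-deg-P-Q}) gives $|P(a_1,\dots,a_m)|^2 + |Q(a_1,\dots,a_m)|^2 = 1$ for all $a_1,\dots,a_m \in \mathbb{T}$, so $P = 0$ forces $|Q(a_1,\dots,a_m)| = 1$ identically on the torus. I would then show this is impossible for a Laurent polynomial $Q$ unless $Q$ is a single monomial of unit-modulus coefficient, and then rule out that remaining case by a parity/structure argument.

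\textbf{Key steps.} First, handle $n = 0$ directly: the definition in Eq.~\eqref{eq:deg0} gives $P = e^{i\phi_0} \in \mathbb{T}$, so $P \neq 0$. For $n \ge 1$, assume for contradiction $P = 0$; by the determinant computation, $Q(a_1,\dots,a_m)\,(Q^*)(a_1,\dots,a_m) = 1$ on $\mathbb{T}^m$ (using Eq.~\eqref{eq:P-Q-det-1} with $P = 0$). Next, expand $Q = \sum_{\mathbf{k}} c_{\mathbf{k}} \mathbf{a}^{\mathbf{k}}$; the product $Q Q^*$ is again a Laurent polynomial in $a_1,\dots,a_m$, and since distinct monomials $\mathbf{a}^{\mathbf{k}}$ are linearly independent as functions on $\mathbb{T}^m$, the identity $Q Q^* \equiv 1$ forces all coefficients of $Q Q^*$ to vanish except the constant term, which equals $1$. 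Looking at the top-degree coefficient in each variable $a_j$ (as in Lemma~\ref{lem:relation-deg-P-Q}): if $\mathrm{deg}_{a_j} Q = d \ge 1$, the coefficient of $a_j^{2d}$ in $Q Q^*$ is $\pm |Q_{a_j^d}|^2 \neq 0$, contradicting $Q Q^* \equiv 1$. Hence $\mathrm{deg}_{a_j} Q = 0$ for every $j$, i.e., $Q$ is a constant $c$ with $|c| = 1$. But then the M-QSP unitary equals $\begin{pmatrix} 0 & c \\ -c^* & 0 \end{pmatrix}$, whose determinant is $|c|^2 = 1$ — consistent so far — yet this matrix contains no factor $A(a_{s_k})$ dependence, so evaluating Eq.~\eqref{eq:definition-M-QSP} at, say, $a_{s_1} = 1$ versus $a_{s_1} = i$ (where $A$ differs) must give the same right-hand side, which it cannot for $n \ge 1$ since $A(a_{s_1})$ is non-constant in $a_{s_1}$ while the left side is constant; alternatively, set all $a_j = 1$ so every $A(a_j) = I$, forcing $e^{i\phi_0\sigma_z}\prod_k e^{i\phi_k\sigma_z} = \begin{pmatrix} 0 & c \\ -c^* & 0\end{pmatrix}$, but the left side is diagonal — contradiction.

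\textbf{Main obstacle.} The cleanest way to close the argument is the final evaluation-at-$a_j=1$ step: it immediately collapses the product to a diagonal unitary, which cannot have zero diagonal entries, so $P(1,\dots,1) \neq 0$ and a fortiori $P \not\equiv 0$. In fact this observation short-circuits the whole proof — one may simply remark that substituting $a_1 = \dots = a_m = 1$ into Eq.~\eqref{eq:definition-M-QSP} (or Eq.~\eqref{eq:deg0}) yields a product of diagonal matrices $e^{i\phi_0\sigma_z}\prod_{k=1}^n e^{i\phi_k\sigma_z} = e^{i(\sum_k \phi_k)\sigma_z}$, whose upper-left entry is $e^{i\sum_k \phi_k} \neq 0$, and this entry is exactly $P(1,\dots,1)$. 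Therefore $P \neq 0$. I expect the only subtlety worth spelling out is verifying that $A(a_j)|_{a_j = 1} = I$ (immediate from the definition of $A$) so that the substitution is legitimate and the product genuinely reduces to the diagonal factor.
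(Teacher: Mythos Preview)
Your shortcut argument is correct and is genuinely simpler than the paper's proof. Substituting $a_1=\dots=a_m=1$ turns every $A(a_{s_k})$ into the identity (since $A(1)=\begin{pmatrix}1&0\\0&1\end{pmatrix}$), so the right-hand side of Eq.~\eqref{eq:definition-M-QSP} collapses to $e^{i(\phi_0+\dots+\phi_n)\sigma_z}$, whose upper-left entry is $e^{i\sum_k\phi_k}\neq 0$; hence $P(1,\dots,1)\neq 0$ and in particular $P$ is not the zero polynomial. This one-line argument is self-contained: it uses nothing beyond the definition of M-QSP.

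By contrast, the paper argues by contradiction through its structural lemmas: assuming $P=0$, it invokes Lemma~\ref{lem:relation-deg-P-Q} to force $\deg_{a_j}Q=0$ for all $j$, then Lemma~\ref{lem:inv-parity} (the symmetry $Q(a^{-1})=-Q(a)$) to force the constant $Q$ to vanish, and finally the unitarity relation $|P|^2+|Q|^2=1$ to obtain $0=1$. This route is longer and sits downstream of two auxiliary lemmas, but it has the virtue of reusing machinery already built for other purposes in the appendix. Your evaluation trick buys a cleaner, dependency-free proof; the paper's route buys consistency of method. Your intermediate attempt (comparing $a_{s_1}=1$ versus $a_{s_1}=i$) is not quite airtight on its own---the full product could in principle be constant even though an individual factor is not---so it is good that you superseded it with the diagonal-evaluation argument, which is watertight.
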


\begin{proof}
    Assume $P(\mathbf{a}) = 0$. 
    Then, it follows that $\mathrm{deg}_{a_j} P = 0$ for $j \in \{1, \dots, m \}$.
    Since $(P,Q)$ can be constructed by $m$-variable M-QSP in $n$ steps, 
    we can apply Lemma~\ref{lem:relation-deg-P-Q}, 
    which describes the degree relation between the pair, to $(P,Q)$.
    We obtain $\mathrm{deg}_{a_j} Q = 0$ for $j \in \{1, \dots, m \}$. 
    Thus, there exists $c \in \mathbb{C}$ such that
    \begin{equation}
         Q(\mathbf{a}) = c.
    \end{equation}
    From Lemma~\ref{lem:inv-sign-change}, 
    which describes the sign change under inversion of the variables,
    it follows that
    \begin{gather}
        Q(\mathbf{a}^{-1}) = -Q(\mathbf{a}) \\
        \therefore \ c = -c \\
        \therefore \ c=0 \\
        \therefore \ Q(\mathbf{a})=0.
    \end{gather}
    Therefore, we have
    \begin{equation} \label{eq:P-Q-0}
        P(\mathbf{a}) = Q(\mathbf{a}) = 0.
    \end{equation}
    Since $(P,Q)$ can be constructed by $m$-variable M-QSP in $n$ steps, 
    \begin{equation}
        \begin{pmatrix} 
            P(\mathbf{a}) &Q(\mathbf{a}) \\ 
            -(Q(\mathbf{a}))^* &(P(\mathbf{a}))^* 
        \end{pmatrix}
    \end{equation}
    is represented as a product of elements of SU(2). 
    Taking the determinant, it follows that
    \begin{equation}
        \lvert P(\mathbf{a}) \rvert^2 +  \lvert Q(\mathbf{a}) \rvert^2 = 1
    \end{equation}
    for $a_1, \dots, a_m \in \mathbb{T}$,
    which contradicts Eq.~\eqref{eq:P-Q-0}.
\end{proof}

The parity of the degree of each variable in the Laurent polynomial $P$ constructed by $m$-variable M-QSP matches the parity of the number of choosing each variable in the sequence of M-QSP,
as in the following Lemma~\ref{lem:deg-parity}. 
This parity constraint allows us to limit the cases to be considered in the proofs of Lemma~\ref{lem:m-n-2} and Theorem~\ref{thm:m-qsp}. 
Moreover, Lemma~\ref{lem:deg-parity} extends the condition $(\mathrm{iii}^\prime)$ of the revised necessary condition of M-QSP from the Ref.~\cite{mori_comment_2024} to $m$ variables.

\begin{lemma} \label{lem:deg-parity}
    Let $n$ be a positive integer and 
    $P, Q \in \mathbb{C}[a_1, a_1^{-1}, \dots, a_m, a_m^{-1}]$ be $m$-variable Laurent polynomials. 
    If $(P, Q)$ can be constructed by $m$-variable M-QSP in $n$ steps, that is, 
    there exists $(\phi_0, \phi_1, \dots, \phi_n) \in \mathbb{R}^{n+1}$ and $(s_1,\dots, s_n) \in \{1, \dots, m\}^n$ such that
    \begin{equation} \label{eq:lem-deg-parity-setting}
        \begin{pmatrix} 
            P(\mathbf{a}) &Q(\mathbf{a}) \\ 
            -(Q(\mathbf{a}))^* &(P(\mathbf{a}))^* 
        \end{pmatrix}
        =e^{i \phi_0 \sigma_z} \prod_{k=1}^n A(a_{s_k}) e^{i \phi_k \sigma_z}.
    \end{equation}
    Then, we obtain
    \begin{equation}
       \mathrm{deg}_{a_j} P \equiv \lvert \{ k \in \{1, \dots, n \} \mid s_k = j\} \rvert \pmod{2}
    \end{equation}
    for all $j \in \{1, \dots, m\}$. 
    In particular,
    \begin{equation} \label{eq:deg_P_equiv_n}
        \mathrm{deg} \, P \equiv n \pmod{2}
    \end{equation}
    holds.
\end{lemma}

\begin{proof}
    For $j \in \{1,\dots,m\}$, let
    \begin{equation}
        c_j \coloneqq \lvert \{ k \in \{1, \dots, n\} \mid s_k=j \}  \rvert.
    \end{equation}
    From Eq.~\eqref{eq:lem-deg-parity-setting},
    \begin{equation}
        P(a_1, \dots, a_{j-1}, -a_j, a_{j+1}, \dots, a_m)
        =(-1)^{c_j} P(\mathbf{a})
    \end{equation}
    holds. 
    Thus, $P$ becomes an even or odd function with respect to each $a_1,\dots,a_m$. 
    Since $P \neq 0$ by Lemma~\ref{lem:P-not-0}, we have
    \begin{equation}
        \mathrm{deg}_{a_j} P \equiv c_j \pmod{2}
    \end{equation}
    for $j \in \{1,\dots,m\}$.
    $c_1+\dots+c_m=n$ implies Eq.~\eqref{eq:deg_P_equiv_n}.
\end{proof}

The following Lemma~\ref{lem:deg-reduction} describes 
the change in the degree for each variable of $P$ which is constructed by $m$-variable M-QSP 
if the number of steps increases by one. 
By focusing on this change, 
we can obtain the degree of the $m$-variable Laurent polynomial $P$ with respect to a specific variable,
which is utilized in the proof of Lemma~\ref{lem:m-n-2}, \ref{lem:deg-sum-equal-step} and Theorem~\ref{thm:m-qsp}.

\begin{lemma} \label{lem:deg-reduction}
    Let $n$ be a non-negative integer and 
    $P, Q \in \mathbb{C}[a_1, a_1^{-1}, \dots, a_m, a_m^{-1}]$ be $m$-variable Laurent polynomials. 
    If $(P,Q)$ can be constructed by $m$-variable M-QSP in $n$ steps and
    \begin{equation} \label{eq:lem-deg-reduction-tilde-def}
        \begin{split}
            &\quad 
            \begin{pmatrix} 
                \tilde{P}(\mathbf{a}) &\tilde{Q}(\mathbf{a}) \\ 
                -(\tilde{Q}(\mathbf{a}))^* &(\tilde{P}(\mathbf{a}))^* 
            \end{pmatrix}\\
            &\coloneqq
            \begin{pmatrix} 
                P(\mathbf{a}) &Q(\mathbf{a}) \\ 
                -(Q(\mathbf{a}))^* &(P(\mathbf{a}))^* 
            \end{pmatrix}
            A(a_s) e^{i \phi \sigma_z} 
        \end{split} 
    \end{equation}
    defines $(\tilde{P}, \tilde{Q})$ with $s \in \{1, \dots, m\}$ and $\phi \in \mathbb{R}$, 
    then we obtain
    \begin{equation} \label{eq:deg-reduction}
        \begin{split}
            &\mathrm{deg}_{a_j} \tilde{P} \\
            =
            &\begin{dcases}
                \mathrm{deg}_{a_j} P -1 \ \mathrm{or} \ \mathrm{deg}_{a_j} P +1 &(\mathrm{if} \ j=s), \\
                \mathrm{deg}_{a_j} P &(\mathrm{if} \ j \in \{1, \dots, m \} \setminus \{s\}).
            \end{dcases}
        \end{split}
    \end{equation}
\end{lemma}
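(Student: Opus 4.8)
The plan is to reduce the statement to explicit formulas for $\tilde P$ and $\tilde Q$ and then to combine the degree-matching Lemma~\ref{lem:relation-deg-P-Q} with the parity Lemma~\ref{lem:deg-parity}. First I would dispose of the case $n=0$ by hand: there $P\in\mathbb{T}$ and $Q=0$, so $\mathrm{deg}_{a_j}P=0$ for every $j$, while a one-line matrix multiplication gives $\tilde P=\tfrac{e^{i\phi}}{2}P(a_s+a_s^{-1})$, so that $\mathrm{deg}_{a_s}\tilde P=1$ and $\mathrm{deg}_{a_j}\tilde P=0$ for $j\neq s$, which is exactly Eq.~\eqref{eq:deg-reduction}. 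Hence I may assume $n\geq1$. I would also note at the outset that $(\tilde P,\tilde Q)$ is itself constructible by $m$-variable M-QSP in $n+1$ steps (append the factor $A(a_s)e^{i\phi\sigma_z}$), so Lemmas~\ref{lem:relation-deg-P-Q} and~\ref{lem:deg-parity} apply to $(\tilde P,\tilde Q)$ as well; in particular $\mathrm{deg}_{a_j}P=\mathrm{deg}_{a_j}Q$ and $\mathrm{deg}_{a_j}\tilde P=\mathrm{deg}_{a_j}\tilde Q$ for every $j$.

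Expanding Eq.~\eqref{eq:lem-deg-reduction-tilde-def}, and using $(A(a_s)e^{i\phi\sigma_z})^{-1}=e^{-i\phi\sigma_z}A(a_s^{-1})$ for the reverse direction, I would record the two identities $\tilde P=\tfrac{e^{i\phi}}{2}[(P+Q)a_s+(P-Q)a_s^{-1}]$ and $P=\tfrac12[e^{-i\phi}\tilde P(a_s+a_s^{-1})-e^{i\phi}\tilde Q(a_s-a_s^{-1})]$, together with the companion $\tilde Q=\tfrac{e^{-i\phi}}{2}[(P+Q)a_s-(P-Q)a_s^{-1}]$; these are the only structural inputs the argument uses.

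For an index $j\neq s$, set $d=\mathrm{deg}_{a_j}P$. Since $A(a_s)e^{i\phi\sigma_z}$ involves only the variable $a_s$, the polynomial $\tilde P$ is an $a_s$-linear combination of $P$ and $Q$, so $\mathrm{deg}_{a_j}\tilde P\leq\max\{\mathrm{deg}_{a_j}P,\mathrm{deg}_{a_j}Q\}=d$ by Lemma~\ref{lem:relation-deg-P-Q}. For the opposite inequality, if $\mathrm{deg}_{a_j}\tilde P<d$ then both the $a_j^{d}$- and $a_j^{-d}$-coefficients of $\tilde P$ vanish, and reading these off the identity for $\tilde P$ forces $P_{a_j^{d}}=Q_{a_j^{d}}=P_{a_j^{-d}}=Q_{a_j^{-d}}=0$, contradicting $\mathrm{deg}_{a_j}P=d$ (the case $d=0$ being already settled by the upper bound). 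Hence $\mathrm{deg}_{a_j}\tilde P=\mathrm{deg}_{a_j}P$.

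For $j=s$, put $d=\mathrm{deg}_{a_s}P=\mathrm{deg}_{a_s}Q$. The identity for $\tilde P$ gives the upper bound $\mathrm{deg}_{a_s}\tilde P\leq d+1$, and the identity for $P$, combined with $\mathrm{deg}_{a_s}\tilde P=\mathrm{deg}_{a_s}\tilde Q$, gives $d\leq\mathrm{deg}_{a_s}\tilde P+1$, i.e.\ the lower bound $\mathrm{deg}_{a_s}\tilde P\geq d-1$. To collapse the remaining candidates $\{d-1,d,d+1\}$ down to just $d\pm1$, I would invoke the parity Lemma~\ref{lem:deg-parity}: a step sequence realizing $(\tilde P,\tilde Q)$ is obtained from one realizing $(P,Q)$ by adjoining one further occurrence of the index $s$, whence $\mathrm{deg}_{a_s}\tilde P\equiv\mathrm{deg}_{a_s}P+1\pmod2$; since $d$ is the unique element of $\{d-1,d,d+1\}$ with the same parity as $\mathrm{deg}_{a_s}P$, it is excluded, leaving exactly $d-1$ or $d+1$. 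I expect this parity step to be the crux: the coefficient bookkeeping alone confines $\mathrm{deg}_{a_s}\tilde P$ to $\{d-1,d,d+1\}$ but offers no direct reason why the middle value $d$ is unattainable, so it is the parity lemma — not any cancellation argument internal to $\tilde P$ — that does the decisive work.
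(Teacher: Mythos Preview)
Your overall strategy matches the paper's: get the upper bound on $\mathrm{deg}_{a_j}\tilde P$ from the forward identity, the lower bound from the inverse identity, and for $j=s$ invoke the parity Lemma~\ref{lem:deg-parity} to exclude the middle value. The $j=s$ case and the $n=0$ case are handled correctly.

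There is, however, a genuine gap in your $j\neq s$ lower bound. You assert that the vanishing of $\tilde P_{a_j^{d}}$ and $\tilde P_{a_j^{-d}}$, ``read off the identity for $\tilde P$'', forces $P_{a_j^{\pm d}}=Q_{a_j^{\pm d}}=0$. This does not follow from the $\tilde P$-identity alone. The coefficients $P_{a_j^{d}},Q_{a_j^{d}}$ are Laurent polynomials in the remaining variables, \emph{including} $a_s$, and the map $(R,S)\mapsto (R+S)a_s+(R-S)a_s^{-1}$ is not injective on such polynomials: for instance $R=a_s-a_s^{-1}$, $S=-a_s-a_s^{-1}$ give $(R+S)a_s+(R-S)a_s^{-1}=-2+2=0$ with $R,S\neq 0$. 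So from $\tilde P_{a_j^{d}}=0$ by itself you cannot conclude $P_{a_j^{d}}=Q_{a_j^{d}}=0$.

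The repair is immediate with ingredients you already listed. Either also use $\tilde Q_{a_j^{d}}=0$ (legitimate because $\mathrm{deg}_{a_j}\tilde Q=\mathrm{deg}_{a_j}\tilde P<d$ by Lemma~\ref{lem:relation-deg-P-Q}); the pair of equations $(R+S)a_s\pm (R-S)a_s^{-1}=0$ then does force $R=S=0$. Or---and this is exactly what the paper does---apply your inverse identity $P=\tfrac12\bigl[e^{-i\phi}\tilde P(a_s+a_s^{-1})-e^{i\phi}\tilde Q(a_s-a_s^{-1})\bigr]$ to the index $j\neq s$ as well: since the inverse factor touches only $a_s$, it gives $\mathrm{deg}_{a_j}P\le\max\{\mathrm{deg}_{a_j}\tilde P,\mathrm{deg}_{a_j}\tilde Q\}=\mathrm{deg}_{a_j}\tilde P$, and combined with your upper bound you get equality without any coefficient chase. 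You already used this inverse identity for $j=s$; use it uniformly.
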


\begin{proof}
    Since $(P,Q)$ can be constructed by $m$-variable M-QSP in $n$ steps, 
    Lemma~\ref{lem:relation-deg-P-Q},
    which describes the degree relation between the pair, implies that
    $\mathrm{deg}_{a_j} P = \mathrm{deg}_{a_j} Q$ for all $j \in \{1, \dots, m\}$.
    From the definition of $(\tilde{P}, \tilde{Q})$ in Eq.~\eqref{eq:lem-deg-reduction-tilde-def},
    we have
    \begin{equation} \label{eq:lem-deg-reduction-deg-limit}
        \begin{dcases}
            \mathrm{deg}_{a_j} \tilde{P} \leq
            \mathrm{deg}_{a_j} P +1 &(\mathrm{if} \ j=s), \\
            \mathrm{deg}_{a_j} \tilde{P} \leq
            \mathrm{deg}_{a_j} P &(\mathrm{if} \ j \in \{1, \dots, m \} \setminus \{s\}).
        \end{dcases}
    \end{equation}
    Additionally,
    from the definition of $(\tilde{P}, \tilde{Q})$ in Eq.~\eqref{eq:lem-deg-reduction-tilde-def},
    we can calculate as follows:
    \begin{align}
        &\quad 
        \begin{pmatrix} 
            P(\mathbf{a}) &Q(\mathbf{a}) \\ 
            -(Q(\mathbf{a}))^* &(P(\mathbf{a}))^* 
        \end{pmatrix} \notag \\
        &= 
        \begin{pmatrix} 
            \tilde{P}(\mathbf{a}) &\tilde{Q}(\mathbf{a}) \\ 
            -(\tilde{Q}(\mathbf{a}))^* &(\tilde{P}(\mathbf{a}))^* 
        \end{pmatrix} 
        e^{-i\phi \sigma_z} A(a_s)^{-1} \\
        &=
        \begin{pmatrix} 
            \tilde{P}(\mathbf{a}) &\tilde{Q}(\mathbf{a}) \\ 
            -(\tilde{Q}(\mathbf{a}))^* &(\tilde{P}(\mathbf{a}))^* 
        \end{pmatrix} 
        e^{i(-\phi + \pi/2) \sigma_z} A(a_s) e^{i(- \pi/2) \sigma_z}. \label{eq:lem-deg-reduction-P-tilde}
    \end{align}
    Moreover,
    since $(\tilde{P}, \tilde{Q})$ can be constructed by $m$-variable M-QSP in $(n+1)$ steps,
    applying Lemma~\ref{lem:relation-deg-P-Q} to $(\tilde{P}, \tilde{Q})$,
    we have
    \begin{equation} \label{eq:deg-tilde-P-equal-tilde-Q}
        \mathrm{deg}_{a_j} \tilde{P} = \mathrm{deg}_{a_j} \tilde{Q}
    \end{equation}
    for $j \in \{1, \dots, m\}$.
    
    First,
    we show the case of $j \in \{1, \dots, m \} \setminus \{s\}$ in Eq.~\eqref{eq:deg-reduction}.
    From Eqs.~\eqref{eq:lem-deg-reduction-P-tilde} and \eqref{eq:deg-tilde-P-equal-tilde-Q},
    we have 
    \begin{equation} \label{eq:lem-deg-reduction-tilde->P}
        \mathrm{deg}_{a_j} P \leq \mathrm{deg}_{a_j} \tilde{P}.
    \end{equation}
    From Eqs.~\eqref{eq:lem-deg-reduction-deg-limit} and \eqref{eq:lem-deg-reduction-tilde->P},
    we obtain
    \begin{equation}
         \mathrm{deg}_{a_j} \tilde{P} = \mathrm{deg}_{a_j} P \quad 
         (j \in \{1, \dots, m \} \setminus \{s\}).
    \end{equation}
    
    Next,
    we show the case of $j=s$ in Eq.~\eqref{eq:deg-reduction}.
    From Eqs.~\eqref{eq:lem-deg-reduction-P-tilde} and \eqref{eq:deg-tilde-P-equal-tilde-Q},
    we have
    \begin{equation} \label{eq:lem-deg-reduction-tilde-P+1}
        \mathrm{deg}_{a_s} P \le \mathrm{deg}_{a_s} \tilde{P} + 1.
    \end{equation}
    According to Eqs.~\eqref{eq:lem-deg-reduction-deg-limit} and \eqref{eq:lem-deg-reduction-tilde-P+1},
    we obtain
    \begin{equation} \label{eq:lem-deg-reduction-tilde-P-deg}
        \mathrm{deg}_{a_s} P - 1 \le \mathrm{deg}_{a_s} \tilde{P} \le \mathrm{deg}_{a_s} P + 1.
    \end{equation}
    Additionally,
    from Lemma~\ref{lem:deg-parity}, 
    which describes the parity constraint between the degree and the number of steps,
    and the definition of $(\tilde{P}, \tilde{Q})$ in Eq.~\eqref{eq:lem-deg-reduction-tilde-def},
    \begin{equation} \label{eq:lem-deg-reduction-tilde-P-mod}
        \mathrm{deg}_{a_s} \tilde{P} \equiv \mathrm{deg}_{a_s} P +1 \pmod{2}
    \end{equation}
    holds.
    From Eqs.~\eqref{eq:lem-deg-reduction-tilde-P-deg},
    and \eqref{eq:lem-deg-reduction-tilde-P-mod},
    we have
    \begin{equation}
         \mathrm{deg}_{a_s} \tilde{P} = \mathrm{deg}_{a_s} P - 1 \ \mathrm{or} \ \mathrm{deg}_{a_s} P + 1.
    \end{equation}
    This completes the proof.
\end{proof}

If the sum of the degrees of each variable in $P$ equals the number of steps $n$, 
the degree of $P$ for each variable is equal to the number of times that variable was chosen,
as stated in the following Lemma~\ref{lem:deg-sum-equal-step}.
Furthermore, the same fact holds for each $(P^{(l)}, Q^{(l)})$ constructed by $m$-variable M-QSP in $l$ steps leading up to $(P, Q)$.

\begin{lemma} \label{lem:deg-sum-equal-step}
    Let $n$ be a positive integer, 
    and for $m$-variable Laurent polynomials $P, Q \in \mathbb{C}[a_1, a_1^{-1}, \dots, a_m, a_m^{-1}]$.
    Suppose $(P, Q)$ can be constructed by $m$-variable M-QSP in $n$ steps, i.e., 
    there exists $(\phi_0, \phi_1, \dots, \phi_n) \in \mathbb{R}^{n+1}$ and $(s_1,\dots, s_n) \in \{1, \dots, m\}^n$ such that
    \begin{equation} 
        \begin{pmatrix} 
            P(\mathbf{a}) &Q(\mathbf{a}) \\ 
            -(Q(\mathbf{a}))^* &(P(\mathbf{a}))^* 
        \end{pmatrix}
        =e^{i \phi_0 \sigma_z} \prod_{k=1}^n A(a_{s_k}) e^{i \phi_k \sigma_z}.
    \end{equation}
    Then, we define
    \begin{equation} \label{eq:P^(l)-Q^(l)-def}
        \begin{pmatrix} 
            P^{(l)}(\mathbf{a}) &Q^{(l)}(\mathbf{a}) \\ 
            -(Q^{(l)}(\mathbf{a}))^* &(P^{(l)}(\mathbf{a}))^* 
        \end{pmatrix}
        \coloneqq e^{i \phi_0 \sigma_z} \prod_{k=1}^{l} A(a_{s_k}) e^{i \phi_k \sigma_z},
    \end{equation}
    for $l \in \{1, \dots, n\}$.
    If $\mathrm{deg} \, P = n$, then
    \begin{gather}
        \mathrm{deg} \, P^{(l)} = l, 
        \label{eq:deg-sum-P-l-equal-l} \\
        \mathrm{deg}_{a_j} P^{(l)} = \lvert \{ k \in \{1, \dots, l\} \mid s_k = j \} \rvert \label{eq:deg-j-P-(l)-equal-choice}
    \end{gather}
    hold for all $(l, j) \in \{1, \dots, n \} \times \{1, \dots,m\}$.
\end{lemma}

\begin{proof}
    First,
    we show Eq.~\eqref{eq:deg-sum-P-l-equal-l} for $l \in \{1, \dots, n\}$.
    If $l=n$,
    Eq.~\eqref{eq:deg-sum-P-l-equal-l} holds by the assumption.
    Next,
    assuming that Eq.~\eqref{eq:deg-sum-P-l-equal-l} holds for $l \ge 2$,
    we show that it also holds for $l-1$.
    According to the definition of $(P^{(l-1)}, Q^{(l-1)})$,
    $(P^{(l-1)}, Q^{(l-1)})$ can be constructed
    by $m$-variable M-QSP in $l-1 \ge 1$ steps.
    Also, we have
    \begin{equation}
        \begin{split}
            &\quad 
            \begin{pmatrix} 
                P^{(l)}(\mathbf{a}) &Q^{(l)}(\mathbf{a}) \\ 
                -(Q^{(l)}(\mathbf{a}))^* &(P^{(l)}(\mathbf{a}))^* 
            \end{pmatrix}\\
            &=
            \begin{pmatrix} 
                P^{(l-1)}(\mathbf{a}) &Q^{(l-1)}(\mathbf{a}) \\ 
                -(Q^{(l-1)}(\mathbf{a}))^* &(P^{(l-1)}(\mathbf{a}))^* 
            \end{pmatrix} 
            A(a_{s_l}) e^{i \phi_l \sigma_z}.
        \end{split}
    \end{equation}
    Therefore,
    from Lemma~\ref{lem:deg-reduction}, 
    which describes how the degrees change when adding one step,
    we obtain
    \begin{equation} \label{eq:deg-P-(l-1)}
        \begin{split}
            &\mathrm{deg}_{a_j} P^{(l-1)} \\
            =
            &\begin{dcases}
                \mathrm{deg}_{a_j} P^{(l)} -1 \ \mathrm{or} \ \mathrm{deg}_{a_j} P^{(l)} +1 &(\mathrm{if} \ j=s_l), \\
                \mathrm{deg}_{a_j} P^{(l)} \\
                (\mathrm{if} \ j \in \{1, \dots, m \} \setminus \{s_l\}).
            \end{dcases}
        \end{split}
    \end{equation}
    Next,
    we assume $\mathrm{deg}_{a_j} P^{(l-1)} = \mathrm{deg}_{a_j} P^{(l)} +1$.
    In this case,
    by the induction hypothesis,
    we have $\mathrm{deg} \, P^{(l)} = l$.
    Thus,
    we see that
    \begin{equation}
        \mathrm{deg} \, P^{(l-1)} = \mathrm{deg} \, P^{(l)} + 1 = l+1.
    \end{equation}
    On the other hand, since $(P^{(l-1)}, Q^{(l-1)})$ can be constructed
    by $m$-variable M-QSP in $(l-1)$ steps,
    \begin{equation}
        \mathrm{deg} \, P^{(l-1)} \le l-1
    \end{equation}
    holds,
    which is a contradiction.
    From Eq.~\eqref{eq:deg-P-(l-1)},
    we have $\mathrm{deg}_{a_j} P^{(l-1)} = \mathrm{deg}_{a_j} P^{(l)} - 1$.
    Hence,
    by the induction hypothesis,
    we have
    \begin{equation}
        \mathrm{deg} \, P^{(l-1)} = \mathrm{deg} \, P^{(l)} - 1 = l-1
    \end{equation}
    and Eq.~\eqref{eq:deg-sum-P-l-equal-l} holds.
    By recursively repeating this process,
    Eq.~\eqref{eq:deg-sum-P-l-equal-l} holds for $l \in \{1, \dots, n\}$.
    
    Next,
    using Eq.~\eqref{eq:deg-sum-P-l-equal-l},
    we show Eq.~\eqref{eq:deg-j-P-(l)-equal-choice}.
    For $l \in \{1, \dots, n\}$,
    according to the definition of $(P^{(l)}, Q^{(l)})$,
    \begin{equation}
        \mathrm{deg}_{a_j} P^{(l)} \le \lvert \{ k \in \{1, \dots, l\} \mid s_k = j \} \rvert
    \end{equation}
    holds for all $j \in \{1, \dots, m\}$.
    From Eq.~\eqref{eq:deg-sum-P-l-equal-l} and
    \begin{equation}
        \sum_{j=1}^m \lvert \{ k \in \{1, \dots, l\} \mid s_k = j \} \rvert = l,
    \end{equation}
    we have
    \begin{equation}
        \mathrm{deg}_{a_j} P^{(l)} = \lvert \{ k \in \{1, \dots, l\} \mid s_k = j \} \rvert.
    \end{equation}
    We obtain the desired result.
\end{proof}

\subsection{Proof of Lemma~\ref{lem:pi/2-Z}} \label{appendix:lem-pi/2-pi}

Next, we prove Lemma~\ref{lem:pi/2-Z}.
The M-QSP sequence satisfying the four conditions in this lemma
implies that intermediate angle parameters must lie in $\frac{\pi}{2}\mathbb{Z}$, and the sequence can be constructed in two fewer steps.
\vspace{1em}
\newline
\textit{Proof of Lemma~\ref{lem:pi/2-Z}}
\hspace{1em}
We prove by induction on $l \in \mathbb{Z}_{\ge 2}$.
\newline
(I) \ For $l=2$,
$P^{(2)}$ in Eq.~\eqref{eq:P-l-prime-def} satisfies
\begin{equation}
    P^{(2)} (\mathbf{a}) 
    =e^{i \phi_2} \left( \frac{\cos \phi_1}{2} a_{s_2}^2 + i \sin \phi_1 + \frac{\cos \phi_1}{2} a_{s_2}^{-2} \right).
\end{equation}
From the condition 4 in Lemma~\ref{lem:pi/2-Z}, 
we obtain $\cos \phi_1 =0$.
Therefore, $\phi_1 \in \left( \frac{\pi}{2}+ \pi \mathbb{Z} \right) \subset  \frac{\pi}{2} \mathbb{Z}$ holds.
Since the condition 1 in Lemma~\ref{lem:pi/2-Z}, i.e., $s_1 = s_2$, 
and $e^{i \phi_1 \sigma_z} A(a_{s_2}) e^{i (-\phi_1) \sigma_z} = A(a_{s_2})^{-1}$,
we have
\begin{align}
    \begin{split}
            &\quad 
            \begin{pmatrix} 
                P^{(2)}(\mathbf{a}) &Q^{(2)}(\mathbf{a}) \\ 
                -(Q^{(2)}(\mathbf{a}))^* &(P^{(2)}(\mathbf{a}))^* 
            \end{pmatrix}\\
            &= A(a_{s_1}) e^{i \phi_1 \sigma_z} A(a_{s_2}) e^{i \phi_2 \sigma_z}
        \end{split} \\
        &=A(a_{s_2}) \left( e^{i \phi_1 \sigma_z} A(a_{s_2}) e^{i (-\phi_1) \sigma_z} \right)
        e^{i (\phi_1 + \phi_2) \sigma_z} \\
        &=A(a_{s_2}) A(a_{s_2})^{-1} e^{i (\phi_1 + \phi_2) \sigma_z} \\
        &=e^{i (\phi_1 + \phi_2) \sigma_z}.
\end{align}
Then, $(P^{(2)}, Q^{(2)})$ can be constructed by $m$-variable M-QSP in $2-2=0$ step.
Hence, Lemma~\ref{lem:pi/2-Z} holds for $l=2$.
\newline
(II) \ Assuming Lemma~\ref{lem:pi/2-Z} holds for $l-1 \ge 2$, we prove it for $l$.
To keep the main line of the proof clear, we state the following claim and use it first.
\begin{claim} \label{claim}
\begin{equation} \label{eq:phi_l-1}
    \phi_{l-1} \in \frac{\pi}{2}  \mathbb{Z}.
\end{equation}
\end{claim}

Assuming Claim~\ref{claim} and using the induction hypothesis, 
we prove that $\phi_1, \dots, \phi_{l-2} \in \frac{\pi}{2} \mathbb{Z}$
and $(P^{(l)}, Q^{(l)})$ can be constructed by $m$-variable M-QSP in $(l-2)$ steps.
Let
\begin{align}
    \tilde{\phi}_k  &\coloneqq \phi_{k} \quad (k \in \{1, \dots, l-3\}), \\
    \tilde{\phi}_{l-2}  &\coloneqq \phi_{l-2}+\phi_{l-1}, \\
    \tilde{\phi}_{l-1}  &\coloneqq 0 \\
    \tilde{s}_k  &\coloneqq s_{k} \quad (k \in \{1, \dots, l-2\}), \\
    \tilde{s}_{l-1}  &\coloneqq s_l.
\end{align}
Then, for $l^\prime \in \{1, \dots, l-1\}$, we define
\begin{equation} \label{eq:PQ-tilde}
    \begin{pmatrix} 
        \tilde{P}^{(l^\prime)}(\mathbf{a}) &\tilde{Q}^{(l^\prime)}(\mathbf{a}) \\ 
        -(\tilde{Q}^{(l^\prime)}(\mathbf{a}))^* &(\tilde{P}^{(l^\prime)}(\mathbf{a}))^* 
    \end{pmatrix}
    \coloneqq \prod_{k=1}^{l^\prime} A(a_{\tilde{s}_k }) e^{i \tilde{\phi}_k } \sigma_z.
\end{equation}
We check that
$(\tilde{\phi}_1, \dots, \tilde{\phi}_{l-1})$ 
and $(\tilde{s}_1,\dots, \tilde{s}_{l-1})$ 
satisfy the conditions 1 to 4 in Lemma~\ref{lem:pi/2-Z}. 
From the definition of $(\tilde{s}_1,\dots, \tilde{s}_{l-1})$ 
and the fact that $(s_1,\dots, s_l)$ satisfies the conditions 1 and 2 in Lemma~\ref{lem:pi/2-Z}, 
we have
\begin{enumerate}
    \item $\tilde{s}_1=\tilde{s}_{l-1}$,
    \item For all $k \in \{2, \dots, l-2\}$, 
    $\tilde{s}_k \in \{1, \dots, m\} \setminus \{\tilde{s}_{l-1} \}$.
\end{enumerate}
Furthermore, 
from the definition of $(\tilde{\phi}_1, \dots, \tilde{\phi}_{l-2})$, $(\tilde{s}_1,\dots, \tilde{s}_{l-2})$ 
and Eq.~\eqref{eq:PQ-tilde}, 
we have
\begin{equation}
    \begin{split}
        &\quad 
        \begin{pmatrix} 
            \tilde{P}^{(l-2)}(\mathbf{a}) &\tilde{Q}^{(l-2)}(\mathbf{a}) \\
            -(\tilde{Q}^{(l-2)}(\mathbf{a}))^* &(\tilde{P}^{(l-2)}(\mathbf{a}))^* 
        \end{pmatrix}\\
        &=
        \begin{pmatrix} 
            P^{(l-2)}(\mathbf{a}) &Q^{(l-2)}(\mathbf{a}) \\ 
            -(Q^{(l-2)}(\mathbf{a}))^* &(P^{(l-2)}(\mathbf{a}))^* 
        \end{pmatrix} 
        e^{i \phi_{l-1} \sigma_z}.
    \end{split}
\end{equation}
Since $e^{i \phi_{l-1} \sigma_z}$ is a diagonal matrix and its diagonal elements are non-zero complex numbers, 
we obtain
\begin{equation} \label{eq:deg-j-tilde-P^(l-2)=P^(l-2)}
    \mathrm{deg}_{a_{j}} \tilde{P}^{(l-2)} = \mathrm{deg}_{a_{j}} P^{(l-2)},
\end{equation}
for all $j \in \{1, \dots, m\}$.
From Eq.~\eqref{eq:P-l-prime-def}, 
$(P^{(l-1)},Q^{(l-1)})$ can be constructed by $m$-variable M-QSP in $l-1 \ge 2$ steps.
Furthermore, from the condition 3 in Lemma~\ref{lem:pi/2-Z}, 
$\mathrm{deg} \, P^{(l-1)}=l-1$ holds.
Then, by applying Lemma~\ref{lem:deg-sum-equal-step}, 
which describes the degrees if $\mathrm{deg} \, P$ equals the number of steps,
to $P^{(l-1)}$, we have
\begin{equation} \label{eq:deg-P^(l-2)=l-2}
    \mathrm{deg} \, P^{(l-2)}=l-2 .
\end{equation}
Eqs.~\eqref{eq:deg-j-tilde-P^(l-2)=P^(l-2)} and \eqref{eq:deg-P^(l-2)=l-2} imply
\begin{enumerate}
    \setcounter{enumi}{2}
    \item $\mathrm{deg} \, \tilde{P}^{(l-2)} = l-2$.
\end{enumerate}
From Eqs.~\eqref{eq:phi_l-1} and \eqref{eq:PQ-tilde}, we have
\begin{align}
    &\quad 
    \begin{pmatrix} 
        \tilde{P}^{(l-1)}(\mathbf{a}) &\tilde{Q}^{(l-1)}(\mathbf{a}) \\ 
        -(\tilde{Q}^{(l-1)}(\mathbf{a}))^* &(\tilde{P}^{(l-1)}(\mathbf{a}))^* 
    \end{pmatrix} \notag \\
    &=
    \begin{aligned}
        &\begin{pmatrix} 
        P^{(l-2)}(\mathbf{a}) &Q^{(l-2)}(\mathbf{a}) \\ 
        -(Q^{(l-2)}(\mathbf{a}))^* &(P^{(l-2)}(\mathbf{a}))^* 
        \end{pmatrix} \\
        &e^{i \phi_{l-1} \sigma_z}
        A(a_{\tilde{s}_{l-1}})
    \end{aligned} \\
    &=
    \begin{aligned}
        &\begin{pmatrix} 
        P^{(l)}(\mathbf{a}) &Q^{(l)}(\mathbf{a}) \\ 
        -(Q^{(l)}(\mathbf{a}))^* &(P^{(l)}(\mathbf{a}))^* 
        \end{pmatrix} \\
        & \left( A(a_{s_{l-1}}) e^{i \phi_{l-1} 
        \sigma_z} A(a_{s_l}) e^{i \phi_l \sigma_z} \right)^{-1}
        e^{i \phi_{l-1} \sigma_z}
        A(a_{s_l})
    \end{aligned} \\
    &=
    \begin{aligned}
        &\begin{pmatrix} 
        P^{(l)}(\mathbf{a}) &Q^{(l)}(\mathbf{a}) \\ 
        -(Q^{(l)}(\mathbf{a}))^* &(P^{(l)}(\mathbf{a}))^* 
        \end{pmatrix} \\
        &e^{-i \phi_l \sigma_z} A(a_{s_l}) ^{-1}
        e^{-i \phi_{l-1} \sigma_z} A(a_{s_{l-1}})^{-1}
        e^{i \phi_{l-1} \sigma_z}
        A(a_{s_l})
    \end{aligned} \\
    &=
    \begin{aligned}
        &\begin{pmatrix} 
        P^{(l)}(\mathbf{a}) &Q^{(l)}(\mathbf{a}) \\ 
        -(Q^{(l)}(\mathbf{a}))^* &(P^{(l)}(\mathbf{a}))^* 
        \end{pmatrix} \\
        &e^{-i (\phi_{l-1} + \phi_l) \sigma_z}
        A(a_{s_{l-1}})^{-1}
        e^{i \phi_{l-1} \sigma_z}
    \end{aligned} \label{eq:P-l-P-tilde-l-1-inv} \\
    &=
    \begin{aligned}
        &\begin{pmatrix} 
        P^{(l)}(\mathbf{a}) &Q^{(l)}(\mathbf{a}) \\ 
        -(Q^{(l)}(\mathbf{a}))^* &(P^{(l)}(\mathbf{a}))^* 
        \end{pmatrix} \\
        &e^{-i (\phi_{l-1} + \phi_l + \pi/2) \sigma_z}
        A(a_{s_{l-1}})
        e^{i (\phi_{l-1} + \pi/2 ) \sigma_z},
    \end{aligned} \label{eq:P-tilde-l-1-P-l}
\end{align}
using the fact that we have
\begin{equation}
    \begin{split}
        &\quad e^{-i \phi_{l-1} \sigma_z} A(a_{s_{l-1}})^{-1} e^{i \phi_{l-1} \sigma_z} \\
        &=
        \begin{dcases}
            A(a_{s_{l-1}}) 
            &\left( \text{if} \  \phi_{l-1}  \in \frac{\pi}{2} + \pi \mathbb{Z} \right), \\
            A(a_{s_{l-1}}^{-1}) 
            &\left( \text{if} \  \phi_{l-1}  \in \pi \mathbb{Z} \right),
        \end{dcases}
    \end{split}
\end{equation}
$A(a_1), A(a_1^{-1}), \dots, A(a_m), A(a_m^{-1})$ are mutually commutative, and 
$A(a_{s_{l-1}})^{-1} = e^{-i (\pi/2) \sigma_z} A(a_{s_{l-1}}) e^{i (\pi/2) \sigma_z}$ holds.
From Eq.~\eqref{eq:P-l-prime-def}, $(P^{(l)},Q^{(l)})$ can be constructed by $m$-variable M-QSP in $l \ge 3$ steps. 
Hence, by applying Lemma~\ref{lem:relation-deg-P-Q},
which describes the degree relation between the pair,
to $(P^{(l)},Q^{(l)})$, 
we see that $\mathrm{deg}_{a_{s_l}} P^{(l)} = \mathrm{deg}_{a_{s_l}} Q^{(l)}$. 
Furthermore, since $(\phi_1, \dots, \phi_l)$ and $(s_1,\dots, s_l)$ satisfy the condition 4 in Lemma~\ref{lem:pi/2-Z}, 
we have $\mathrm{deg}_{a_{s_l}} P^{(l)} = 0$. 
Therefore, $\mathrm{deg}_{a_{s_l}} P^{(l)} = \mathrm{deg}_{a_{s_l}} Q^{(l)} = 0$ holds. 
Consequently, 
from Eq.~\eqref{eq:P-tilde-l-1-P-l}, $s_{l-1} \neq {s_l}$ and $ \tilde{s}_{l-1} = s_l$, 
we have
\begin{enumerate}
    \setcounter{enumi}{3}
    \item $\mathrm{deg}_{a_{\tilde{s}_{l-1}}} \tilde{P}^{(l-1)} = 0$.
\end{enumerate}
Therefore, 
$(\tilde{\phi}_1, \dots, \tilde{\phi}_{l-1})$ 
and $(\tilde{s}_1,\dots, \tilde{s}_{l-1})$ 
satisfy the conditions 1 to 4 in Lemma~\ref{lem:pi/2-Z}.
By the induction hypothesis, we have
\begin{equation}
    \tilde{\phi}_1, \dots, \tilde{\phi}_{l-2} \in \frac{\pi}{2}  \mathbb{Z}.
\end{equation}
From the definition of $( \tilde{\phi}_1, \dots, \tilde{\phi}_{l-2})$ and Eq.~\eqref{eq:phi_l-1},
we obtain
\begin{equation}
    \phi_1, \dots, \phi_{l-1} \in \frac{\pi}{2}  \mathbb{Z}.
\end{equation}
Additionally, from the induction hypothesis, 
$(\tilde{P}^{(l-1)}, \tilde{Q}^{(l-1)})$ can be constructed by $m$-variable M-QSP in $(l-3)$ steps. From Eq.~\eqref{eq:P-l-P-tilde-l-1-inv}, we obtain
\begin{equation}
    \begin{split}
        &
        \begin{pmatrix} 
        P^{(l)}(\mathbf{a}) &Q^{(l)}(\mathbf{a}) \\ 
        -(Q^{(l)}(\mathbf{a}))^* &(P^{(l)}(\mathbf{a}))^* 
        \end{pmatrix} \\
        =&
        \begin{pmatrix} 
            \tilde{P}^{(l-1)}(\mathbf{a}) &\tilde{Q}^{(l-1)}(\mathbf{a}) \\ 
            -(\tilde{Q}^{(l-1)}(\mathbf{a}))^* &(\tilde{P}^{(l-1)}(\mathbf{a}))^* 
        \end{pmatrix} \\
        &e^{-i \phi_{l-1} \sigma_z} A(a_{s_{l-1}}) e^{i (\phi_{l-1}+\phi_l) \sigma_z}.
    \end{split}
\end{equation}
Thus, $(P^{(l)}, Q^{(l)})$ can be constructed by $m$-variable M-QSP in $(l-3)+1=l-2$ steps.

It remains to prove Claim~\ref{claim}.
Let
\begin{equation} \label{eq:P-Q^{[1,l^prime)}-def}
    \begin{pmatrix} 
        P^{[1,l^\prime)} (\mathbf{a}) 
        &Q^{[1,l^\prime)} (\mathbf{a}) \\ 
        -(Q^{[1,l^\prime)} (\mathbf{a}))^* 
        &(P^{[1,l^\prime)} (\mathbf{a}))^* 
    \end{pmatrix}
    \coloneqq e^{i \phi_1 \sigma_z} \prod_{k=2}^{l^\prime} A(a_{s_k}) e^{i \phi_k \sigma_z},
\end{equation}
where $l^\prime \in \{2, \dots, l-1\}$.
Note that the notation $[1,l^\prime)$ is specific to this Lemma~\ref{lem:pi/2-Z}, 
which indicates that the M-QSP sequence starts from the signal processing operator with parameter $\phi_1$.
Now we show
\begin{equation} \label{eq:P-Q^{[1,l-1)}-pure-imag}
    P^{[1,l-1)}, Q^{[1,l-1)} 
    \in  (i \mathbb{R}) [a_1, a_1^{-1}, \dots, a_m, a_m^{-1}].
\end{equation}
We also prove, focusing on coefficients with respect to $a_{s_{l-1}}$,
\begin{equation} \label{eq:P^{[1,l-1)}_{l-1}-equal-Q^{[1,l-1)}_{l-1}}
    P^{[1,l-1) }_{d^\prime}
    = e^{2 i \phi_{l-1}}
    Q^{[1,l-1)}_{d^\prime},
\end{equation}
where $P^{[1,l-1) }_{d^\prime}$ is the coefficient of $P^{[1,l-1)}$ for $a_{s_{l-1}}^{d^{\prime}}$ and
$Q^{[1,l-1) }_{d^\prime}$ is the coefficient of $Q^{[1,l-1)}$ for  $a_{s_{l-1}}^{d^{\prime}}$.
Eq.~\eqref{eq:P-Q^{[1,l-1)}-pure-imag} states that $P^{[1,l-1)}$ and $Q^{[1,l-1)}$ have purely imaginary coefficients.
Eq.~\eqref{eq:P^{[1,l-1)}_{l-1}-equal-Q^{[1,l-1)}_{l-1}} provides a relation between some coefficients of $P^{[1,l-1)}$ and $Q^{[1,l-1)}$.
If Eqs.~\eqref{eq:P-Q^{[1,l-1)}-pure-imag} and \eqref{eq:P^{[1,l-1)}_{l-1}-equal-Q^{[1,l-1)}_{l-1}} hold,
then $e^{2 i \phi_{l-1}}$ must be real, 
and hence
\begin{equation}
    e^{2 i \phi_{l-1}}=1 \ \mathrm{or}\ -1,
\end{equation}
which implies Eq.~\eqref{eq:phi_l-1} and Claim~\ref{claim}.

We first show Eq.~\eqref{eq:P-Q^{[1,l-1)}-pure-imag}.
From the condition 2 in Lemma~\ref{lem:pi/2-Z}, 
it follows that $\mathrm{deg}_{a_{s_l}} P^{[1,l-1)} = \mathrm{deg}_{a_{s_l}} Q^{[1,l-1)} = 0$.
Focusing on the coefficients of $a_{s_l}^2$ and $a_{s_l}^{-2}$ in $P^{(l)}$ defined in Eq.~\eqref{eq:P-l-prime-def} and the condition 4 in Lemma~\ref{lem:pi/2-Z},
\begin{equation} \label{eq:a_j}
    \begin{dcases}
        \begin{aligned}
            &P^{[1,l-1)}(\mathbf{a}) 
            + (P^{[1,l-1)}(\mathbf{a}))^* \\
            &+ Q^{[1,l-1)}(\mathbf{a})
            - (Q^{[1,l-1)}(\mathbf{a}))^* = 0,
        \end{aligned}  \\
        \begin{aligned}
            &P^{[1,l-1)}(\mathbf{a}) 
            + (P^{[1,l-1)}(\mathbf{a}))^* \\
            &- Q^{[1,l-1)}(\mathbf{a})
            + (Q^{[1,l-1)}(\mathbf{a}))^* = 0
        \end{aligned}
    \end{dcases}
\end{equation}
hold. 
From Eq.~\eqref{eq:P-Q^{[1,l^prime)}-def}, 
it is obvious that $(P^{[1,l-1)}, Q^{[1,l-1)})$ can be constructed by $m$-variable M-QSP in $l-2 \ge 1$ steps. 
Then, Lemma~\ref{lem:inv-sign-change},
which describes the sign change under inversion of the variables,
can be applied to $(P^{[1,l-1)}, Q^{[1,l-1)})$,
that is, for $a_1, \dots, a_m \in \mathbb{T}$,
\begin{align}
    (P^{[1,l-1)}(\mathbf{a}))^* 
    &=(P^{[1,l-1)}(\mathbf{a}^{-1}))^*
    = (P^{[1,l-1)})^* (\mathbf{a}), \label{eq:P^{[1,l-1)}-conj}  \\
    (Q^{[1,l-1)}(\mathbf{a}))^* 
    &=(-Q^{[1,l-1)}(\mathbf{a}^{-1}))^*
    =- (Q^{[1,l-1)})^* (\mathbf{a}), \label{eq:Q^{[1,l-1)}-conj}
\end{align}
where $(P^{[1,l-1)})^*, (Q^{[1,l-1)})^*$ are the Laurent polynomials obtained by taking the complex conjugate of the coefficients of $P^{[1,l-1)}, Q^{[1,l-1)}$ respectively.
Inserting Eqs.~\eqref{eq:P^{[1,l-1)}-conj} and \eqref{eq:Q^{[1,l-1)}-conj} into Eq.~\eqref{eq:a_j} yields
\begin{equation}
    \begin{dcases}
        \mathrm{Re}[P^{[1,l-1)}](\mathbf{a}) 
        + \mathrm{Re}[Q^{[1,l-1)}](\mathbf{a}) = 0, \\
        \mathrm{Re}[P^{[1,l-1)}](\mathbf{a}) 
        - \mathrm{Re}[Q^{[1,l-1)}](\mathbf{a}) = 0,
    \end{dcases} 
\end{equation}
where $\mathrm{Re}[P^{[1,l-1)}], \mathrm{Re}[Q^{[1,l-1)}]$ are the Laurent polynomials whose coefficients are the real parts of the coefficients of $P^{[1,l-1)}, Q^{[1,l-1)}$ respectively.
Thus, we have $\mathrm{Re}[P^{[1,l-1)}]=\mathrm{Re}[Q^{[1,l-1)}]=0$.
That is,
$P^{[1,l-1)}$ and $Q^{[1,l-1)}$ have purely imaginary coefficients,
which implies Eq.~\eqref{eq:P-Q^{[1,l-1)}-pure-imag}.

Next, we show Eq.~\eqref{eq:P^{[1,l-1)}_{l-1}-equal-Q^{[1,l-1)}_{l-1}}.
We first see
\begin{equation} \label{eq:deg_{l-1}^{[1,l-2)}}
    \mathrm{deg}_{a_{s_{l-1}}} P^{[1,l-2)}=\mathrm{deg}_{a_{s_{l-1}}} Q^{[1,l-2)}
    =d^{\prime}-1,
\end{equation}
where $d^{\prime} \coloneqq \mathrm{deg}_{a_{s_{l-1}}} P^{[1,l-1)}$.
From Eq.~\eqref{eq:P-l-prime-def},
$(P^{(l-1)},Q^{(l-1)})$ can be constructed by $m$-variable M-QSP in $l-1 \ge 2$ steps.
From the condition 3 in Lemma~\ref{lem:pi/2-Z}, 
we have $\mathrm{deg} \, P^{(l-1)} = l-1$.
Thus, Lemma~\ref{lem:deg-sum-equal-step}, 
which describes the degrees if $\mathrm{deg} \, P$ equals the number of steps,
can be applied, and we obtain
\begin{align}
    \mathrm{deg}_{a_{s_{l-1}}} P^{(l-2)} &= 
    \lvert \{ k \in \{1, \dots, l-2 \} \mid s_k = s_{l-1} \} \rvert, \\
    \mathrm{deg}_{a_{s_{l-1}}} P^{(l-1)} &= 
    \lvert \{ k \in \{1, \dots, l-1 \} \mid s_k = s_{l-1} \} \rvert,
\end{align}
which leads to
\begin{equation} \label{eq:deg_{l-1}-P^(l-2)=P^(l-1)}
    \mathrm{deg}_{a_{s_{l-1}}} P^{(l-2)} = \mathrm{deg}_{a_{s_{l-1}}} P^{(l-1)} - 1.
\end{equation}
Since $s_1 \neq s_{l-1}$ holds from the condition 1 and 2 in Lemma~\ref{lem:pi/2-Z}, 
we have
\begin{align}
    \mathrm{deg}_{a_{s_{l-1}}} P^{(l-2)} &= 
    \lvert \{ k \in \{2, \dots, l-2 \} \mid s_k = s_{l-1} \} \rvert, \\
    \mathrm{deg}_{a_{s_{l-1}}} P^{(l-1)} &= 
    \lvert \{ k \in \{2, \dots, l-1 \} \mid s_k = s_{l-1} \} \rvert.
\end{align}
Therefore, from Eq.~\eqref{eq:P-Q^{[1,l^prime)}-def}, we obtain
\begin{align}
    \mathrm{deg}_{a_{s_{l-1}}} P^{[1,l-2)} &\le \mathrm{deg}_{a_{s_{l-1}}} P^{(l-2)}, \label{eq:deg_{l-1}-P^{[1,l-2)}<=P^(l-2)} \\
    \mathrm{deg}_{a_{s_{l-1}}} P^{[1,l-1)} &\le \mathrm{deg}_{a_{s_{l-1}}} P^{(l-1)} \label{eq:deg_{l-1}-P^{[1,l-1)}<=P^(l-1)}.
\end{align}
Next, we show the converse:
\begin{align}
    \mathrm{deg}_{a_{s_{l-1}}} P^{(l-2)} \le \mathrm{deg}_{a_{s_{l-1}}} P^{[1,l-2)}, \label{eq:deg_{l-1}-P^(l-2)<=P^{[1,l-2)}} \\
    \mathrm{deg}_{a_{s_{l-1}}} P^{(l-1)} \le \mathrm{deg}_{a_{s_{l-1}}} P^{[1,l-1)} \label{eq:deg_{l-1}-P^(l-1)<=P^{[1,l-1)}}.
\end{align}
From the definition of $P^{(l-2)}$ in Eq.~\eqref{eq:P-l-prime-def}, we have
\begin{equation}
    \begin{split}
        &
        \begin{pmatrix} 
            P^{(l-2)}(\mathbf{a}) &Q^{(l-2)}(\mathbf{a}) \\ 
            -(Q^{(l-2)}(\mathbf{a}))^* &(P^{(l-2)}(\mathbf{a}))^* 
        \end{pmatrix} \\
        =&A(a_{s_1})
        \begin{pmatrix} 
            P^{[1,l-2)}(\mathbf{a}) 
            &Q^{[1,l-2)}(\mathbf{a}) \\
            -(Q^{[1,l-2)}(\mathbf{a}))^* 
            &(P^{[1,l-2)}(\mathbf{a}))^*
        \end{pmatrix}.
    \end{split}
\end{equation}
Since $a_{s_{l-1}} \neq a_{s_1}$ from the condition 1 and 2 in Lemma~\ref{lem:pi/2-Z} holds and 
$(P^{[1,l-2)}, Q^{[1,l-2)})$ can be constructed by $m$-variable M-QSP in $l-3 \ge 0$ steps from Eq.~\eqref{eq:P-Q^{[1,l^prime)}-def}, 
we can apply Lemma~\ref{lem:relation-deg-P-Q},
which describes the degree relation between the pair,
to $(P^{[1,l-2)}, Q^{[1,l-2)})$.
Thus, we obtain
\begin{equation} \label{eq:deg_{l-1}-P^{[1,l-2)}=Q^{[1,l-2)}}
    \mathrm{deg}_{a_{s_{l-1}}} P^{[1,l-2)} = \mathrm{deg}_{a_{s_{l-1}}} Q^{[1,l-2)}.
\end{equation}
Therefore,
we have Eq.~\eqref{eq:deg_{l-1}-P^(l-2)<=P^{[1,l-2)}}. 
Similarly, we obtain Eq.~\eqref{eq:deg_{l-1}-P^(l-1)<=P^{[1,l-1)}}.
Then, 
from Eqs.~\eqref{eq:deg_{l-1}-P^{[1,l-2)}<=P^(l-2)}, \eqref{eq:deg_{l-1}-P^{[1,l-1)}<=P^(l-1)}, 
\eqref{eq:deg_{l-1}-P^(l-2)<=P^{[1,l-2)}} and \eqref{eq:deg_{l-1}-P^(l-1)<=P^{[1,l-1)}}, 
we have
\begin{align} 
    \mathrm{deg}_{a_{s_{l-1}}}  P^{[1,l-2)}
    &= \mathrm{deg}_{a_{s_{l-1}}}  P^{(l-2)}, 
    \label{eq:deg_{l-1}-P^{[1,l-2)}-equal-P^(l-2)} \\
    \mathrm{deg}_{a_{s_{l-1}}}  P^{[1,l-1)}
    &= \mathrm{deg}_{a_{s_{l-1}}}  P^{(l-1)}. 
    \label{eq:deg_{l-1}-P^{[1,l-1)}-equal-P^(l-1)}
\end{align}
From Eqs.~\eqref{eq:deg_{l-1}-P^(l-2)=P^(l-1)}, \eqref{eq:deg_{l-1}-P^{[1,l-2)}-equal-P^(l-2)} and \eqref{eq:deg_{l-1}-P^{[1,l-1)}-equal-P^(l-1)}, we obtain
\begin{align}
    \mathrm{deg}_{a_{s_{l-1}}} P^{[1,l-2)} 
    &= \mathrm{deg}_{a_{s_{l-1}}} P^{(l-2)} \\
    &=\mathrm{deg}_{a_{s_{l-1}}} P^{(l-1)} - 1 \\
    &= \mathrm{deg}_{a_{s_{l-1}}} P^{[1,l-1)} -  1 \\
    &=d^{\prime} - 1.
\end{align}
Thus, Eq.~\eqref{eq:deg_{l-1}-P^{[1,l-2)}=Q^{[1,l-2)}} implies Eq.~\eqref{eq:deg_{l-1}^{[1,l-2)}}.

Next,
combining Eq.~\eqref{eq:deg_{l-1}^{[1,l-2)}} with Eq.~\eqref{eq:P-Q^{[1,l^prime)}-def},
we have
\begin{equation}
    \begin{dcases}
        P^{[1,l-1)}_{d^\prime}
        = \frac{e^{i \phi_{l-1}}}{2} 
        \left(
        P^{[1,l-2)}_{d^\prime-1}
        + Q^{[1,l-2)}_{d^\prime-1}
        \right), \\
        Q^{[1,l-1)}_{d^\prime}
        = \frac{e^{-i \phi_{l-1}}}{2} 
        \left(
        P^{[1,l-2)}_{d^\prime-1}
        + Q^{[1,l-2)}_{d^\prime-1}
        \right),
    \end{dcases}
\end{equation}
where $P^{[1,l-2)}_{d^\prime-1}$ is the coefficient of $P^{[1,l-2)}$ for $a_{s_{l-1}}^{d^\prime-1}$ and
$Q^{[1,l-2)}_{d^\prime-1}$ is the coefficient of $Q^{[1,l-2)}$ for $a_{s_{l-1}}^{d^\prime-1}$.
Thus, we obtain Eq.~\eqref{eq:P^{[1,l-1)}_{l-1}-equal-Q^{[1,l-1)}_{l-1}}.
Therefore, we have the desired result.
\qed

\subsection{Proof of Lemma~\ref{lem:m-n-2}} \label{appendix:lem-m-n-2}

Next, we prove Lemma~\ref{lem:m-n-2}.
This lemma states that
a pair of Laurent polynomials $(P,Q)$ can be constructed by $m$-variable M-QSP in $(n-2)$ steps
if and only if 
the pair can be constructed by $m$-variable M-QSP in $n$ steps and satisfies $\mathrm{deg} \, P \le n-2$.
\vspace{1em}
\newline
\textit{Proof of Lemma~\ref{lem:m-n-2}}
\hspace{1em}
$(\subset)$
First, assume that $(P, Q)$ can be constructed by $m$-variable M-QSP in $(n-2)$ steps. 
By appending 
\begin{equation}
    A(a_1) e^{i \frac{\pi}{2} \sigma_z} A(a_1) e^{- i \frac{\pi}{2} \sigma_z} 
    \left( = \begin{pmatrix} 1 &0 \\ 0 &1 \end{pmatrix} \right)
\end{equation}
to the end of a sequence of $(n-2)$ steps of $m$-variable M-QSP consisting of $(P, Q)$, 
$(P, Q)$ can be constructed by  $m$-variable M-QSP in $n$ steps.
$\mathrm{deg} \, P \le n-2$ also holds since $(P, Q)$ can be constructed by $m$-variable M-QSP in $(n-2)$ steps.
\newline
$(\supset)$
Next, we assume that $(P, Q)$ can be constructed by $m$-variable M-QSP in $n$ steps 
and satisfies $\mathrm{deg} \, P \le n-2$. 
Then,
there exists $(\phi_0, \phi_1, \dots, \phi_n) \in \mathbb{R}^{n+1}$ and $(s_1,\dots, s_n) \in \{1, \dots, m\}^n$ such that
\begin{equation}
    \begin{pmatrix} 
        P(\mathbf{a}) &Q(\mathbf{a}) \\ 
        -(Q(\mathbf{a}))^* &(P(\mathbf{a}))^* 
    \end{pmatrix}
    =e^{i \phi_0 \sigma_z} \prod_{k=1}^n A(a_{s_k}) e^{i \phi_k \sigma_z}.
\end{equation}
For $l \in \{0,1, \dots, n\}$, let
\begin{equation} \label{eq:P-(l)-def}
    \begin{pmatrix} 
        P^{(l)}(\mathbf{a}) 
        &Q^{(l)}(\mathbf{a}) \\
        -(Q^{(l)}(\mathbf{a}))^* 
        &(P^{(l)}(\mathbf{a}))^* 
        \end{pmatrix}
        \coloneqq e^{i \phi_0 \sigma_z} \prod_{k=1}^l A(a_{s_k}) e^{i \phi_k \sigma_z}.
\end{equation}
Furthermore, for $j \in \{1, \dots, m\}$, we define
\begin{equation}
    c(j,l) \coloneqq \lvert \{ k \in \{1, \dots, l\} \mid s_k=j \} \rvert,
\end{equation}
where $c(j,0)=0$.

First, we show the existence of
\begin{equation}
    \mathfrak{r} \coloneqq \min \left\{ l \in \{0,1, \dots, n\} \middle| 
    \begin{gathered}
        \exists j \in \{1, \dots, m\} \ s.t. \ \\
        \mathrm{deg}_{a_j} P^{(l)} \neq c(j,l)
    \end{gathered}
    \right\}.
\end{equation}
From
\begin{gather}
    \mathrm{deg} \, P \le n-2, \\
    \mathrm{deg}_{a_j} P \le c(j,n) \quad (j \in \{1, \dots, m\})
\end{gather}
and
\begin{equation}
    \sum_{j=1}^m c(j,n) = n,
\end{equation}
we observe that
\begin{equation}
    \exists j \in \{1, \dots, m\} \ s.t. \ \mathrm{deg}_{a_j} P \neq c(j,n).
\end{equation}
Thus, $\mathfrak{r}$ exists.

Next, we show
\begin{equation} \label{eq:r-ge-2}
    \mathfrak{r} \ge 2.
\end{equation}
In fact, considering that $P^{(0)}=e^{i \phi_0}$ and $c(j,0)=0$ for all $j \in \{1, \dots, m\}$, it follows that
\begin{equation}
    \forall j \in \{1, \dots, m\}, \ \mathrm{deg}_{a_j} P^{(0)} = c(j,0)
\end{equation}
holds, which means $\mathfrak{r} \neq 0$.
From $P^{(1)}=e^{i (\phi_0 + \phi_1)} (a_{s_1}+a_{s_1}^{-1})/2$ and the definition of $c(j,1)$,
we obtain
\begin{equation}
    \forall j \in \{1, \dots, m\}, \  \mathrm{deg}_{a_j} P^{(1)} = c(j,1),
\end{equation}
which implies $\mathfrak{r} \neq 1$.
Therefore, we see that Eq.~\eqref{eq:r-ge-2}.

Note that, due to the minimality of $\mathfrak{r}$, we have
\begin{equation} \label{eq:deg_r}
    \begin{split}
        &\forall (l, j) \in \{0, 1, \dots, \mathfrak{r} -1 \} \times \{1, \dots, m \}, \\
        &\mathrm{deg}_{a_j} P^{(l)} = c(j,l)
    \end{split}
\end{equation}
holds.
Next, we show
\begin{equation}
    \mathrm{deg}_{a_{s_{\mathfrak{r}}}} P^{(\mathfrak{r})} 
    < c(s_{\mathfrak{r}}, \mathfrak{r}),
\end{equation}
utilizing Eq.~\eqref{eq:deg_r}.
From Eqs.~\eqref{eq:P-(l)-def} and \eqref{eq:r-ge-2}, 
we obtain the fact that $(P^{(\mathfrak{r} -1)}, Q^{(\mathfrak{r} -1)})$ can be constructed by $m$-variable M-QSP in $\mathfrak{r} - 1 \ge 1$ steps. 
Moreover, $(P^{(\mathfrak{r})}, Q^{(\mathfrak{r})})$ satisfies
\begin{equation}
    \begin{split}
        &\quad 
        \begin{pmatrix} 
            P^{(\mathfrak{r})}(\mathbf{a}) &Q^{(\mathfrak{r})}(\mathbf{a}) \\ 
            -(Q^{(\mathfrak{r})}(\mathbf{a}))^* &(P^{(\mathfrak{r})}(\mathbf{a}))^* 
        \end{pmatrix}\\
        &=
        \begin{pmatrix} 
            P^{(\mathfrak{r}-1)}(\mathbf{a}) &Q^{(\mathfrak{r}-1)}(\mathbf{a}) \\ 
            -(Q^{(\mathfrak{r}-1)}(\mathbf{a}))^* &(P^{(\mathfrak{r}-1)}(\mathbf{a}))^* 
        \end{pmatrix} \\
        &\quad A(a_{s_{\mathfrak{r}}}) e^{i \phi_{\mathfrak{r}} \sigma_z}.
    \end{split}
\end{equation}
Then, according to Lemma~\ref{lem:deg-reduction}, 
which describes how the degrees change when adding one step,
we have
\begin{equation}
    \mathrm{deg}_{a_j} P^{(\mathfrak{r})} = \mathrm{deg}_{a_j} P^{(\mathfrak{r} -1)}
\end{equation}
for all $j \in \{1, \dots, m\} \setminus \{s_{\mathfrak{r}}\}$.
From the definition of $c(j,l)$, 
we obtain $c(j, \mathfrak{r} - 1) 
= c(j, \mathfrak{r})$ for all $j \in \{1, \dots, m\} \setminus \{s_{\mathfrak{r}}\}$. 
From Eq.~\eqref{eq:deg_r}, for $j \in \{1, \dots, m\} \setminus \{s_{\mathfrak{r}}\}$, we deduce that
\begin{equation}
    \mathrm{deg}_{a_j} P^{(\mathfrak{r})} 
    = \mathrm{deg}_{a_j} P^{(\mathfrak{r} -1)}
    = c(j, \mathfrak{r} - 1) 
    = c(j, \mathfrak{r}).
\end{equation}
Then, the definition of $\mathfrak{r}$ implies $\mathrm{deg}_{a_{s_{\mathfrak{r}}}} P^{(\mathfrak{r})} \neq c(s_{\mathfrak{r}}, \mathfrak{r})$. 
From Eq.~\eqref{eq:P-(l)-def}, 
we also have $\mathrm{deg}_{a_{s_{\mathfrak{r}}}} P^{(\mathfrak{r})} \le c(s_{\mathfrak{r}}, \mathfrak{r})$. 
Therefore, we obtain
\begin{equation} \label{eq:deg-s_r-P^r-less-c}
    \mathrm{deg}_{a_{s_{\mathfrak{r}}}} P^{(\mathfrak{r})} 
    < c(s_{\mathfrak{r}}, \mathfrak{r}) 
    = c(s_{\mathfrak{r}}, \mathfrak{r}-1) + 1.
\end{equation}

Next, we show the existence of
\begin{equation}
    \mathfrak{l} \coloneqq \max \{ l \in \{1, 2, \dots, \mathfrak{r} -1\} \mid s_l=s_{\mathfrak{r}} \}.
\end{equation}
Note that $\{1, 2, \dots, \mathfrak{r} -1\} \neq \emptyset$ by Eq.~\eqref{eq:r-ge-2}.
Assume that $\mathfrak{l}$ does not exist.
Then, we have
\begin{equation}
    \forall l \in \{1, 2, \dots, \mathfrak{r} -1\}, \ s_l \neq s_{\mathfrak{r}}.
\end{equation}
Under this assumption,
$c(s_{\mathfrak{r}}, \mathfrak{r}-1) = 0$ holds, and we obtain $c(s_{\mathfrak{r}}, \mathfrak{r})  = c(s_{\mathfrak{r}}, \mathfrak{r}-1)  + 1 = 1$.
From Eq.~\eqref{eq:deg-s_r-P^r-less-c}, 
we have $\mathrm{deg}_{a_{s_{\mathfrak{r}}}} P^{(\mathfrak{r})} < 1$, which leads to
\begin{equation}
    \mathrm{deg}_{a_{s_{\mathfrak{r}}}} P^{(\mathfrak{r})} = 0.
\end{equation}
This implies
\begin{equation} \label{eq:P-r-even}
    \begin{split}
        &P^{(\mathfrak{r})}(a_1, \dots, a_{s_{\mathfrak{r}}-1}, -a_{s_{\mathfrak{r}}}, a_{s_{\mathfrak{r}}+1}, \dots, a_m) \\
        = &P^{(\mathfrak{r})}(\mathbf{a}).
    \end{split}
\end{equation}
Furthermore, since $(P^{(\mathfrak{r})}, Q^{(\mathfrak{r})})$ can be constructed by $m$-variable M-QSP in $\mathfrak{r}$ steps and we have $c(s_{\mathfrak{r}}, \mathfrak{r})  = 1$, 
we obtain $P^{(\mathfrak{r})}$ is an odd function with respect to $a_{s_{\mathfrak{r}}}$,
which means
\begin{equation} \label{eq:P-r-odd}
    \begin{split}
        &\quad P^{(\mathfrak{r})}(a_1, \dots, a_{s_{\mathfrak{r}}-1}, -a_{s_{\mathfrak{r}}}, a_{s_{\mathfrak{r}}+1}, \dots, a_m) \\
        &=-P^{(\mathfrak{r})}(\mathbf{a}).
    \end{split}
\end{equation}
From Eqs.~\eqref{eq:P-r-even} and \eqref{eq:P-r-odd}, we see that
\begin{equation}
    P^{(\mathfrak{r})}(\mathbf{a}) = 0.
\end{equation}
However, 
since $(P^{(\mathfrak{r})}, Q^{(\mathfrak{r})})$ can be constructed by $m$-variable M-QSP in $\mathfrak{r}$ steps, 
we can apply Lemma~\ref{lem:P-not-0} to $P^{(\mathfrak{r})}$, which implies $P^{(\mathfrak{r})} \neq 0$.
This contradiction leads to the existence of $\mathfrak{l}$.
Furthermore, from the definition of $\mathfrak{l}$, we have
\begin{equation} \label{eq:l_property}
    1 \le \mathfrak{l} \le \mathfrak{r} -1.
\end{equation}
Moreover, due to the maximality of $\mathfrak{l}$, we obtain
\begin{equation} \label{eq:l-max}
    \forall l \in \{\mathfrak{l}+1, \dots, \mathfrak{r}-1\}, \ s_l \in \{1, \dots, m\} \setminus \{s_{\mathfrak{r}}\}.
\end{equation}

Next, we show
\begin{equation} \label{eq:original}
    \prod_{k=\mathfrak{l}}^{\mathfrak{r}} A(a_{s_k}) e^{i \phi_k \sigma_z},
\end{equation}
which is the subsequence of $m$-variable M-QSP implementing $(P, Q)$ in $n$ steps,
can be constructed by $m$-variable M-QSP in $(\mathfrak{r}-\mathfrak{l}-1)$ steps. 
This fact implies that $(P, Q)$ can be constructed by $m$-variable M-QSP in $(n-2)$ steps. 
To prove the fact, 
we show that Lemma~\ref{lem:pi/2-Z} can be applied to $(\phi_{\mathfrak{l}}, \dots, \phi_{\mathfrak{r}})$ and $(s_{\mathfrak{l}}, \dots, s_{\mathfrak{r}})$.
This Lemma implies that the subsequence in Eq.~\eqref{eq:original} can be constructed in two fewer steps.

Let
\begin{equation} \label{eq:P^(l,r)-def}
    \begin{pmatrix} 
        P^{(l,r)}(\mathbf{a}) 
        &Q^{(l,r)}(\mathbf{a}) \\ 
        -(Q^{(l,r)}(\mathbf{a}))^* 
        &(P^{(l,r)}(\mathbf{a}))^* 
    \end{pmatrix}
    \coloneqq 
    \prod_{k=l}^{r} A(a_{s_k}) e^{i \phi_k \sigma_z}
\end{equation}
for $l,r \in \{\mathfrak{l}, \dots, \mathfrak{r}\} \ with \ l<r$.
From the definition of $\mathfrak{l}$, we have $\lvert \{\mathfrak{l}, \dots, \mathfrak{r}\} \rvert \ge 2$.
Furthermore, we obtain
\begin{enumerate}
    \item $s_{\mathfrak{l}} = s_{\mathfrak{r}}$.
\end{enumerate}
From Eq.~\eqref{eq:l-max}, we have
\begin{enumerate}
    \setcounter{enumi}{1}
    \item For all $k \in \{\mathfrak{l}+1, \dots, \mathfrak{r}-1\}$, $s_k \in \{1, \dots, m\} \setminus \{s_{\mathfrak{r}}\}$.
\end{enumerate}

Next, we show
\begin{equation} \label{eq:deg-sum-P^(l,r-1)-equal-choice}
    \mathrm{deg} \, P^{(\mathfrak{l}, \mathfrak{r}-1)}
    =\mathfrak{r}-\mathfrak{l}.
\end{equation}
By Eq.~\eqref{eq:P^(l,r)-def},
for all $j \in \{1, \dots, m\}$,
\begin{equation} \label{eq:deg-j-P^(l,r-1)-le-choice}
    \mathrm{deg}_{a_j} P^{(\mathfrak{l}, \mathfrak{r}-1)} 
    \le
    \lvert \{ k \in \{\mathfrak{l}, \dots, \mathfrak{r}-1\} \mid s_k = j \} \rvert
\end{equation}
holds.
Furthermore, from the Eqs.~\eqref{eq:P-(l)-def} and \eqref{eq:P^(l,r)-def},
we see that
\begin{equation} \label{eq:P^(r-1)=P^(l-1)-P^(l,r-1)}
    \begin{split}
        &\quad
        \begin{pmatrix} 
            P^{(\mathfrak{r}-1)}(\mathbf{a}) &Q^{(\mathfrak{r}-1)}(\mathbf{a}) \\ 
            -(Q^{(\mathfrak{r}-1)}(\mathbf{a}))^* &(P^{(\mathfrak{r}-1)}(\mathbf{a}))^* 
        \end{pmatrix} \\
        &=
        \begin{pmatrix} 
            P^{(\mathfrak{l}-1)}(\mathbf{a}) &Q^{(\mathfrak{l}-1)}(\mathbf{a}) \\ 
            -(Q^{(\mathfrak{l}-1)}(\mathbf{a}))^* &(P^{(\mathfrak{l}-1)}(\mathbf{a}))^* 
        \end{pmatrix} \\
        &\quad
        \begin{pmatrix}
            P^{(\mathfrak{l}, \mathfrak{r}-1)}(\mathbf{a}) 
            &Q^{(\mathfrak{l}, \mathfrak{r}-1)}(\mathbf{a}) \\ 
            -(Q^{(\mathfrak{l}, \mathfrak{r}-1)}(\mathbf{a}))^* 
            &(P^{(\mathfrak{l}, \mathfrak{r}-1)}(\mathbf{a}))^* 
        \end{pmatrix}.
    \end{split}
\end{equation}
Moreover, by Eq.~\eqref{eq:l_property},
$(P^{(\mathfrak{l}-1)}, Q^{(\mathfrak{l}-1)})$
and
$(P^{(\mathfrak{l}, \mathfrak{r}-1)}, Q^{(\mathfrak{l}, \mathfrak{r}-1)})$
can be constructed by $m$-variable M-QSP in $\mathfrak{l}-1 \ge 0$ and $\mathfrak{r}-\mathfrak{l} \ge 1$ steps, respectively.
Thus, 
from Lemma~\ref{lem:relation-deg-P-Q},
which describes the degree relation between the pair,
we have
\begin{gather}
    \mathrm{deg}_{a_j} P^{(\mathfrak{l}-1)} 
    = \mathrm{deg}_{a_j} Q^{(\mathfrak{l}-1)}, \label{eq:deg-j-P-Q-(l-1)} \\
    \mathrm{deg}_{a_j} P^{(\mathfrak{l}, \mathfrak{r}-1)} 
    = \mathrm{deg}_{a_j} P^{(\mathfrak{l}, \mathfrak{r}-1)},
\end{gather}
for all $j \in \{1, \dots, m\}$.
Then, from Eqs.~\eqref{eq:deg_r} and \eqref{eq:P^(r-1)=P^(l-1)-P^(l,r-1)},
we obtain
\begin{gather}
    \mathrm{deg}_{a_j} P^{(\mathfrak{r}-1)}
    \le
    \mathrm{deg}_{a_j} P^{(\mathfrak{l}-1)} 
    + \mathrm{deg}_{a_j} P^{(\mathfrak{l}, \mathfrak{r}-1)} \\
    \therefore 
    \ c(j, \mathfrak{r}-1) 
    - c(j, \mathfrak{l}-1)
    \le \mathrm{deg}_{a_j} P^{(\mathfrak{l}, \mathfrak{r}-1)} \\
    \therefore
    \ \lvert \{ k \in \{\mathfrak{l}, \dots, \mathfrak{r}-1\} \mid s_k = j \} \rvert 
    \le \mathrm{deg}_{a_j} P^{(\mathfrak{l}, \mathfrak{r}-1)}.
    \label{eq:choie-le-deg_j-P^(l,r-1)}
\end{gather}
By Eqs.~\eqref{eq:deg-j-P^(l,r-1)-le-choice} and \eqref{eq:choie-le-deg_j-P^(l,r-1)},
\begin{equation} \label{eq:deg_j-P^(l,r-1)-equal-choice}
    \mathrm{deg}_{a_j} P^{(\mathfrak{l}, \mathfrak{r}-1)} 
    = \lvert \{ k \in \{\mathfrak{l}, \dots, \mathfrak{r}-1\} \mid s_k = j \} \rvert
\end{equation}
holds for all $j \in \{1, \dots, m\}$.
By summing up Eq.~\eqref{eq:deg_j-P^(l,r-1)-equal-choice} for $j \in \{1, \dots, m\}$, 
we have Eq.~\eqref{eq:deg-sum-P^(l,r-1)-equal-choice}. 
That is,
we obtain
\begin{enumerate}
    \setcounter{enumi}{2}
    \item $\mathrm{deg} \, P^{(\mathfrak{l}, \mathfrak{r}-1)} = \mathfrak{r} - \mathfrak{l}$.
\end{enumerate}

Next, we show
\begin{equation} \label{eq:s_r-P^(l,r)}
    \mathrm{deg}_{a_{s_{\mathfrak{r}}}} P^{(\mathfrak{l}, \mathfrak{r})} = 0.
\end{equation}
For convenience,
we introduce the following coefficient notation,
which is specific to this Lemma~\ref{lem:m-n-2}.
For a Laurent polynomial $R \in \mathbb{C}[a_1, a_1^{-1}, \dots, a_m, a_m^{-1}]$, 
and an integer $d \ge 0$,
let $R_{d}$ denote the coefficient of $R$ for $a_{s_\mathfrak{r}}^d$.
In particular, $R_{0}$ denotes the constant term of $R$ with respect to $a_{s_\mathfrak{r}}$.

From Eqs.~\eqref{eq:P-(l)-def} and \eqref{eq:P^(l,r)-def},
we obtain
\begin{equation} \label{eq:(r-1)-l-(l+1,r-1)}
    \begin{split}
        &
        \begin{pmatrix} 
            P^{(\mathfrak{r} - 1)}(\mathbf{a}) &Q^{(\mathfrak{r} - 1)}(\mathbf{a}) \\ 
            -(Q^{(\mathfrak{r} -1)}(\mathbf{a}))^* &(P^{(\mathfrak{r} - 1)}(\mathbf{a}))^* 
        \end{pmatrix} \\
        =&
        \begin{pmatrix} 
            P^{(\mathfrak{l})}(\mathbf{a}) &Q^{(\mathfrak{l})}(\mathbf{a}) \\ 
            -(Q^{(\mathfrak{l})}(\mathbf{a}))^* &(P^{(\mathfrak{l})}(\mathbf{a}))^* 
        \end{pmatrix} \\
        &
        \begin{pmatrix}
            P^{(\mathfrak{l}+1, \mathfrak{r}-1)} (\mathbf{a}) 
            &Q^{(\mathfrak{l}+1, \mathfrak{r}-1)} (\mathbf{a}) \\ 
            -(Q^{(\mathfrak{l}+1, \mathfrak{r}-1)} (\mathbf{a}))^* 
            &(P^{(\mathfrak{l}+1, \mathfrak{r}-1)} (\mathbf{a}))^* 
        \end{pmatrix},
    \end{split}
\end{equation}
\begin{equation} \label{eq:r-(r-1)-s_r}
    \begin{split}
        &
        \begin{pmatrix} 
            P^{(\mathfrak{r})}(\mathbf{a}) &Q^{(\mathfrak{r})}(\mathbf{a}) \\ 
            -(Q^{(\mathfrak{r})}(\mathbf{a}))^* &(P^{(\mathfrak{r})}(\mathbf{a}))^* 
        \end{pmatrix} \\
        =&
        \begin{pmatrix} 
            P^{(\mathfrak{r} - 1)}(\mathbf{a}) &Q^{(\mathfrak{r} - 1)}(\mathbf{a}) \\ 
            -(Q^{(\mathfrak{r} -1)}(\mathbf{a}))^* &(P^{(\mathfrak{r} - 1)}(\mathbf{a}))^* 
        \end{pmatrix} 
        A(a_{s_{\mathfrak{r}}}) e^{i \phi_{\mathfrak{r}} \sigma_z}.
    \end{split}
\end{equation}
First, we focus on the highest-degree coefficients of $(P^{(\mathfrak{l})}, Q^{(\mathfrak{l})})$ for $a_{s_{\mathfrak{r}}}$.
By Eqs.~\eqref{eq:deg_r} and \eqref{eq:deg-j-P-Q-(l-1)},
we have
\begin{equation}
    \mathrm{deg}_{a_{s_{\mathfrak{r}}}} P^{(\mathfrak{l} - 1)}
    =\mathrm{deg}_{a_{s_{\mathfrak{r}}}} Q^{(\mathfrak{l} - 1)}
    =c(s_{\mathfrak{r}}, \mathfrak{l} - 1)
    =c(s_{\mathfrak{r}}, \mathfrak{l}) - 1.
\end{equation}
Thus,
we obtain
\begin{equation} \label{eq:s_r-P-Q-l}
    \begin{dcases}
        P^{(\mathfrak{l})}_{c(s_{\mathfrak{r}}, \mathfrak{l})} 
        = \frac{e^{i \phi_{\mathfrak{l}}}}{2}  
        \left( P^{(\mathfrak{l}-1)}_{c(s_{\mathfrak{r}}, \mathfrak{l}-1)} 
        + Q^{(\mathfrak{l}-1)}_{c(s_{\mathfrak{r}}, \mathfrak{l}-1)} \right), \\
        Q^{(\mathfrak{l})}_{c(s_{\mathfrak{r}}, \mathfrak{l})}
        = \frac{e^{-i \phi_{\mathfrak{l}}}}{2}  
        \left( P^{(\mathfrak{l}-1)}_{c(s_{\mathfrak{r}}, \mathfrak{l}-1)} 
        + Q^{(\mathfrak{l}-1)}_{c(s_{\mathfrak{r}}, \mathfrak{l}-1)} \right).
    \end{dcases}
\end{equation}
Eq.~\eqref{eq:deg_r} and the definition of $\mathfrak{l}$ implies that
$\mathrm{deg}_{a_{s_{\mathfrak{r}}}} P^{(\mathfrak{l})} 
= c(s_{\mathfrak{r}}, \mathfrak{l}) \ge 1$ and
$P^{(\mathfrak{l})}_{c(s_{\mathfrak{r}}, \mathfrak{l})} \neq 0$.
Thus, from Eq.~\eqref{eq:s_r-P-Q-l},
we have
\begin{equation} \label{eq:s_r-(P+Q)^l-not-0}
    P^{(\mathfrak{l}-1)}_{c(s_{\mathfrak{r}}, \mathfrak{l}-1)} 
    + Q^{(\mathfrak{l}-1)}_{c(s_{\mathfrak{r}}, \mathfrak{l}-1)}
    \neq 0.
\end{equation}

Next, we check the highest-degree coefficients of $(P^{(\mathfrak{r}-1)}, Q^{(\mathfrak{r}-1)})$ for $a_{s_{\mathfrak{r}}}$.
From Eqs.~\eqref{eq:l-max} and \eqref{eq:P^(l,r)-def}, we have
\begin{equation} \label{eq:s_r-P^(l+1,r-1)-0}
    \mathrm{deg}_{a_{s_{\mathfrak{r}}}} 
    P^{(\mathfrak{l}+1, \mathfrak{r}-1)}
    = \mathrm{deg}_{a_{s_{\mathfrak{r}}}} 
    Q^{(\mathfrak{l}+1, \mathfrak{r}-1)} 
    = 0.
\end{equation}
Furthermore, by Eq.~\eqref{eq:(r-1)-l-(l+1,r-1)},
we obtain
\begin{equation} \label{eq:s_r-(r-1)-P-Q}
    \begin{dcases}
        P^{(\mathfrak{r}-1)}_{c(s_{\mathfrak{r}}, \mathfrak{l})}
        = P^{(\mathfrak{l})}_{c(s_{\mathfrak{r}}, \mathfrak{l})}
        P^{(\mathfrak{l}+1, \mathfrak{r}-1)}_{0} 
        -Q^{(\mathfrak{l})}_{c(s_{\mathfrak{r}}, \mathfrak{l})}
        \left(
        Q^{(\mathfrak{l}+1, \mathfrak{r}-1)}_{0}
        \right)^*, \\
        Q^{(\mathfrak{r}-1)}_{c(s_{\mathfrak{r}}, \mathfrak{l})}
        = P^{(\mathfrak{l})}_{c(s_{\mathfrak{r}}, \mathfrak{l})}
        Q^{(\mathfrak{l}+1, \mathfrak{r}-1)}_{0} 
        +Q^{(\mathfrak{l})}_{c(s_{\mathfrak{r}}, \mathfrak{l})}
        \left(
        P^{(\mathfrak{l}+1, \mathfrak{r}-1)}_{0}
        \right)^* .
    \end{dcases}
\end{equation}

Next, we investigate highest-degree coefficients of $(P^{(\mathfrak{r})}, Q^{(\mathfrak{r})})$ for $a_{s_{\mathfrak{r}}}$.
By Eqs.~\eqref{eq:P-(l)-def} and \eqref{eq:r-ge-2},
$(P^{(\mathfrak{r}-1)}, Q^{(\mathfrak{r}-1)})$ can be constructed by $m$-variable M-QSP in $\mathfrak{r}-1 \ge 1$ steps.
Then, Lemma~\ref{lem:relation-deg-P-Q} implies $\mathrm{deg}_{a_{s_{\mathfrak{r}}}} P^{(\mathfrak{r}-1)} = \mathrm{deg}_{a_{s_{\mathfrak{r}}}} Q^{(\mathfrak{r}-1)}$.
Furthermore, from Eqs.~\eqref{eq:deg_r} and \eqref{eq:l-max},
we have
\begin{equation}
    \mathrm{deg}_{a_{s_{\mathfrak{r}}}} P^{(\mathfrak{r}-1)} 
    = \mathrm{deg}_{a_{s_{\mathfrak{r}}}} Q^{(\mathfrak{r}-1)} 
    = c(s_{\mathfrak{r}}, \mathfrak{r} - 1)
    = c(s_{\mathfrak{r}}, \mathfrak{l}).
\end{equation}
Therefore, by Eq.~\eqref{eq:r-(r-1)-s_r},
we obtain
\begin{equation} \label{eq:s_r-P-r}
    P^{(\mathfrak{r})}_{c(s_{\mathfrak{r}}, \mathfrak{r} - 1)+1}
    = \frac{e^{i \phi_{\mathfrak{r}}}}{2}  
    \left( 
    P^{(\mathfrak{r}-1)}_{c(s_{\mathfrak{r}}, \mathfrak{l})}
    + Q^{(\mathfrak{r}-1)}_{c(s_{\mathfrak{r}}, \mathfrak{l})} 
    \right).
\end{equation}
Then, from Eq.~\eqref{eq:deg-s_r-P^r-less-c}, we have
\begin{equation}
    P^{(\mathfrak{r})}_{c(s_{\mathfrak{r}}, \mathfrak{r} - 1)+1} 
    = 0.
\end{equation}
By Eqs.~\eqref{eq:s_r-P-Q-l}, \eqref{eq:s_r-(r-1)-P-Q} and \eqref{eq:s_r-P-r},
we obtain
\begin{align}
    0 
    &=P^{(\mathfrak{r})}_{c(s_{\mathfrak{r}}, \mathfrak{r} - 1)+1} \\
    &=\frac{e^{i \phi_{\mathfrak{r}}}}{2}  
    \left( 
    P^{(\mathfrak{r}-1)}_{c(s_{\mathfrak{r}}, \mathfrak{l})}
    + Q^{(\mathfrak{r}-1)}_{c(s_{\mathfrak{r}}, \mathfrak{l})} 
    \right) \\
    \begin{split}
        &=\frac{e^{i \phi_{\mathfrak{r}}}}{2}  
        \left( 
        P^{(\mathfrak{l})}_{c(s_{\mathfrak{r}}, \mathfrak{l})}
        P^{(\mathfrak{l}+1, \mathfrak{r}-1)}_{0} 
        -Q^{(\mathfrak{l})}_{c(s_{\mathfrak{r}}, \mathfrak{l})}
        \left(
        Q^{(\mathfrak{l}+1, \mathfrak{r}-1)}_{0}
        \right)^* \right. \\
        &\left.
        \quad + 
        P^{(\mathfrak{l})}_{c(s_{\mathfrak{r}}, \mathfrak{l})}
        Q^{(\mathfrak{l}+1, \mathfrak{r}-1)}_{0} 
        +Q^{(\mathfrak{l})}_{c(s_{\mathfrak{r}}, \mathfrak{l})}
        \left(
        P^{(\mathfrak{l}+1, \mathfrak{r}-1)}_{0}
        \right)^*
        \right)
    \end{split} \\
    \begin{split}
        &=\frac{e^{i \phi_{\mathfrak{r}}}}{4}
        \left( 
        P^{(\mathfrak{l}-1)}_{c(s_{\mathfrak{r}}, \mathfrak{l}-1)} 
        + Q^{(\mathfrak{l}-1)}_{c(s_{\mathfrak{r}}, \mathfrak{l}-1)} 
        \right) \\
        &\quad
        \left( 
        e^{i \phi_{\mathfrak{l}}} 
        P^{(\mathfrak{l}+1, \mathfrak{r}-1)}_{0}
        - e^{-i \phi_{\mathfrak{l}}} 
        \left( Q^{(\mathfrak{l}+1, \mathfrak{r}-1)}_{0} \right)^* \right. \\
        &\left.
        \quad + 
        e^{i \phi_{\mathfrak{l}}} 
        Q^{(\mathfrak{l}+1, \mathfrak{r}-1)}_{0}
        + e^{-i \phi_{\mathfrak{l}}} 
        \left( P^{(\mathfrak{l}+1, \mathfrak{r}-1)}_{0} \right)^*
        \right).
    \end{split}
\end{align}
Furthermore,
from Eq.~\eqref{eq:s_r-(P+Q)^l-not-0},
\begin{equation} \label{eq:phi-l-(l+1,r-1)-equal-0}
    \begin{split}
        &e^{i \phi_{\mathfrak{l}}} 
        P^{(\mathfrak{l}+1, \mathfrak{r}-1)}_{0}
        - e^{-i \phi_{\mathfrak{l}}} 
        \left( Q^{(\mathfrak{l}+1, \mathfrak{r}-1)}_{0} \right)^* \\
        &+ 
        e^{i \phi_{\mathfrak{l}}} 
        Q^{(\mathfrak{l}+1, \mathfrak{r}-1)}_{0}
        + e^{-i \phi_{\mathfrak{l}}} 
        \left( P^{(\mathfrak{l}+1, \mathfrak{r}-1)}_{0} \right)^*
        =0
    \end{split}
\end{equation}
holds.
By Eq.~\eqref{eq:P^(l,r)-def},
we have
\begin{equation}
    \begin{split}
        &
        \begin{pmatrix}
            P^{(\mathfrak{l}, \mathfrak{r})}(\mathbf{a}) 
            &Q^{(\mathfrak{l}, \mathfrak{r})}(\mathbf{a}) \\ 
            -(Q^{(\mathfrak{l}, \mathfrak{r})}(\mathbf{a}))^* 
            &(P^{(\mathfrak{l}, \mathfrak{r})}(\mathbf{a}))^* 
        \end{pmatrix} \\
        =&A(a_{s_{\mathfrak{r}}}) e^{i \phi_{\mathfrak{l}} \sigma_z}
        \begin{pmatrix}
            P^{(\mathfrak{l}+1, \mathfrak{r}-1)} (\mathbf{a}) 
            &Q^{(\mathfrak{l}+1, \mathfrak{r}-1)} (\mathbf{a}) \\ 
            -(Q^{(\mathfrak{l}+1, \mathfrak{r}-1)} (\mathbf{a}))^* 
            &(P^{(\mathfrak{l}+1, \mathfrak{r}-1)} (\mathbf{a}))^* 
        \end{pmatrix} \\
        &A(a_{s_{\mathfrak{r}}}) e^{i \phi_{\mathfrak{r}} \sigma_z}.
    \end{split}
\end{equation}
From Eqs.~\eqref{eq:s_r-P^(l+1,r-1)-0} and \eqref{eq:phi-l-(l+1,r-1)-equal-0},
we obtain
\begin{align}
    &P^{(\mathfrak{l}, \mathfrak{r})}_2 \notag \\
    \begin{split}
        =&\frac{e^{i \phi_{\mathfrak{r}}}}{4} 
        \left( 
        e^{i \phi_{\mathfrak{l}}} 
        P^{(\mathfrak{l}+1, \mathfrak{r}-1)}_{0}
        - e^{-i \phi_{\mathfrak{l}}} 
        \left( Q^{(\mathfrak{l}+1, \mathfrak{r}-1)}_{0} \right)^* \right. \\
        &\left.
        + 
        e^{i \phi_{\mathfrak{l}}} 
        Q^{(\mathfrak{l}+1, \mathfrak{r}-1)}_{0}
        + e^{-i \phi_{\mathfrak{l}}} 
        \left( P^{(\mathfrak{l}+1, \mathfrak{r}-1)}_{0} \right)^*
        \right)
    \end{split} \\
    =&0 . \label{eq:P^(l+1,r-1)-s_r-2=0}
\end{align}
Thus,
from Eqs.~\eqref{eq:l_property} and \eqref{eq:P^(l,r)-def},
$(P^{(\mathfrak{l}, \mathfrak{r})}, 
Q^{(\mathfrak{l}, \mathfrak{r})})$ can be constructed by $m$-variable M-QSP 
in $\mathfrak{r}-\mathfrak{l}+1 \ge 2$ steps.
Then, by Lemma~\ref{lem:inv-sign-change},
which describes the sign change under inversion of the variables,
we obtain
\begin{equation} \label{eq:P^(l,r)-s_r-(-2)=0}
    P^{(\mathfrak{l}, \mathfrak{r})}_{-2} = 0.
\end{equation}
Furthermore, from Lemma~\ref{lem:deg-parity}, 
which describes the parity constraint between the degree and the number of steps,
Eqs.~\eqref{eq:l-max} and \eqref{eq:P^(l,r)-def}, we have
\begin{gather}
    \mathrm{deg}_{a_{s_{\mathfrak{r}}}} 
    P^{(\mathfrak{l}, \mathfrak{r})} 
    \equiv 2 \pmod{2}, \\
    \mathrm{deg}_{a_{s_{\mathfrak{r}}}} 
    P^{(\mathfrak{l}, \mathfrak{r})} 
    \le 2.
\end{gather}
Thus, Eqs.~\eqref{eq:P^(l+1,r-1)-s_r-2=0} and \eqref{eq:P^(l,r)-s_r-(-2)=0} lead to Eq.~\eqref{eq:s_r-P^(l,r)}.
Therefore, we obtain
\begin{enumerate}
    \setcounter{enumi}{3}
    \item $\mathrm{deg}_{a_{s_{\mathfrak{r}}}}
    P^{(\mathfrak{l}, \mathfrak{r})} = 0$.
\end{enumerate}
Hence, since $(\phi_{\mathfrak{l}}, \dots, \phi_{\mathfrak{r}})$ and $(s_{\mathfrak{l}}, \dots, s_{\mathfrak{r}})$ satisfy the conditions 1 to 4 in Lemma~\ref{lem:pi/2-Z}, we can apply Lemma~\ref{lem:pi/2-Z} to Eq.~\eqref{eq:original}. 
This lemma implies that Eq.~\eqref{eq:original} can be constructed by $m$-variable M-QSP in $(\mathfrak{r}-\mathfrak{l}-1)$ steps.
Therefore, $(P, Q)$ can be constructed by $m$-variable M-QSP in $(n-2)$ steps.
This completes the proof.
\qed

\subsection{Proof of Theorem~\ref{thm:m-qsp}} \label{appendix:thm-m-qsp}

Next, we prove Theorem~\ref{thm:m-qsp},
which is the main theorem in this paper.
This theorem states that the function $\textsc{M-QSP-CDA}$ defined in Algorithm~\ref{alg1} provides the necessary and sufficient condition 
for the M-QSP constructivity.
\vspace{1em}
\newline
\textit{Proof of Theorem~\ref{thm:m-qsp}}
\hspace{1em}
We prove by induction on $n \in \mathbb{Z}_{\ge 0}$.
\newline
(I) \ Suppose that $n=0$.
By the definition of the 0-step $m$-variable M-QSP, we have
\begin{equation}
    \begin{split}
        &\quad (P, Q)\text{ can be constructed by} \\
        &\quad m\text{-variable M-QSP in }0\text{ step} \\
        &\Leftrightarrow \exists \phi_0 \in \mathbb{R} \ s.t. \ 
        \begin{pmatrix} 
            P(\mathbf{a}) &Q(\mathbf{a}) \\ 
            -(Q(\mathbf{a}))^* &(P(\mathbf{a}))^* 
        \end{pmatrix} 
        = e^{i \phi_0 \sigma_z} \\
        &\Leftrightarrow P \in \mathbb{T} \ \text{and} \ Q=0 \\
        &\Leftrightarrow \textsc{M-QSP-CDA}(P,Q,0)=\text{True},
    \end{split}
\end{equation}
where $P, Q \in \mathbb{C}[a_1, a_1^{-1}, \dots, a_m, a_m^{-1}]$ are Laurent polynomials.
Thus, Theorem~\ref{thm:m-qsp} holds for $n=0$.
\newline
(II) \ Assuming that Theorem~\ref{thm:m-qsp} holds for less than or equal to $n-1(\ge 0)$, we consider the case for $n$.
\newline
$(\Rightarrow)$
First, assume that $(P, Q)$ can be constructed by $m$-variable M-QSP in $n$ steps. 
Then, there exists $(\phi_0, \phi_1, \dots, \phi_n) \in \mathbb{R}^{n+1}$ and $(s_1,\dots, s_n) \in \{1, \dots, m\}^n$ such that
\begin{equation} \label{eq:m-qsp-setting-n}
    \begin{pmatrix} 
        P(\mathbf{a}) &Q(\mathbf{a}) \\ 
        -(Q(\mathbf{a}))^* &(P(\mathbf{a}))^* 
    \end{pmatrix}
    =e^{i \phi_0 \sigma_z} \prod_{k=1}^n A(a_{s_k}) e^{i \phi_k \sigma_z}.
\end{equation}
For $j \in \{1,\dots,m\}$, let $d_{j} = \mathrm{deg}_{a_j} P$.
From Eq.~\eqref{eq:m-qsp-setting-n}, we have $d_{1}+\dots+d_{m} \le n$. 
Next, we show $\textsc{M-QSP-CDA}(P,Q,n)=\text{True}$, considering the cases $d_{1} + \dots + d_{m} < n$ and $d_{1} + \dots + d_{m} = n$ separately.

\noindent
\underline{Case 1} \hspace{1em} Suppose that $d_{1} + \dots + d_{m} < n$.
From Lemma~\ref{lem:deg-parity}, 
which describes the parity constraint between the degree and the number of steps, 
we have $d_{1}+\dots+d_{m} \equiv n \pmod{2}$.
We obtain
\begin{equation} \label{eq:deg-sum-n-2}
    d_{1} + \dots + d_{m} \le n-2.
\end{equation}
This implies that the statement in line 13 of Algorithm~\ref{alg1} holds True. 
Then, from line 14 of Algorithm~\ref{alg1}, we have
\begin{equation} \label{eq:M-QSP-CDA-n=(n-2)}
    \textsc{M-QSP-CDA}(P,Q,n)=\textsc{M-QSP-CDA}(P,Q,n-2).
\end{equation}

Next, we show $\textsc{M-QSP-CDA}(P,Q,n-2)=\text{True}$. 
By the definition of $d_{1}, \dots, d_{m} \ge 0$,
we observe $d_{1} + \dots + d_{m} \ge 0$. 
From Eq.~\eqref{eq:deg-sum-n-2}, we have $n-2 \ge 0$, i.e., $n \ge 2$. 
Thus, we can apply Lemma~\ref{lem:m-n-2} to $(P,Q)$,
which means $(P, Q)$ can be constructed by $m$-variable M-QSP in $(n-2)$ steps. 
By the induction hypothesis, we have
\begin{equation}
    \textsc{M-QSP-CDA}(P,Q,n-2)=\text{True}.
\end{equation}
Since Eq.~\eqref{eq:M-QSP-CDA-n=(n-2)} holds,
we obtain
\begin{equation}
    \textsc{M-QSP-CDA}(P,Q,n)=\text{True}.
\end{equation}

\noindent
\underline{Case 2} \hspace{1em} Assume that $d_{1} + \dots + d_{m} = n$.
Now we show that the statement in line 17 of Algorithm~\ref{alg1} holds True 
for $j = s_n$, namely, 
we prove
\begin{equation} \label{eq:deg-s_n-P-equal-Q}
    P_{a_{s_n}^{d_{s_n}}}=e^{2i\phi_n} Q_{a_{s_n}^{d_{s_n}}},
\end{equation}
where $\phi_n \in \mathbb{R}$,
and, 
$P_{a_{s_n}^{d_{s_n}}}, Q_{a_{s_n}^{d_{s_n}}}$ denote the coefficients of $a_{s_n}^{d_{s_n}}$ in $P$ and $Q$ respectively.
We define
\begin{equation} \label{eq:P-(l-1)-def}
    \begin{split}
        &
        \begin{pmatrix} 
            P^{(n-1)}(\mathbf{a}) &Q^{(n-1)}(\mathbf{a}) \\ 
            -(Q^{(n-1)}(\mathbf{a}))^* &(P^{(n-1)}(\mathbf{a}))^* 
        \end{pmatrix} \\
        \coloneqq
        &e^{i \phi_0 \sigma_z} \prod_{k=1}^{n-1} A(a_{s_k}) e^{i \phi_k \sigma_z}.
    \end{split}
\end{equation}
Then, we have
\begin{equation} \label{eq:deg-s_n-P^(n-1)-Q^(n-1)}
    \mathrm{deg}_{a_{s_n}} P^{(n-1)} = \mathrm{deg}_{a_{s_n}} Q^{(n-1)} = d_{s_n}-1.
\end{equation}
In fact, 
by the fact that $(P,Q)$ can be constructed by $m$-variable M-QSP in $n$ steps and the assumption $d_{1} + \dots + d_{m} = n$, we can apply Lemma~\ref{lem:deg-sum-equal-step}, 
which describes the degrees if $\mathrm{deg} \, P$ equals the number of steps,
to $(P,Q)$.
Then, we obtain
\begin{align}
    \mathrm{deg}_{a_{s_n}} P^{(n-1)} &= 
    \lvert \{ k \in \{1, \dots, n-1 \} \mid s_k = s_n \} \rvert, \\
    \mathrm{deg}_{a_{s_n}} P &= 
    \lvert \{ k \in \{1, \dots, n \} \mid s_k = s_n \} \rvert.
\end{align}
Thus,
we have
\begin{equation}
    \mathrm{deg}_{a_{s_n}} P^{(n-1)} 
    = \mathrm{deg}_{a_{s_n}} P - 1
    = d_{s_n}-1.
\end{equation}
By Eq.~\eqref{eq:P-(l-1)-def}, $(P^{(n-1)}, Q^{(n-1)})$ can be constructed by $m$-variable M-QSP
in $n-1 \ge 0$ steps.
Then, we can apply Lemma~\ref{lem:relation-deg-P-Q},
which describes the degree relation between the pair,
to $(P^{(n-1)}, Q^{(n-1)})$.
We obtain Eq.~\eqref{eq:deg-s_n-P^(n-1)-Q^(n-1)}. 
From Eq.~\eqref{eq:deg-s_n-P^(n-1)-Q^(n-1)}, we have
\begin{equation}
    \begin{dcases}
        P_{a_{s_n}^{d_{s_n}}} 
        = \frac{e^{i \phi_{n}}}{2}  
        \left( 
        P^{(n-1)}_{d_{s_n}-1} 
        + Q^{(n-1)}_{d_{s_n}-1}
        \right), \\
        Q_{a_{s_n}^{d_{s_n}}}
        = \frac{e^{-i \phi_{n}}}{2}  
        \left( 
        P^{(n-1)}_{d_{s_n}-1} 
        + Q^{(n-1)}_{d_{s_n}-1} 
        \right),
    \end{dcases}
\end{equation}
where $P^{(n-1)}_{d_{s_n}-1}$ represents the coefficient of $P^{(n-1)}$ for $a_{s_n}^{d_{s_n}-1}$, and
$Q^{(n-1)}_{d_{s_n}-1}$ represents the coefficient of $Q^{(n-1)}$ for $a_{s_n}^{d_{s_n}-1}$.
Thus, we obtain Eq.~\eqref{eq:deg-s_n-P-equal-Q}. 
Consequently, the statement in line 17 of Algorithm~\ref{alg1} holds True at least for $j=s_n$.

According to Algorithm~\ref{alg1}, the index $j$ in line from 18 to 20 of Algorithm~\ref{alg1} is the smallest $j \in \{1, \dots, m\}$ for which $P_{a_j^{d_j}}=e^{2i\varphi_j} Q_{a_j^{d_j}}$ holds.
We denote this $j$ by $s^\prime$. 
Furthermore, the $(P_{s^\prime}, Q_{s^\prime})$ defined in lines 18 and 19 of Algorithm~\ref{alg1} is given by
\begin{equation} \label{eq:definition-P_{s^prime}-Q_{s^prime}}
    \begin{dcases}
        P_{s^\prime} 
        = e^{-i\varphi_{s^\prime}}\frac{a_{s^\prime}+a_{s^\prime}^{-1}}{2} P - e^{i\varphi_{s^\prime}}\frac{a_{s^\prime}-a_{s^\prime}^{-1}}{2} Q, \\
        Q_{s^\prime}
        = e^{i\varphi_{s^\prime}}\frac{a_{s^\prime}+a_{s^\prime}^{-1}}{2} Q - e^{-i\varphi_{s^\prime}}\frac{a_{s^\prime}-a_{s^\prime}^{-1}}{2} P,
    \end{dcases}
\end{equation}
and, from line 20 of Algorithm~\ref{alg1}, $(P_{s^\prime}, Q_{s^\prime})$ satisfies
\begin{equation} \label{eq:decide-reduction}
    \textsc{M-QSP-CDA}(P,Q,n)=\textsc{M-QSP-CDA}(P_{s^\prime},Q_{s^\prime},n-1).
\end{equation}
Next, we show that 
\begin{equation} \label{eq:M-QSP-CDA-s'-True}
    \textsc{M-QSP-CDA}(P_{s^\prime},Q_{s^\prime},n-1) = \text{True}
\end{equation}
holds for this $(P_{s^\prime}, Q_{s^\prime})$.
$(P_{s^\prime}, Q_{s^\prime})$ satisfies
\begin{align}
    &
    \begin{pmatrix} 
        P_{s^\prime}(\mathbf{a}) &Q_{s^\prime}(\mathbf{a}) \\ 
        -(Q_{s^\prime}(\mathbf{a}))^* &(P_{s^\prime}(\mathbf{a}))^* 
    \end{pmatrix} \notag \\
    =&
    \begin{pmatrix} 
        P(\mathbf{a}) &Q(\mathbf{a}) \\ 
        -(Q(\mathbf{a}))^* &(P(\mathbf{a}))^* 
    \end{pmatrix}
    \left( A(a_{s^\prime}) e^{i\varphi_{s^\prime} \sigma_z} \right)^{-1} \\
    =&
    \begin{pmatrix} 
        P(\mathbf{a}) &Q(\mathbf{a}) \\ 
        -(Q(\mathbf{a}))^* &(P(\mathbf{a}))^* 
    \end{pmatrix}
    e^{-i\varphi_{s^\prime} \sigma_z} A(a_{s^\prime})^{-1} \\
    =&
    \begin{pmatrix} 
        P(\mathbf{a}) &Q(\mathbf{a}) \\ 
        -(Q(\mathbf{a}))^* &(P(\mathbf{a}))^* 
    \end{pmatrix}
    e^{i(-\varphi_{s^\prime}+\pi/2) \sigma_z} A(a_{s^\prime}) e^{-i(\pi/2) \sigma_z}.
\end{align}
Thus, $(P_{s^\prime}, Q_{s^\prime})$ can be constructed by $m$-variable M-QSP in $(n+1)$ steps. 
By Lemma~\ref{lem:deg-parity}, 
which describes the parity constraint between the degree and the number of steps,
we have
\begin{equation} \label{eq:P_{s^prime}-mod}
    \mathrm{deg} \, P_{s^\prime} \equiv n+1 \pmod{2}.
\end{equation}
From the definition of ${s^\prime}$, we obtain $P_{a_{s^\prime}^{d_{s^\prime}}}=e^{2i\varphi_{s^\prime}} Q_{a_{s^\prime}^{d_{s^\prime}}}$. 
Eq.~\eqref{eq:definition-P_{s^prime}-Q_{s^prime}} ensures that the coefficient of $P_{s^\prime}$ for $a_{s^\prime}^{d_{s^\prime}+1}$ is $0$. 
Furthermore, since we have $P(\mathbf{a}^{-1}) = P(\mathbf{a})$ by Lemma~\ref{lem:inv-sign-change}, 
the coefficient of $P_{s^\prime}$ for $a_{s^\prime}^{-(d_{s^\prime}+1)}$ is also $0$. 
Hence, we obtain $\mathrm{deg}_{a_{s^\prime}} P_{s^\prime} < d_{s^\prime}+1$. 
Moreover,
by Lemma~\ref{lem:deg-reduction}, 
which describes how the degrees change when adding one step,
we have $\mathrm{deg}_{a_j} P_{s^\prime} = d_j$ for $j \in \{1, \dots, m\} \setminus \{s^\prime\}$. 
From the assumption $d_{1} + \dots + d_{m}=n$, 
we see that $\mathrm{deg} \, P_{s^\prime} < n+1$.
From Eq.~\eqref{eq:P_{s^prime}-mod}, we have $\mathrm{deg} \, P_{s^\prime} \le n-1$.
Since $n+1 \ge 2$ holds by the assumption $n-1 \ge 0$, Lemma~\ref{lem:m-n-2}, 
which can reduce the number of steps by two under the degree condition,
can be applied to $(P_{s^\prime}, Q_{s^\prime})$.
Then, $(P_{s^\prime}, Q_{s^\prime})$ can be constructed by $m$-variable M-QSP in $(n-1)$ steps. 
Then, by the induction hypothesis, we have $\textsc{M-QSP-CDA}(P_{s^\prime}, Q_{s^\prime}, n-1)=\text{True}$. 
Therefore, from Eq.~\eqref{eq:decide-reduction}, we obtain $\textsc{M-QSP-CDA}(P,Q,n)=\text{True}$.

From the discussions in Case 1 and Case 2 above, we have $\textsc{M-QSP-CDA}(P,Q,n)=\text{True}$.

\noindent
$(\Leftarrow)$ 
Next, we assume that $(P,Q)$ satisfies $\textsc{M-QSP-CDA}(P,Q,n)=\text{True}$. 
From the assumption $n-1 \ge 0$, $(P,Q)$ satisfies
\begin{equation}
    \textsc{M-QSP-CDA}(P,Q,n)=
    \begin{dcases}
        \textsc{M-QSP-CDA}(P,Q,n-2) \\
        \mathrm{or} \\
        \textsc{M-QSP-CDA}(P_j, Q_j, n-1) \\
        (\text{for some } j \in \{1, \dots, m\}),
    \end{dcases}
\end{equation}
where $P_j, Q_j$ are defined in lines 18 and 19 of Algorithm~\ref{alg1}.
Then, we show that $(P, Q)$ can be constructed by $m$-variable M-QSP in $n$ steps, considering each case separately.

\noindent
\underline{Case 1} \hspace{1em} Assume that $\textsc{M-QSP-CDA}(P,Q,n)=\textsc{M-QSP-CDA}(P,Q,n-2)$.
Then, we have $d_{1} + \dots + d_{m} \le n-2$. 
Furthermore, 
from the definition of $d_{1}, \dots, d_{m}$,
$d_{1},\dots, d_{m} \ge 0$ holds,
which implies $d_{1} + \dots + d_{m} \ge 0$ and $n-2 \ge 0$, i.e., $n \ge 2$. 
By the induction hypothesis and $\textsc{M-QSP-CDA}(P,Q,n-2)=\text{True}$, 
$(P, Q)$ can be constructed by $m$-variable M-QSP in $(n-2)$ steps. 
From $n \ge 2$ and Lemma~\ref{lem:m-n-2},
which can increase the number of steps by two,
$(P, Q)$ can be implemented by $m$-variable M-QSP in $n$ steps.

\noindent
\underline{Case 2} \hspace{1em} 
Suppose that $\textsc{M-QSP-CDA}(P,Q,n)=\textsc{M-QSP-CDA}(P_j, Q_j, n-1)$ for some $j \in \{1, \dots, m\}$.
Then, we have $\textsc{M-QSP-CDA}(P_j, Q_j, n-1) = \text{True}$. 
From the assumption $n-1 \ge 0$ and the induction hypothesis, 
$(P_j, Q_j)$ can be constructed by $m$-variable M-QSP in $(n-1)$ steps. 
Furthermore, $(P,Q)$ and $(P_j, Q_j)$ satisfy
\begin{align}
    &
    \begin{pmatrix} 
        P(\mathbf{a}) &Q(\mathbf{a}) \\ 
        -(Q(\mathbf{a}))^* &(P(\mathbf{a}))^* 
    \end{pmatrix}  \notag \\
    =&
    \begin{aligned}
        &\begin{pmatrix} 
        P_j(\mathbf{a}) &Q_j(\mathbf{a}) \\
        -(Q_j(\mathbf{a}))^* &(P_j(\mathbf{a}))^* 
        \end{pmatrix}
        A(a_j) e^{i \varphi_j \sigma_z}.
    \end{aligned}
\end{align}
Therefore, $(P, Q)$ can be constructed by $m$-variable M-QSP in $n$ steps.

From the discussions in Cases 1 and 2 above, we obtain the fact that $(P, Q)$ can be constructed by $m$-variable M-QSP in $n$ steps.
This completes the proof.
\qed

\end{document}